\newtheorem{theorem}{Theorem}
\newtheorem{lemma}{Lemma}
\newtheorem{definition}{Definition}
\newtheorem{corollary}{Corollary}
\newtheorem{remark}{Remark}
\newcommand{\nd}{{\rm and }}
\newcommand{\mR}{\mathbb{R}}
\newcommand{\mC}{\mathbb{C}}
\newcommand{\mN}{\mathbb{N}}
\newcommand{\mE}{\mathbb{E}}
\newcommand{\mS}{\mathbb{S}}
\newcommand{\mB}{\mathbb{B}}
\newcommand{\cH}{\mathcal{H}}
\newcommand{\cF}{\mathcal{F}}
\newcommand{\cP}{\mathcal{P}}
\newcommand{\cS}{\mathcal{S}}
\newcommand{\ux}{\underline{x}}
\newcommand{\uxb}{\underline{x} \grave{}}
\newcommand{\uyb}{\underline{y} \grave{}}
\newcommand{\uy}{\underline{y}}
\begin{document}
\title{Orthosymplectically invariant functions in superspace}

\author{K.\ Coulembier\thanks{Ph.D. Fellow of the Research Foundation - Flanders (FWO), E-mail: {\tt Coulembier@cage.ugent.be}}, H.\ De Bie\thanks{Postdoctoral Fellow of the Research Foundation - Flanders (FWO), E-mail: {\tt Hendrik.DeBie@UGent.be}}, F.\ Sommen\thanks{E-mail:{\tt fs@cage.ugent.be}}}

\date{\small{Department of Mathematical Analysis}\\
\small{Faculty of Engineering -- Ghent University\\ Krijgslaan 281, 9000 Gent,
Belgium}}

\maketitle

\begin{abstract}
The notion of spherically symmetric superfunctions as functions invariant under the orthosymplectic group is introduced. This leads to dimensional reduction theorems for differentiation and integration in superspace. These spherically symmetric functions can be used to solve orthosymplectically invariant Schr\"odinger equations in superspace, such as the (an)harmonic oscillator or the Kepler problem. Finally the obtained machinery is used to prove the Funk-Hecke theorem and Bochner's relations in superspace.

\end{abstract}
\noindent
\textbf{MSC 2000 :}   58C50, 81Q60, 42B10 \\
\textbf{Keywords :}   orthosymplectic invariance, Schr\"odinger equation, Funk-Hecke theorem, super harmonic analysis, Bochner relations

\newpage

\section{Introduction}

In recent work, we have been developing a new approach to the study of superspace, namely by means of harmonic and Clifford analysis (see e.g. \cite{CDBS1, DBS9, DBE1, DBS5}). We consider a superspace $\mR^{m|2n}$ generated by $m$ commuting or bosonic variables and $2n$ anti-commuting or fermionic variables (\cite{MR732126}). The main feature of this approach is the introduction of an orthosymplectic super Laplace operator $\nabla^2$ and a generalized norm squared $R^2$. They have the property $\nabla^2 (R^2)=2(m-2n)=2M$ with $M$ the so-called super-dimension. As we will see this parameter characterizes several global features of the superspace $\mR^{m|2n}$, see also \cite{DBE1, DBS3, DBS5, MR2492638}. In \cite{CDBS1, DBE1, DBS5} integration over the supersphere (algebraically defined by $R^2=1$) was introduced providing a new and powerful tool in the study of super analysis. 

The aim of this paper is to define, characterize and apply spherically symmetric or orthosymplectically invariant superfunctions. They are defined as functions of the generalized norm or equivalently as functions invariant under the action of the super Lie algebra $\mathfrak{osp}(m|2n)$. We show that these functions have the expected behavior with respect to the differential operators and integration over the supersphere. In particular we obtain dimensional reduction theorems for Berezin integrals (extending results from \cite{MR2492638}) and for differentiation. The fundamental solution for the super Laplace operator with respect to Berezin integration was calculated in \cite{DBS6}. We show that the fundamental solution is an example of a spherically symmetric function. Moreover, this leads to a mean value theorem for super harmonic functions.

Schr\"odinger equations in superspace were considered first as a method to incorporate spin (\cite{MR1019514}). The quantum (an-)harmonic oscillator (\cite{DBS3, MR1019514, MR967935}), the Kepler problem (\cite{MR2395482}), the delta potential (\cite{DBS8}) and the Calogero-Moser-Sutherland-model (\cite{MR2025382}) have already been studied in superspace. We show how general orthosymplectically invariant Schr\"odinger equations (such as the various oscillators and the Kepler problem) can be solved using spherically symmetric functions and spherical harmonics in superspace.

In \cite{DBS5} a super Funk-Hecke formula for polynomials was constructed. Using the supersphere integration from \cite{CDBS1} we can extend this Funk-Hecke theorem to general super zonal functions. This construction leads to Bochner's periodicity relations for the Fourier transform of spherically symmetric functions weighted with spherical harmonics. This, in turn, is equivalent with a super Mehler formula, which is related to the Mehler formula obtained in \cite{CDBS2} via the Hille-Hardy identity, see \cite{HH}.

The paper is organized as follows. After a brief review of the necessary operators on the orthosymplectic Riemannian superspace we define spherically symmetric superfunctions. Then their most important properties with respect to integration and derivation are proven. Next, spherically symmetric functions are used to prove a mean value theorem for harmonic functions and to solve a certain class of Schr\"odinger equations. Finally, the Funk-Hecke theorem for super zonal functions is proven, leading to the Bochner's relation for the super Fourier transform.

\section{Preliminaries}
\label{preliminaries}
Superspaces are spaces where one considers not only commuting (bosonic) but also anti-commuting (fermionic) co-ordinates (see a.o. \cite{MR732126}). The $2n$ anti-commuting variables ${x\grave{}}_i$ generate the complex Grassmann algebra $\Lambda_{2n}$. An arbitrary element $f \in \Lambda_{2n}$ can always be written as $f = \sum_A f_A {x \grave{}}_A$ with ${x \grave{}}_A = {x \grave{}}_1^{\, \alpha_1} \ldots {x \grave{}}_{2n}^{\, \alpha_{2n}}$, $A = (\alpha_{1}, \ldots, \alpha_{2n}) \in \{0,1\}^{2n}$ and $f_A \in \mC$. The dimension of $\Lambda_{2n}$ as a $\mC$-vectorspace is hence $2^{2n}$. We consider a space with $m$ bosonic variables $x_i$. The supervector $\bold{x}$ is defined as
\[
\bold{x}=(X_1,\cdots,X_{m+2n})=(\ux,\uxb)=(x_1,\cdots,x_m,{x\grave{}}_1,\cdots,{x\grave{}}_{2n}).
\]
The corresponding superspace is denoted $\mR^{m|2n}$ and we will always assume $m\not=0$. We consider a Riemannian superspace with orthosymplectic metric. In that case, the inner product of two supervectors $\bold{x}$ and $\bold{y}$ is given by
\begin{eqnarray}
\label{inprod}
\langle \bold{x},\bold{y}\rangle=\langle \ux,\uy\rangle+\langle\uxb,\uyb\rangle&=&\sum_{i=1}^mx_iy_i-\frac{1}{2}\sum_{j=1}^n({x\grave{}}_{2j-1}{y\grave{}}_{2j}-{x\grave{}}_{2j}{y\grave{}}_{2j-1}).
\end{eqnarray}
The commutation relations for two supervectors are determined by $X_iY_j=(-1)^{[i][j]}Y_jX_i$ with $[i]=0$ if $i\le m$ and $[i]=1$ otherwise. This implies that the ${x\grave{}}_{i}$ and ${y\grave{}}_{i}$ together generate the Grassmann algebra $\Lambda_{4n}$. This also means that the inner product \eqref{inprod} is symmetric, i.e. $\langle \bold{x},\bold{y}\rangle=\langle\bold{y},\bold{x}\rangle$. The inner product can be written as $\langle \bold{x},\bold{y}\rangle = \sum_{ij}X_ig^{ij}Y_j$ with metric $g$ defined as
\begin{equation*}
\begin{cases} 
g^{ii}=1 & 1\le i\le m,\\
g^{2i-1+m,2i+m}=-1/2 & 1\le i\le n ,\\  
g^{2i+m,2i-1+m}=1/2 & 1\le i\le n,\\
g^{ij}=0 & otherwise. 
\end{cases}
\end{equation*}

We define $X^j=\sum_iX_ig^{ij}$ so for the commuting variables $x^j=x_j$ and for the anticommuting variables ${x\grave{}}^{2j-1}=\frac{1}{2}{x\grave{}}_{2j}$ and ${x\grave{}}^{2j}=-\frac{1}{2}{x\grave{}}_{2j-1}$. Considering the symmetry of the metric, $g^{ij}=(-1)^{[i]}g^{ji}$, we subsequently obtain 
\begin{equation}
\label{inproduit}
\langle \bold{x},\bold{y}\rangle=\sum_jX^jY_j=\sum_{i,j}X_ig^{ji}(-1)^{[i]}Y_j=\sum_i(-1)^{[i]}X_iY^i.
\end{equation}
The generalized norm squared is given by
\begin{equation*}
R^2=\langle \bold{x},\bold{x}\rangle=\sum_{j=1}^{m+2n}X^jX_j =\sum_{i=1}^mx_i^2-\sum_{j=1}^n{x\grave{}}_{2j-1}{x\grave{}}_{2j}=r^2-\uxb^2.
\end{equation*}
This  $R^2$ is used in the study of certain quantum hamiltonians in superspace (see \cite{DBS8, DBS3, MR967935, MR2395482}). In previous papers (e.g. \cite{CDBS1, CDBS2, DBE1, DBS3, DBS5}) we used the notation $\bold{x}^2=-R^2$. The fermionic norm squared satisfies
\begin{eqnarray*}
\uxb^{2n}=n!{x\grave{}}_{1}{x\grave{}}_{2}\cdots {x\grave{}}_{2n},
\end{eqnarray*}
which is the element of maximal degree $2n$ in $\Lambda_{2n}$.

The fermionic partial derivatives $\partial_{{x\grave{}}_{j}}$ commute with the bosonic variables and satisfy the Leibniz rule $\partial_{{x\grave{}}_{j}}{x\grave{}}_{k}=\delta_{jk}-{x\grave{}}_{k}\partial_{{x\grave{}}_{j}}$. The super gradient is defined as
\begin{eqnarray*}
\nabla&=&(\partial_{X^1},\cdots,\partial_{X^{m+2n}}).
\end{eqnarray*}
Using $\partial_{{x\grave{}}^{2j-1}}=2\partial_{{x\grave{}}_{2j}}$ we find 
\begin{eqnarray*}
\nabla&=&(\partial_{x_1},\cdots,\partial_{x_m},2\partial_{{x\grave{}}_{2}},-2\partial_{{x\grave{}}_{1}},\cdots,2\partial_{{x\grave{}}_{2n}},-2\partial_{{x\grave{}}_{2n-1}}),
\end{eqnarray*}
and $\nabla^j=(-1)^{[j]}\partial_{X_j}$. The super Laplace operator is given by
\begin{equation*}
\nabla^2=\langle \nabla,\nabla \rangle=\sum_{k=1}^{m+2n}\nabla^k\nabla_k=\sum_{k=1}^{m+2n}(-1)^{[k]}\partial_{X_k}\partial_{X^k} =\sum_{i=1}^m\partial_{x_i}^2-4\sum_{j=1}^n\partial_{{x\grave{}}_{2j-1}}\partial_{{x\grave{}}_{2j}}=\nabla^2_b+\nabla^2_f.
\end{equation*}

The super Euler operator is defined as
\begin{equation}
\label{Euler}
\mE=\mE_b+\mE_f=\langle \bold{x},\nabla\rangle=\sum_{k=1}^{m+2n}X^k\nabla_{k}=\sum_{k=1}^{m+2n}X_k\partial_{X_k}=\sum_{i=1}^mx_i\partial_{x_i}+\sum_{j=1}^{2n}{x\grave{}}_j\partial_{{x\grave{}}_j}.
\end{equation}

The operators $i\nabla^2/2$, $iR^2/2$ and $\mE+M/2$, generate the $\mathfrak{sl}_2$ Lie-algebra, see \cite{DBE1, DBS5}. This is a consequence of the commutators
\begin{eqnarray*}
\left[\nabla^2/2,R^2/2\right]&=&\mE+M/2\\
\left[\nabla^2/2,\mE+M/2\right]&=&2\nabla^2/2\\
\left[R^2/2,\mE+M/2\right]&=&-2R^2/2.
\end{eqnarray*} 
The following calculation also extends the bosonic case:
\begin{eqnarray}
\label{inprodxnabla}
\langle \nabla,\bold{x}\rangle&=&M+\mE.
\end{eqnarray}

The Laplace-Beltrami operator is defined as
\begin{eqnarray}
\label{LB}
\Delta_{LB}&=&R^2\nabla^2-\mE(M-2+\mE).
\end{eqnarray}
The orthosymplectic Lie superalgebra $\mathfrak{osp}(m|2n)$ is generated by the following differential operators in superspace (see \cite{MR2395482})
\begin{eqnarray}
\label{ospgen}
L_{ij}&=&X_i\partial_{X^j}-(-1)^{[i][j]}X_j\partial_{X^i}
\end{eqnarray}
for $1\le i \le j \le m+2n$. Note that in the orthosymplectic case $L_{jj}$ is not necessarily zero, since e.g. $L_{m+2i,m+2i}=-4{x\grave{}}_{2i}\partial_{{x\grave{}}_{2i-1}}$. A calculation shows that the Laplace-Beltrami operator can be expressed as
\begin{eqnarray*}
\Delta_{LB}&=&-\frac{1}{2}\sum_{i,j,k,l=1}^{m+2n}L_{ij}g^{ik}g^{jl}L_{kl},
\end{eqnarray*} 
see \cite{MR0546778}. Since the Laplace operator is orthosymplectically invariant and the Euler operator commutes with the $L_{ij}$, the Laplace-Beltrami operator also commutes with all the $L_{ij}$. So we find that the Laplace-Beltrami operator is a Casimir operator of $\mathfrak{osp}(m|2n)$ of degree $2$, see \cite{MR1773773, MR0546778}.

The space of superpolynomials is given by $\cP=\mR[x_1,\cdots,x_m]\otimes \Lambda_{2n}$. More general superfunctions can for instance be defined as functions with values in the Grassmann algebra, $f:\Omega\subset\mR^m\to\Lambda_{2n}$. They can always be expanded as 
\begin{eqnarray}
f&=&\sum_A {x\grave{}}_A f_A,
\label{superftie}
\end{eqnarray}
with ${x\grave{}}_A$ the basis of monomials for the Grassmann algebra and $f_A:\Omega\subset\mR^m\to\mC$. The purely bosonic part $f_0$, with ${x\grave{}}_0=1$ is usually called the body of the superfunction. In general, for a function space $\cF$ corresponding to the $m$ bosonic variables (e.g. $\cS(\mR^{m})$, $L_p(\mR^m)$, $C^k(\Omega)$) we use the notation $\cF_{m|2n}=\cF\otimes\Lambda_{2n}$.

The null-solutions of the super Laplace operator are called harmonic superfunctions. In particular we are interested in harmonic superpolynomials.
\begin{definition}
An element $F \in \cP$ is a spherical harmonic of degree $k$ if it satisfies
\begin{eqnarray*}
\nabla^2 F &=&0\\
\mE F &=& kF, \quad \mbox{i.e. $F \in \cP_k$}.
\end{eqnarray*}
Moreover the space of all spherical harmonics of degree $k$ is denoted by $\cH_k$.
\end{definition}

In the purely bosonic case we denote this space by $\cH_k^b$. Since we will use this space for different dimensions, we include the number of bosonic variables $d$ in the notation, $\cH_{k,d}^b$. Formula \eqref{LB} implies that spherical harmonics are eigenfunctions of the Laplace-Beltrami operator,
\begin{eqnarray}
\label{LBH}
\Delta_{LB}H_k&=&-k(M-2+k)H_k,
\end{eqnarray}
for $H_k\in\cH_k$. We have the following decomposition (see \cite{DBS5}).

\begin{lemma}[Fischer decomposition]
If $M \not \in -2 \mN$, $\cP$ decomposes as
\begin{equation*}
\cP = \bigoplus_{k=0}^{\infty} \cP_k= \bigoplus_{j=0}^{\infty} \bigoplus_{k=0}^{\infty} R^{2j}\cH_k.
\end{equation*}
\label{scalFischer}
\end{lemma}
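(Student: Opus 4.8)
The first equality $\cP=\bigoplus_k\cP_k$ is merely the grading of $\cP=\mR[x_1,\dots,x_m]\otimes\Lambda_{2n}$ by total degree, i.e. by the eigenvalue of $\mE$, and each $\cP_k$ is finite-dimensional; so the whole content lies in the second equality. Summing over the degree, it suffices to prove for every $N\ge 0$ that
\[
\cP_N=\bigoplus_{\substack{j,k\ge 0\\ 2j+k=N}}R^{2j}\cH_k.
\]
I would obtain this by induction on $N$ from the single step $\cP_N=\cH_N\oplus R^2\cP_{N-2}$, feeding the induction hypothesis $\cP_{N-2}=\bigoplus_j R^{2j}\cH_{N-2-2j}$ into the second summand. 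The engine behind every step is the $\mathfrak{sl}_2$ generated by $\nabla^2/2$, $R^2/2$ and $\mE+M/2$ recorded in the preliminaries.

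The key computation is that on a harmonic $H_k\in\cH_k$ one has
\[
\nabla^2\left(R^{2j}H_k\right)=4\,j\left(k+j-1+\tfrac{M}{2}\right)R^{2(j-1)}H_k,\qquad j\ge 1.
\]
I would prove this purely formally from the three commutators: writing $E=R^2/2$, $F=\nabla^2/2$, $H=\mE+M/2$, a harmonic $H_k$ is a lowest weight vector ($FH_k=0$) of weight $\lambda=k+M/2$, and an easy induction on $j$ using $[F,E]=H$ and $[H,E]=2E$ gives $FE^jH_k=j(\lambda+j-1)E^{j-1}H_k$; translating back through $R^{2j}=2^jE^j$ yields the displayed identity. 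The only place the hypothesis enters is the coefficient: $k+j-1+M/2\ne 0$ for all $k\ge 0$, $j\ge 1$ is exactly the requirement $M\notin-2\mN$.

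Directness of the family $\{R^{2j}\cH_k\}$ then follows at once. Terms with different values of $2j+k$ lie in different $\cP_N$, so it is enough to rule out a relation $\sum_{j=0}^{J}R^{2j}H_{N-2j}=0$ inside a fixed $\cP_N$, where $H_{N-2j}\in\cH_{N-2j}$. Applying $(\nabla^2)^{J}$ annihilates every summand with $j<J$ (the key identity lowers the power of $R^2$ by one each time and $\nabla^2$ kills $\cH$), while the $j=J$ summand becomes a nonzero scalar multiple of $H_{N-2J}$; hence $H_{N-2J}=0$, and one descends on $J$. In particular $\cH_N\cap R^2\cP_{N-2}=0$ and multiplication by $R^{2}$ is injective on each $\cP_{N-2}$.

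For the inductive step I would show $\nabla^2\colon\cP_N\to\cP_{N-2}$ is onto. Using the hypothesis $\cP_{N-2}=\bigoplus_j R^{2j}\cH_{N-2-2j}$, each generator $R^{2j}H$ with $H\in\cH_{N-2-2j}$ equals a nonzero multiple of $\nabla^2(R^{2(j+1)}H)$ by the key identity (the relevant coefficient is again nonzero precisely because $M\notin-2\mN$), and $R^{2(j+1)}H\in\cP_N$; so the image is all of $\cP_{N-2}$. Since $\ker(\nabla^2|_{\cP_N})=\cH_N$ by definition, finite-dimensional rank--nullity gives $\dim\cP_N=\dim\cH_N+\dim\cP_{N-2}$. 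Combined with the directness of $\cH_N\oplus R^2\cP_{N-2}$ and the injectivity of $R^2$ just established, a dimension count forces $\cP_N=\cH_N\oplus R^2\cP_{N-2}$, closing the induction. The main obstacle is genuinely the role of $M$: unlike the classical bosonic case where the dimension $m>0$ makes every coefficient positive, here $M=m-2n$ may be a non-positive even integer, and it is exactly on $M\in-2\mN$ that the coefficient $k+j-1+M/2$ vanishes, the surjectivity of $\nabla^2$ fails, and the decomposition breaks down; so the real work is to carry the $\mathfrak{sl}_2$ bookkeeping carefully enough to isolate this single arithmetic condition.
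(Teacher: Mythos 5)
Your argument is correct, and the lemma is one the paper does not prove at all: it is quoted from the reference [DBS5] (``Spherical harmonics and integration in superspace''), so there is no in-paper proof to compare against. Your route --- the lowest-weight $\mathfrak{sl}_2$ computation $\nabla^2\bigl(R^{2j}H_k\bigr)=4j\bigl(k+j-1+\tfrac{M}{2}\bigr)R^{2j-2}H_k$, directness by repeated application of $\nabla^2$, and the rank--nullity count giving $\cP_N=\cH_N\oplus R^2\cP_{N-2}$ --- is the standard one and is essentially the argument of that reference; note also that your key identity is consistent with equation \eqref{laplradharm} of the present paper specialized to $h(t)=t^j$. Your isolation of the hypothesis $M\notin-2\mN$ as exactly the nonvanishing of the coefficients $k+j-1+M/2$ for $k\ge 0$, $j\ge 1$ is the right point, and it matches the paper's later remark that the decomposition fails precisely when some $R^{2-M}$ (more generally some power of $R^2$) becomes harmonic.
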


This decomposition captures the essence of the Howe dual pair $(\mathfrak{osp}(m|2n),\mathfrak{sl}_2)$ with $\mathfrak{sl}_2$ generated by $i\nabla^2/2$ and $iR^2/2$ and $\mathfrak{osp}(m|2n)$ generated by the $L_{ij}$ in equation \eqref{ospgen}. Each $\bigoplus_{j=0}^{\infty}  R^{2j}\cH_k$ is an irreducible $\mathfrak{sl}_2$-representation, the weight vectors are $R^{2j}\cH_k$. The blocks $R^{2j}\cH_k$ are exactly the irreducible pieces of the representation of $\mathfrak{osp}(m|2n)$ on $\cP$ when $M>0$, see \cite{MR2395482}.

In the purely fermionic case ($m=0$) this decomposition is
\begin{eqnarray}
\label{fermFischer}
\Lambda_{2n}&=&\bigoplus_{j=0}^n\bigoplus_{k=0}^{n-j}\uxb^{2j}\cH_k^f,
\end{eqnarray}
which corresponds to the decomposition of $\Lambda_{2n}$ into irreducible pieces under the action of $\mathfrak{sp}(2n)$.

The integration used on $\Lambda_{2n}$ is the so-called Berezin integral (see \cite{MR732126,DBS5}), defined by
\begin{eqnarray}
\label{Berezin}
\int_{B}& =& \pi^{-n} \partial_{{x \grave{}}_{2n}} \ldots \partial_{{x \grave{}}_{1}} = \frac{ \pi^{-n}}{4^n n!} \nabla_f^{2n}.
\end{eqnarray}
On a general superspace, integration is then defined by
\begin{equation}
\label{superint}
\int_{\mR^{m | 2n}} = \int_{\mR^m} dV(\ux)\int_B=\int_B \int_{\mR^m} dV(\ux),
\end{equation}
with $dV(\ux)$ the usual Lebesgue measure in $\mR^{m}$. Note that we omit the measure in the notation for integration over superspace.

The supersphere is algebraically defined by the relation $R^2=1$, when $m\not=0$. The integration over the supersphere was introduced in \cite{DBS5} for polynomials and generalized to a broader class of functions in \cite{CDBS1}. The integration for polynomials is uniquely defined by the following properties, see \cite{CDBS1, DBE1}.

\begin{theorem}
When $m\not=0$, the only (up to a multiplicative constant) linear functional $T: \cP \rightarrow \mR$ satisfying the following properties for all $f(\bold{x}) \in \cP$:
\begin{itemize}
\item $T(R^2 f(\bold{x})) = T(f(\bold{x}))$
\item $T(f(g \cdot \bold{x})) = T(f(\bold{x}))$, \quad $\forall g \in SO(m)\times Sp(2n)$
\item $k \neq l \quad \Longrightarrow \quad T(\cH_k \cH_l) = 0 = T(\cH_l \cH_k)$
\end{itemize}
is given by the Pizzetti integral
\begin{equation}
\label{Pizzetti}
\int_{SS} f(\bold{x})  =  \sum_{k=0}^{\infty}  \frac{2 \pi^{M/2}}{2^{2k} k!\Gamma(k+M/2)} (\nabla^{2k} f )(0),\qquad f(\bold{x})\in\cP.
\end{equation}
\end{theorem}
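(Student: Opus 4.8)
The plan is to prove the two implications separately: first that the Pizzetti functional \eqref{Pizzetti} does satisfy the three listed properties, and then that these properties determine a linear functional on $\cP$ uniquely up to a scalar, so that it must coincide with \eqref{Pizzetti}. Throughout write $G=SO(m)\times Sp(2n)$.

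For the \emph{uniqueness} part I would start from the Fischer decomposition (Lemma \ref{scalFischer}), which writes every $f\in\cP$ as a finite sum $f=\sum_{j,k}R^{2j}H_{j,k}$ with $H_{j,k}\in\cH_k$. The first property immediately collapses the radial powers: $T(R^{2j}H_{j,k})=T(H_{j,k})$, so $T$ is completely determined by its restriction to $\bigoplus_k\cH_k$. The invariance property then forces each $T|_{\cH_k}$ to be a $G$-invariant functional. By classical invariant theory the $G$-invariant superpolynomials are generated by $r^2$ and $\uxb^2$, so the space of invariant harmonics $\cH_k^G=\cH_k\cap\mR[r^2,\uxb^2]$ is at most one-dimensional and is nonzero only for even $k=2p$ with $0\le p\le n$ (where it is spanned by the harmonic projection of the invariants $r^{2(p-b)}(\uxb^2)^b$). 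Thus $T$ is pinned down by at most the $n+1$ numbers $T(\cH_{2p}^G)$, $p=0,\dots,n$.

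To eliminate all but one of these I would invoke the third property. Since $\cH_0=\mR$, taking one degree equal to $0$ gives $T(\cH_l)=T(\cH_0\cH_l)=0$ for every $l\ge1$, so every invariant harmonic of positive degree is annihilated and only $T(1)$ survives. Consequently $T(f)=T(1)\sum_j H_{j,0}$, where the $H_{j,0}\in\cH_0$ are the constant Fischer components, and comparing with the computation of the existence part below shows $T$ is a scalar multiple of $\int_{SS}$. It is worth noting that the third property alone already forces uniqueness via the case above; the invariance property is what makes the picture transparent, reducing the problem to finitely many parameters before property three is used. For \emph{existence} I would verify the three properties for \eqref{Pizzetti}. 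Invariance is immediate since both $\nabla^2$ and evaluation at the origin are $G$-invariant. For the first property it suffices, after Fischer, to check $\int_{SS}R^{2a}H_k=\int_{SS}H_k$; using the relation $\nabla^2(R^{2a}H_k)=2a(2a+2k+M-2)R^{2a-2}H_k$ one sees that $(\nabla^{2p}(R^{2a}H_k))(0)$ vanishes unless $k=0$, and the surviving constant $\nabla^{2a}R^{2a}=2^{2a}a!\,\Gamma(a+M/2)/\Gamma(M/2)$ cancels the prefactor, yielding $\int_{SS}R^{2a}=\int_{SS}1=2\pi^{M/2}/\Gamma(M/2)$. For the third property I would use that $\int_{SS}$ is invariant and that the Laplace--Beltrami operator \eqref{LB} is the degree-two Casimir of $\mathfrak{osp}(m|2n)$: it is therefore symmetric for $\int_{SS}$, and \eqref{LBH} gives $H_k,H_l$ distinct eigenvalues $-k(M-2+k)\ne-l(M-2+l)$ whenever $k\ne l$, forcing $\int_{SS}H_kH_l=0$.

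The main obstacle I anticipate is the verification of orthogonality for the Pizzetti integral. The eigenvalue argument degenerates exactly when $k+l=2-M$, i.e. for finitely many small or negative integer values of $M$; since both \eqref{Pizzetti} and the form $(H_k,H_l)\mapsto\int_{SS}H_kH_l$ depend rationally on $M$, the identity $\int_{SS}H_kH_l=0$ extends to those values by continuation from generic $M$ (always excluding $M\in-2\mN$, where the Fischer decomposition itself fails). A secondary technical point is the symmetry of $\Delta_{LB}$ with respect to $\int_{SS}$, which I would derive from the established orthosymplectic invariance of the supersphere integral rather than by direct computation.
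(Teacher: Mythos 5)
The paper itself does not prove this theorem --- it is imported from \cite{CDBS1, DBE1} --- so your argument can only be measured against the standard proof in those references, which it essentially reproduces: uniqueness via the Fischer decomposition, with property one collapsing the radial powers and property three applied with $\cH_0=\mR$ killing every $\cH_l$, $l\ge 1$; existence by evaluating Pizzetti on the blocks $R^{2a}\cH_k$. Those parts are correct, and your normalisation $\nabla^{2a}R^{2a}=2^{2a}a!\,\Gamma(a+M/2)/\Gamma(M/2)$ checks out. (Your invariant-theory aside --- that the $SO(m)\times Sp(2n)$-invariants are generated by $r^2$ and $\uxb^2$ --- fails for $m=1$, where $SO(1)$ is trivial, but as you yourself note that paragraph is dispensable.)

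The genuine gap is in the verification of the third property for the Pizzetti functional. The proposed repair of the degenerate eigenvalue case by ``rational continuation in $M$'' is not well-founded: $M=m-2n$ is fixed by the superspace, and $H_k$, $H_l$ and the functional all live on the fixed algebra $\mR[x_1,\dots,x_m]\otimes\Lambda_{2n}$, so there is no family in which to continue. Fortunately the repair is also unnecessary in the range where the rest of your proof works: the degenerate pairs satisfy $k+l=2-M$ with $k\ne l$ and $k,l\ge 0$, which forces $M\le 1$, and for $M\notin-2\mN$ this forces $M$ odd, hence $k+l$ odd --- and the Pizzetti functional annihilates every homogeneous polynomial of odd degree outright, since only $(\nabla^{2k}f)(0)$ with $2k=\deg f$ can survive. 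What remains genuinely uncovered is $M\in-2\mN$, which the theorem as stated does not exclude: there the Fischer decomposition (your uniqueness tool) fails and degenerate pairs with $k,l\ge1$ really occur, so a separate argument is needed, as in \cite{DBE1}. Two smaller points: the symmetry of $\Delta_{LB}$ should be verified for the Pizzetti functional itself --- it follows from $(\nabla^{2k}L_{ij}F)(0)=(L_{ij}\nabla^{2k}F)(0)=0$ together with the super Leibniz rule for the $L_{ij}$, with sign bookkeeping for the odd generators --- since appealing to ``the established orthosymplectic invariance of the supersphere integral'' risks circularity if that invariance is only known for the analytic definition \eqref{integSS}, whose identification with \eqref{Pizzetti} on polynomials is part of what this uniqueness statement underwrites.
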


For a general superfunction which is $n$ times continuously differentiable in an open neighborhood of the unit sphere $\mS^{m-1}$, the integration over the supersphere is defined as (see \cite{CDBS1}, theorem 8)

\begin{eqnarray}
\label{integSS}
\int_{SS}f&=&\sum_{j=0}^n\int_{\mS^{m-1}}d\underline{\xi}\int_B\frac{\uxb^{2j}}{j!}\left[(\frac{\partial}{\partial r^2})^jr^{m-2}f\right]_{r=1},
\end{eqnarray}
using spherical co-ordinates $\ux = r \underline{\xi}$ and with $d \underline{\xi}$ the surface measure on $\mS^{m-1}$. For a polynomial, formulas \eqref{Pizzetti} and \eqref{integSS} coincide. It should be stressed that not all terms in the expansion \eqref{superftie} of a superfunction $f$ need to be $n$ times differentiable. The body has to be $n$ times differentiable but for instance the part $f_{1,\cdots,1}$ only has to be an element of $L_1(\mS^{m-1})$. The area of the supersphere is given by
\begin{eqnarray*}
\sigma_M=\int_{SS}1&=&\frac{2\pi^{M/2}}{\Gamma(M/2)}.
\end{eqnarray*}

The supersphere integration \eqref{Pizzetti} is an immediate generalization of the bosonic Pizzetti formula, by substitution of the dimension. The Berezin integral can be connected with the supersphere integration again by dimensional continuation, see \cite{CDBS1}. This gives a new logical interpretation of the Berezin integral.

\begin{theorem}
\label{connectieBer}
For $M>0$ and $f$ a function in $L_1(\mR^m)_{m|2n}\cap C^n(\mR^m)_{m|2n}$, the following relation holds
\begin{eqnarray*}
\int_{\mR^{m|2n}}f&=&\int_{0}^\infty dv\, v^{M-1}\int_{SS,\bold{x}}f(v\bold{x}).
\end{eqnarray*}
\end{theorem}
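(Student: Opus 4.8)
The plan is to evaluate both members in the component expansion $f=\sum_A {x\grave{}}_A f_A$ and to match them. First I would dispose of the left-hand side: since the Berezin integral \eqref{Berezin} merely extracts the coefficient of the top monomial ${x\grave{}}_1\cdots{x\grave{}}_{2n}$, one has $\int_{\mR^{m|2n}}f=\pi^{-n}\int_{\mR^m}f_{(1,\dots,1)}(\ux)\,dV(\ux)$. All the real work therefore lies in expanding the right-hand side and showing that only the component $f_{(1,\dots,1)}$ ultimately contributes.

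For this I would substitute $f(v\bold{x})$ into the definition \eqref{integSS}, write $f(v\bold{x})=\sum_A v^{|A|}{x\grave{}}_A f_A(v\ux)$ (with $|A|$ the number of factors in ${x\grave{}}_A$), pass to spherical coordinates $\ux=r\underline{\xi}$, and set $s=vr$, so that $\partial/\partial r^2=v^2\,\partial/\partial s^2$ and $r^{m-2}=v^{2-m}s^{m-2}$. A short bookkeeping then gives $[(\partial/\partial r^2)^j(r^{m-2}f_A(vr\underline{\xi}))]_{r=1}=v^{2j+2-m}[(\partial/\partial s^2)^j(s^{m-2}f_A(s\underline{\xi}))]_{s=v}$, and combining the factor $v^{2j+2-m}$ with $v^{|A|}$ and the outer weight $v^{M-1}=v^{m-2n-1}$ will, once the fermionic integration below forces $|A|=2n-2j$, make the total power of $v$ collapse to $v^{1}$. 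Next I would perform the fermionic integrations: since $\uxb^{2j}=j!\sum_{|S|=j}\prod_{l\in S}{x\grave{}}_{2l-1}{x\grave{}}_{2l}$, the product $\uxb^{2j}{x\grave{}}_A$ reaches the top monomial only when $A$ is a union of $n-j$ complete pairs, in which case $\int_B\uxb^{2j}{x\grave{}}_A=\pi^{-n}j!$, cancelling the $1/j!$ of \eqref{integSS}. Commuting $(\partial/\partial s^2)^j$ with the sphere integral and writing $I_A(s)=\int_{\mS^{m-1}}f_A(s\underline{\xi})\,d\underline{\xi}$, the right-hand side reduces to a sum over $j=0,\dots,n$ of terms $\pi^{-n}\int_0^\infty v\,dv\,[(\partial/\partial s^2)^j(s^{m-2}I_A(s))]_{s=v}$, one for each admissible pairing $A$.

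I would then isolate the top term $j=0$ (for which $A=(1,\dots,1)$): here the expression is exactly $\pi^{-n}\int_0^\infty v^{m-1}\,dv\int_{\mS^{m-1}}f_{(1,\dots,1)}(v\underline{\xi})\,d\underline{\xi}$, which by the classical bosonic polar-coordinate formula equals $\pi^{-n}\int_{\mR^m}f_{(1,\dots,1)}$, i.e. the left-hand side. It then remains to show that every term with $j\ge1$ vanishes. Substituting $u=v^2$ and using that $I_A$ is even (by the reflection $\underline{\xi}\mapsto-\underline{\xi}$) and, by $f\in C^n$, a sufficiently smooth function of $s$, one sees that $\Psi(u):=u^{(m-2)/2}I_A(\sqrt u)$ satisfies $v\,dv\,[(\partial/\partial s^2)^j(\cdots)]_{s=v}=\tfrac12\Psi^{(j)}(u)\,du$; the term is therefore the total derivative $\tfrac12[\Psi^{(j-1)}(\infty)-\Psi^{(j-1)}(0)]$.

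The heart of the proof, and its main obstacle, is the vanishing of these two boundary values. The boundary term at $0$ is where the hypothesis $M>0$ enters decisively: since $j\le n$ and $m>2n\ge 2j$, the leading behaviour of $\Psi^{(j-1)}$ near $u=0$ is $u^{(m-2j)/2}\to0$, which simultaneously guarantees integrability of $\Psi^{(j)}$ at the origin and the vanishing $\Psi^{(j-1)}(0)=0$; for $M\le0$ the construction would break down here. The boundary term at $\infty$ vanishes thanks to the decay encoded in $f\in L_1(\mR^m)_{m|2n}\cap C^n(\mR^m)_{m|2n}$. The remaining technical points are routine: the legitimacy of differentiating under the sphere integral and the smoothness of $I_A$ as a function of $s^2$, both consequences of the $C^n$ hypothesis.
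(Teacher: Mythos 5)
The paper itself gives no proof of Theorem \ref{connectieBer}: it is imported from \cite{CDBS1}, so your argument can only be judged on its own terms. Its skeleton is the natural one and nearly every step checks out: the reduction of the left-hand side to $\pi^{-n}\int_{\mR^m}f_{(1,\dots,1)}$; the observation that $\int_B\uxb^{2j}{x\grave{}}_A$ survives only when $A$ is a union of $n-j$ complete pairs, where it equals $\pi^{-n}j!$; the scaling bookkeeping $v^{M-1}\cdot v^{2n-2j}\cdot v^{2j+2-m}=v$; the identification of the $j=0$ term with the classical polar-coordinate formula; and the recognition that each $j\ge1$ term is a total derivative $\tfrac12\bigl[\Psi^{(j-1)}\bigr]_0^\infty$ whose boundary value at $0$ vanishes precisely because $m-2j\ge M>0$ --- which is indeed the correct place where the hypothesis $M>0$ enters. (For the limit at $0$ you do not even need evenness of $I_A$: expanding $(\partial/\partial s^2)^{j-1}\bigl(s^{m-2}I_A(s)\bigr)$ gives a sum of terms $c_k\,s^{m-2j+k}I_A^{(k)}(s)$ with $0\le k\le j-1\le n-1$, each of which tends to $0$ since $m-2j+k\ge M>0$ and $I_A^{(k)}(0)$ is finite.)

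The one step you assert rather than prove is the vanishing of the boundary term at infinity. Membership in $L_1(\mR^m)_{m|2n}\cap C^n(\mR^m)_{m|2n}$ gives $\int_0^\infty s^{m-1}|I_A(s)|\,ds<\infty$, but integrability plus smoothness does not imply pointwise decay of $I_A$, let alone of the combinations $s^{m-2j+k}I_A^{(k)}(s)$ occurring in $\Psi^{(j-1)}$; a $C^n$ integrable function can have derivatives that oscillate with growing amplitude, and then neither the existence of $\lim_{u\to\infty}\Psi^{(j-1)}(u)$ nor its vanishing is automatic. A way to close this: granting convergence of the improper integral $\int_0^\infty\Psi^{(j)}(u)\,du$ (which the statement of the theorem implicitly presupposes for its right-hand side), the limit $L=\lim_{u\to\infty}\Psi^{(j-1)}(u)$ exists; if $L\neq 0$, integrating back $j-1$ times yields $\Psi(u)\sim \frac{L}{(j-1)!}u^{j-1}$, hence $I_A(s)\sim c\,s^{2j-m}$ with $c\neq0$, and then $s^{m-1}|I_A(s)|\sim |c|\,s^{2j-1}$ is not integrable at infinity for $j\ge1$, contradicting the $L_1$ hypothesis. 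That this boundary analysis is not a formality is visible elsewhere in the paper: the proof of Theorem \ref{dimredint} devotes equation \eqref{randinfty} and the whole of Lemma \ref{rand0} to exactly analogous boundary terms. With that one step supplied, your proof is complete.
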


This theorem already reveals that we can expect dimensional reduction for evaluations of Berezin integrals, as e.g. obtained in \cite{MR2492638}.

Integration over the superball is defined as
\begin{eqnarray*}
\int_{SB}f&=&\int_{\mB^m}dV(\ux)\int_Bf+\sum_{j=0}^{n-1}\int_{\mS^{m-1}}d\underline{\xi}\int_B\frac{\uxb^{2j+2}}{(j+1)!}\left[(\frac{\partial}{\partial r}\frac{1}{2r})^jr^{m-1}f\right]_{r=1},
\end{eqnarray*}
where $\mB^m$ is the unit ball in $\mR^m$. Integration over the supersphere is connected with integration over the superball by Green's theorem (see \cite{CDBS1, DBS5}).

\begin{lemma}
\label{Cauchysphere}
For a superfunction $f$ which is $n$ times continuously differentiable in an open neighborhood of the unit sphere $\mS^{m-1}$ and for wich $\nabla f$ exists almost everywhere and is integrable over $\mB^m$, the following holds
\begin{eqnarray*}
\int_{SB}\nabla f&=&\int_{SS}\bold{x}f.
\end{eqnarray*}
\end{lemma}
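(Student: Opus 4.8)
The plan is to read the vector identity $\int_{SB}\nabla f=\int_{SS}\bold{x}f$ componentwise and to treat the $m$ bosonic components (each $\partial_{x_i}$ paired with $x_i$) and the $2n$ fermionic components (each $\pm2\partial_{{x\grave{}}_{k}}$ paired with the corresponding ${x\grave{}}_{l}$) separately. Both $\int_{SB}$ and $\int_{SS}$ factor through the Berezin integral $\int_{B}$, so after writing $f=\sum_A {x\grave{}}_A f_A$ and using linearity, everything reduces to tracking how the radial operators and the Grassmann weights $\uxb^{2j}$ act on the coefficient functions $f_A$ on $\mB^m$ and $\mS^{m-1}$. The guiding principle is that the statement is the superspace avatar of the classical Gauss identity $\int_{\mB^m}\partial_{x_i}g\,dV=\int_{\mS^{m-1}}\xi_i g\,d\underline{\xi}$, and I would aim to extract exactly this classical fact in the purely bosonic directions while the fermionic directions are handled algebraically.

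For a bosonic component $\partial_{x_i}$, the body term $\int_{\mB^m}dV\int_B\partial_{x_i}f$ survives only on the top Grassmann component $f_{1\cdots1}$, since $\int_B$ annihilates everything of fermionic degree $<2n$. On that component I would invoke the classical divergence theorem, turning $\int_{\mB^m}\partial_{x_i}f_{1\cdots1}\,dV$ into $\int_{\mS^{m-1}}\xi_i f_{1\cdots1}\,d\underline{\xi}$; writing $x_i=r\xi_i$ one checks this is precisely the $j=0$ contribution of $\int_{SS}x_i f$ on the top component. The lower Grassmann components of $f$ do not reach the body term and are instead carried by the boundary sum of $\int_{SB}$; these must be matched, term by term in $j$, against the terms with $j\ge1$ of $\int_{SS}x_i f$.

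For a fermionic component, say $2\partial_{{x\grave{}}_{2k}}$ paired with ${x\grave{}}_{2k-1}$, the body term vanishes identically because $\int_B\partial_{{x\grave{}}_{k}}(\,\cdot\,)=0$: the Berezin integral is a total fermionic derivative, so an extra $\partial_{{x\grave{}}_{k}}$ produces a repeated factor and kills it. Hence all the content sits in the boundary sum, and there I would integrate by parts inside $\int_B$, using $\int_B\partial_{{x\grave{}}_{2k}}(\uxb^{2j+2}g)=0$ together with the super-Leibniz computation $\partial_{{x\grave{}}_{2k}}\uxb^{2j+2}=-(j+1){x\grave{}}_{2k-1}\uxb^{2j}$. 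This converts each $\partial_{{x\grave{}}_{2k}}$-term into the corresponding ${x\grave{}}_{2k-1}$-multiplication term of exactly the shape occurring in $\int_{SS}{x\grave{}}_{2k-1}f$, with the factor $(j+1)$ turning $1/(j+1)!$ into $1/j!$ so that the fermionic degrees align with the supersphere expansion.

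The technical device that should close both cases is the observation that the two families of radial operators are conjugate: since $\frac{\partial}{\partial r}\frac{1}{2r}=r\bigl(\frac{\partial}{\partial r^2}\bigr)r^{-1}$, iterating gives $\bigl(\frac{\partial}{\partial r}\frac{1}{2r}\bigr)^{j}=r\bigl(\frac{\partial}{\partial r^2}\bigr)^{j}r^{-1}$, and therefore $\bigl[(\frac{\partial}{\partial r}\frac{1}{2r})^{j}r^{m-1}g\bigr]_{r=1}=\bigl[(\frac{\partial}{\partial r^2})^{j}r^{m-2}g\bigr]_{r=1}$. This is what aligns the radial factors of the boundary terms of $\int_{SB}$ with those of the higher terms of $\int_{SS}$ once $\int_B$ has been carried out. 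I expect the main obstacle to be bookkeeping rather than conceptual: one must verify that the divergence-theorem contribution (top component, $j=0$) and the Leibniz-plus-radial contributions (lower components, $j\ge1$) assemble into exactly $\int_{SS}\bold{x}f$ with the correct numerical constants, and in particular that the powers of $r$ introduced in passing between $x_i=r\xi_i$ and the normal derivative are absorbed correctly by the conjugation identity. A possibly cleaner alternative would be to first prove a superball analogue of Theorem \ref{connectieBer}, namely $\int_{SB}g=\int_0^1 dv\,v^{M-1}\int_{SS}g(v\bold{x})$, and then reduce the statement to a one-dimensional radial identity combined with the Stokes theorem on $\mS^{m-1}$; but the component-wise argument above is the most direct.
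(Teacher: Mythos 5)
The paper itself does not prove this lemma --- it is imported from \cite{CDBS1, DBS5} --- so your argument has to stand on its own. Your architecture (componentwise; classical divergence theorem for the bosonic body term; Berezin integration by parts for the fermionic directions) is reasonable, and the fermionic half is essentially complete: $\int_B\circ\,\partial_{{x\grave{}}_{2k}}=0$ does kill the body term, and the identity $\int_B\uxb^{2j+2}\,\partial_{{x\grave{}}_{2k}}g=(j+1)\int_B{x\grave{}}_{2k-1}\uxb^{2j}g$ together with $(j+1)/(j+1)!=1/j!$ aligns the Grassmann weights with those of $\int_{SS}{x\grave{}}_{2k-1}f$ (and the extra $j=n$ supersphere term dies because $\uxb^{2n}{x\grave{}}_{2k-1}=0$).

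The genuine gap is in the bosonic boundary terms with $j\ge 1$. The conjugation identity $(\frac{\partial}{\partial r}\frac{1}{2r})^{j}=r(\frac{\partial}{\partial r^2})^{j}r^{-1}$ only aligns the two radial operators; it cannot turn the differential operator $\partial_{x_i}$ into multiplication by $x_i$, which is exactly what matching the $j$-th boundary term of $\int_{SB}\partial_{x_i}f$ with the $(j+1)$-st term of $\int_{SS}x_if$ requires. The divergence theorem as you invoke it handles only the body/$j=0$ pair. The missing engine is the divergence theorem on the ball of \emph{every} radius near $1$, differentiated in the radius: for a scalar coefficient function $g$,
\begin{equation*}
\frac{\partial}{\partial r}\Bigl(r^{m-1}\int_{\mS^{m-1}}\xi_i\,g(r\underline{\xi})\,d\underline{\xi}\Bigr)=r^{m-1}\int_{\mS^{m-1}}(\partial_{x_i}g)(r\underline{\xi})\,d\underline{\xi},
\quad\text{i.e.}\quad
\frac{\partial}{\partial r^2}\Bigl(r^{m-2}\int_{\mS^{m-1}}x_ig\,d\underline{\xi}\Bigr)=\frac{r^{m-2}}{2}\int_{\mS^{m-1}}\partial_{x_i}g\,d\underline{\xi}.
\end{equation*}
Applying $(\frac{\partial}{\partial r^2})^{j}$ to this and evaluating at $r=1$ is the step that closes the bosonic case; $\partial_{x_i}=\xi_i\partial_r+r^{-1}(\text{angular})$ has angular components, and only after integrating over $\mS^{m-1}$ do they recombine into the purely radial expression you need. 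Your sketch never produces this identity, and nothing else in it can substitute for it.

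A second, practical warning: when you carry out the bookkeeping you defer, every boundary term will come out off by a factor of $2$ if you use the superball formula exactly as printed in this paper. For instance, with $n=1$ the printed formula gives $\int_{SB}\partial_{x_i}x_k=\delta_{ik}\sigma_m/\pi$, whereas $\int_{SS}x_ix_k=\delta_{ik}\sigma_m/(2\pi)$ (the latter agrees with Pizzetti); the same stray $2$ appears on the fermionic side from the component $2\partial_{{x\grave{}}_{2k}}$. This is a transcription error in the paper's statement of $\int_{SB}$ --- the boundary terms should read $\bigl[(\frac{\partial}{\partial r}\frac{1}{2r})^{j}\frac{r^{m-1}}{2}f\bigr]_{r=1}$, as one checks against $\int_{SB}\nabla^2 R^2=\int_{SS}\mE R^2$ or against $\int_{SB}1=\sigma_M/M$ --- but your proof must use the corrected normalization or it will not close. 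Your proposed alternative route (a superball analogue of Theorem \ref{connectieBer}, $\int_{SB}g=\int_0^1 dv\,v^{M-1}\int_{SS}g(v\bold{x})$, followed by a one-dimensional radial argument) is viable and would expose both issues at once.
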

In other words, the supervector $\bold{x}$ acts as an outer normal on the supersphere. This integral formula differs from the Stokes' formulas on superspace in \cite{MR0647158} and \cite{MR1402921} since it does not connect a domain on $\mR^{m|2n}$ with its boundaries of codimension $(1,0)$ and $(0,1)$, see \cite{CDBS1}.

The Clifford-Hermite functions are eigenvectors of the isotropic harmonic oscillator in $\mR^m$, for an overview see \cite{CDBS2}. They provide an alternative for the basis of products of Hermite functions, based on the $SO(m)$-invariance of the hamiltonian of the harmonic oscillator. In superspace they were defined in \cite{DBS3} and studied in full detail in \cite{CDBS2}. They can be expressed as
\begin{eqnarray}
\label{CH}
\psi_{j,k,l}&=&2^{2j}j!L_j^{\frac{M}{2}+k-1}(R^2)H_k^{(l)}\exp(-R^2/2),
\end{eqnarray}
with $H_k^{(l)}$ a basis of $\cH_k$ and $L_p^q$ the generalized Laguerre polynomials. Here, the super Gaussian is defined by
\[
\exp(-R^2/2)=\exp(-r^2/2)\exp(\uxb^2/2)=\exp(-r^2/2)\sum_{j=0}^n\frac{\uxb^{2j}}{2^jj!}.
\]
For the purely bosonic one-dimensional case we obtain $\psi_{j,0,1}=2^{2j}j!L_j^{-\frac{1}{2}}(x^2)\exp(-x^2/2)$ and $\psi_{j,1,1}=2^{2j}j!L_j^{\frac{1}{2}}(x^2)x\exp(-x^2/2)$ which are the classical Hermite functions on the real line. When $M\not\in-2\mN$, the super Clifford-Hermite functions in formula (\ref{CH}) form a basis of the space $\cP\exp(-R^2/2)$, see lemma \ref{scalFischer}. They are the eigenvectors of the super harmonic oscillator,
\begin{eqnarray*}
\frac{1}{2}\left(R^2-\nabla^2\right)\psi_{j,k,l}&=&(2j+k+\frac{M}{2})\psi_{j,k,l}.
\end{eqnarray*}

In \cite{DBS9} the super Fourier transform on $\cS(\mR^m)_{m|2n}$ was introduced as
\begin{eqnarray}
\label{Four}
\cF^{\pm}_{m|2n}(f(\bold{x}))(\bold{y})&=&(2\pi)^{-M/2}\int_{\mR^{m|2n},x}\exp(\pm i\langle \bold{x},\bold{y}\rangle)f(\bold{x}),
\end{eqnarray}
yielding an $SO(m)\times Sp(2n)$-invariant generalization of the purely bosonic Fourier transform. In \cite{CDBS2} the following Mehler formula for the super Fourier kernel was proven,
\begin{equation}
\label{superMehler2}
\sum_{j,k}\frac{2j!(-1)^j (\pm i)^{k}}{\Gamma(j+\frac{M}{2}+k)}    L_j^{\frac{M}{2}+k-1}(R^2_x)L_j^{\frac{M}{2}+k-1}(R_y^2) F_k (\bold{x},\bold{y}) \exp(-\frac{R^2_x + R^2_y}{2})
= \left(2\pi \right)^{-\frac{M}{2}} \exp(\pm i \langle \bold{x} ,\bold{ y} \rangle).
\end{equation}
In this formula, $F_k(\bold{x},\bold{y})$ is the reproducing kernel of $\cH_k$, satisfying
\begin{eqnarray}
\label{reprkern}
\int_{SS,\bold{x}} H_l(\bold{x}) F_{k}(\bold{x},\bold{y}) &=& \delta_{kl} H_l(\bold{y}), \quad \mbox{for all } H_l \in \cH_l.
\end{eqnarray}
$F_k(\bold{x},\bold{y})$ is explicitly given by
\begin{eqnarray*}
F_k(\bold{x},\bold{y}) &=& \frac{2k+M-2}{M-2} \frac{\Gamma(M/2)}{2 \pi^{M/2}} (R_xR_y)^{k} C^{(M-2)/2}_k \left(\frac{\langle  \bold{x},\bold{y} \rangle}{R_xR_y} \right).
\end{eqnarray*}
with $C^{(M-2)/2}_k(t)$ a Gegenbauer polynomial. Since the Gegenbauer polynomial is even (odd) when $k$ is even (odd), only positive and even powers of $R_x$ and $R_y$ appear and the expression is well-defined. 

\section{Orthosymplectically invariant superfunctions}

We want to consider functions which are spherically symmetric in superspace. They are called spherically symmetric or orthosymplectically invariant superfunctions. In stead of starting from the definition as $\mathfrak{osp}(m|2n)$-invariant functions, we make the analogy with bosonic analysis and define them as functions of `$R$', using a suitable Taylor expansion in the fermionic variables. We will prove later that these functions are exactly the $\mathfrak{osp}(m|2n)$-invariant functions.

\begin{definition}
\label{defftien}
For a function $h:\mR^+\to\mR$, $h\in C^n(\mR^+)$, the superfunction with notation $h(R^2)$ is defined as
\begin{eqnarray*}
h(R^2)&=&\sum_{j=0}^n(-1)^j\frac{\uxb^{2j}}{j!}h^{(j)}(r^2)
\end{eqnarray*}
and is an element of $C^{0}(\mR^m)_{m|2n}$.
\end{definition}

It is clear that $\mR^+$ can be replaced by any open $\Sigma\subset\mR^+$ in this definition. As these functions only contain even elements of the Grassmann algebra, spherically symmetric functions commute with all elements of $\Lambda_{2n}$. The super Gaussian introduced in equation \eqref{CH} is an example of definition \ref{defftien}.

The space of spherically symmetric superfunctions also forms a commutative $\mC$-algebra with multiplication the usual multiplication of superfunctions. The fact that the space is closed with respect to this multiplication follows immediately from 
\begin{equation}
g(R^2)h(R^2)=(gh)(R^2),
\label{algmor}
\end{equation}
for $g,h\in C^n(\mR^+)$ and the notation of definition \ref{defftien}. This also implies that definition \ref{defftien} generates a $\mC$-algebra morphism between the algebra of functions $C^n(\mR^+)$ and the algebra of spherically symmetric functions.

As an example we define the superfunction $\sqrt{R^2}$ and call it $R$,
\begin{eqnarray*}
R&=&\sum_{j=0}^n(-1)^j\frac{\uxb^{2j}}{j!}\frac{\Gamma(\frac{3}{2})}{\Gamma(\frac{3}{2}-j)}r^{1-2j}.
\end{eqnarray*}

We can extend definition \ref{defftien} to more general arguments than $R^2$. 
\begin{definition}
\label{defftien2}
Let $h$ be a function $h:\mR\to\mR$, $h\in C^{2n}(\mR)$. For every superfunction $f(\bold{x})=f_0(\ux)+f_1(\bold{x})$ (as in formula (\ref{superftie})), with body $f_0$ and nilpotent part $f_1$, the superfunction with notation $h(f(\bold{x}))$ is defined as
\begin{eqnarray*}
h(f(\bold{x}))&=&\sum_{j=0}^{2n}\frac{(f_1(\bold{x}))^{j}}{j!}h^{(j)}(f_0(\ux)).
\end{eqnarray*}
\end{definition}

This definition allows to rewrite the spherically symmetric superfunction corresponding to the composition of two functions $h,g\in C^n(\mR^+)$, 
\begin{eqnarray}
\label{circ}
(h\circ g)(R^2)&=&h(g(R^2)).
\end{eqnarray}
We hence obtain an equivalence between functions of the form $h(R)$ (defined as in definition \ref{defftien2} with $f(\bold{x})=R$) and spherically symmetric functions. This is given by $h(R)=g(R^2)$ with $g,h\in C^n(\mR^+)$ satisfying $g=h\circ\sqrt{\cdot}$. These functions $h(R)$ could therefore also have been used as the definition for spherically symmetric functions. This means we can define the function $R^{\alpha}$, for $\alpha\in\mR$ using definition \ref{defftien2} as a function of the superfunction $R$, or using equation (\ref{circ}) and definition \ref{defftien} as $R^{\alpha}=(R^2)^{\frac{\alpha}{2}}$. Both definitions are equal and give
\begin{eqnarray*}
R^\alpha&=&\sum_{j=0}^n(-1)^j\frac{\uxb^{2j}}{j!}(\frac{\alpha}{2}+1-j)_jr^{\alpha-2j},
\end{eqnarray*}
with $(a)_j=(a)(a+1)\cdots (a+j-1)$ the Pochhammer symbol. In case $\alpha\in2\mN$, this reduces to ordinary polynomials $R^{2k}$ using the binomial theorem. 

Now we will show that the spherically symmetric functions have the expected behavior. A straightforward calculation yields the following lemma
\begin{lemma}
\label{diracaxiaal}
With the notations from definition \ref{defftien} and $h\in C^{n+1}(\mR^+)$, the following holds,
\begin{eqnarray*}
\nabla h(R^2)&=&2\bold{x}h^{(1)}(R^2).
\end{eqnarray*}
\end{lemma}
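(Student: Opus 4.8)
The plan is to prove the identity componentwise, comparing each entry of the super gradient $\nabla=(\partial_{x_1},\dots,\partial_{x_m},2\partial_{\xp_2},-2\partial_{\xp_1},\dots,2\partial_{\xp_{2n}},-2\partial_{\xp_{2n-1}})$ applied to $h(R^2)$ with the corresponding entry of $2\bold{x}h^{(1)}(R^2)$. Concretely, for the fermionic slots I must check that $2\partial_{\xp_{2k}}h(R^2)=2\xp_{2k-1}h^{(1)}(R^2)$ and $-2\partial_{\xp_{2k-1}}h(R^2)=2\xp_{2k}h^{(1)}(R^2)$, while for the bosonic slots I must check $\partial_{x_i}h(R^2)=2x_ih^{(1)}(R^2)$.

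The bosonic entries are immediate. Since the factors $\uxb^{2j}$ in definition \ref{defftien} do not involve the commuting variables, $\partial_{x_i}$ only acts on the scalar coefficients $h^{(j)}(r^2)$, and the chain rule with $\partial_{x_i}r^2=2x_i$ gives $\partial_{x_i}h(R^2)=2x_i\sum_{j=0}^{n}(-1)^j\frac{\uxb^{2j}}{j!}h^{(j+1)}(r^2)$. Applying definition \ref{defftien} to the function $h^{(1)}$ shows that the sum is precisely $h^{(1)}(R^2)$, so this entry equals $2x_ih^{(1)}(R^2)$, as required.

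The fermionic entries carry the substance of the computation. First I would record the two elementary derivatives $\partial_{\xp_{2k}}\uxb^2=-\xp_{2k-1}$ and $\partial_{\xp_{2k-1}}\uxb^2=\xp_{2k}$, which follow from the Leibniz rule $\partial_{\xp_{j}}\xp_{l}=\delta_{jl}-\xp_{l}\partial_{\xp_{j}}$ once one notes that every summand $\xp_{2l-1}\xp_{2l}$ of $\uxb^2$ with $l\neq k$ is annihilated. Because $\uxb^2$ is even and hence central in $\Lambda_{2n}$, the odd derivation $\partial_{\xp_{j}}$ satisfies $\partial_{\xp_{j}}\uxb^{2j'}=j'\,(\partial_{\xp_{j}}\uxb^2)\,\uxb^{2(j'-1)}$, with no intervening signs. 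Substituting this into definition \ref{defftien}, differentiating term by term and reindexing $j\mapsto j-1$, one obtains $\partial_{\xp_{2k}}h(R^2)=\xp_{2k-1}\sum_{j=0}^{n-1}(-1)^{j}\frac{\uxb^{2j}}{j!}h^{(j+1)}(r^2)$ and, analogously, $\partial_{\xp_{2k-1}}h(R^2)=-\xp_{2k}\sum_{j=0}^{n-1}(-1)^{j}\frac{\uxb^{2j}}{j!}h^{(j+1)}(r^2)$. Multiplying by the factors $2$ and $-2$ supplied by $\nabla$ then produces exactly the entries $2\xp_{2k-1}h^{(1)}(R^2)$ and $2\xp_{2k}h^{(1)}(R^2)$ of $2\bold{x}h^{(1)}(R^2)$.

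The only point requiring care, and the main (if modest) obstacle, is that the reindexed sums above run to $n-1$, whereas $h^{(1)}(R^2)$ from definition \ref{defftien} runs to $n$. These coincide because the missing top term carries a factor $\xp_{2k-1}\uxb^{2n}$ (respectively $\xp_{2k}\uxb^{2n}$), and since $\uxb^{2n}=n!\,\xp_1\cdots\xp_{2n}$ already contains every fermionic generator, multiplication by any further $\xp$ vanishes. Hence truncating at $n-1$ or at $n$ makes no difference, the sums agree, and all $m+2n$ components of $\nabla h(R^2)$ and $2\bold{x}h^{(1)}(R^2)$ match, which completes the proof.
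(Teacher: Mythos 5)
Your proof is correct and follows essentially the same route as the paper: the bosonic components via the chain rule, and the fermionic components by differentiating $\uxb^{2j}$ term by term, reindexing, and observing that the top term is killed by the extra factor of $\xp$ against $\uxb^{2n}$ (a point the paper handles implicitly by extending the sum to $j=n+1$). No gaps.
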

\begin{proof}
The bosonic derivatives follow from $\partial_{x_i}h^{(j)}(r^2)=2x_ih^{(j+1)}(r^2)$. For the fermionic part we calculate
\begin{eqnarray*}
\partial_{{x\grave{}}^{2k-1}}h(R^2)&=&2\partial_{{x\grave{}}_{2k}}\sum_{j=0}^n(-1)^j\frac{\uxb^{2j}}{j!}h^{(j)}(r^2)\\
&=&-2{x\grave{}}_{2k-1}\sum_{j=1}^n(-1)^jj\frac{\uxb^{2j-2}}{j!}h^{(j)}(r^2)\\
&=&2{x\grave{}}_{2k-1}\sum_{j=1}^{n+1}(-1)^{(j-1)}\frac{\uxb^{2j-2}}{(j-1)!}h^{(j)}(r^2)\\
&=&2{x\grave{}}_{2k-1}h^{(1)}(R^2).
\end{eqnarray*}
The term $\partial_{{x\grave{}}^{2k}}h(R^2)$ is calculated similarly.
\end{proof}

Combining lemma \ref{diracaxiaal} with equation (\ref{Euler}) we obtain
\begin{eqnarray}
\label{Eulerspherical}
\mE h(R^2)&=&2R^2 h^{(1)}(R^2).
\end{eqnarray}
Therefore, the super Euler operator formally acts as $R\partial_{R}$ on spherically symmetric functions. 

Lemma \ref{diracaxiaal} implies that spherically symmetric functions are null-solutions of all the differential operators $L_{ij}$ in equation \eqref{ospgen}. We can prove that all the functions which are null-solutions of the generators of $\mathfrak{osp}(m|2n)$ are of this form.

\begin{theorem}
\label{uniciteit}
A differentiable superfunction $f(\bold{x})$ which is orthosympectically invariant, i.e. for which
\begin{eqnarray*}
L_{ij}f(\bold{x})&=&0
\end{eqnarray*}
for all $i\le j$ is a spherically symmetric superfunction of the form of definition \ref{defftien}.
\end{theorem}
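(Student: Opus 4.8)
The plan is to split the generators $L_{ij}$ into three families — purely bosonic ($i,j\le m$), purely fermionic ($i,j>m$), and mixed ($i\le m<j$) — and to read off from each family a successively stronger structural constraint on $f$. First I would expand $f=\sum_A{x\grave{}}_A f_A(\ux)$ as in \eqref{superftie}. The bosonic generators with $1\le i<j\le m$ are the classical angular momenta $L_{ij}=x_i\partial_{x_j}-x_j\partial_{x_i}$, which commute with every fermionic variable; hence $L_{ij}f=\sum_A{x\grave{}}_A(L_{ij}f_A)=0$ forces each coefficient $f_A$ to be $SO(m)$-invariant, i.e. a radial function $f_A=f_A(r^2)$. (This step uses $m\ge2$; for $m=1$ no such generator exists and one must lean on the recursion obtained below.)

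Next, the purely fermionic generators $L_{ij}$ with $m<i\le j\le m+2n$ involve only the ${x\grave{}}_l$ and $\partial_{{x\grave{}}_l}$ and generate $\mathfrak{sp}(2n)$ acting on $\Lambda_{2n}$; being of fermionic degree zero, they preserve each graded piece. Writing $f=\sum_\alpha P_\alpha(\uxb)\,g_\alpha(\ux)$ with the $g_\alpha$ linearly independent over $\mC$, invariance forces each $P_\alpha$ into the $\mathfrak{sp}(2n)$-invariant subspace of $\Lambda_{2n}$. By the fermionic Fischer decomposition \eqref{fermFischer} the trivial $\mathfrak{sp}(2n)$-summands are precisely $\uxb^{2j}\cH_0^f=\mC\,\uxb^{2j}$, so the invariant subspace is $\mathrm{span}\{\uxb^{2j}:0\le j\le n\}$ and $f=\sum_{j=0}^n\uxb^{2j}c_j(r^2)$ for radial functions $c_j$.

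Finally I would use a mixed generator, say $L_{i,m+2k-1}=2x_i\partial_{{x\grave{}}_{2k}}-{x\grave{}}_{2k-1}\partial_{x_i}$. With $\partial_{{x\grave{}}_{2k}}\uxb^{2j}=-j\,{x\grave{}}_{2k-1}\uxb^{2(j-1)}$ and $\partial_{x_i}c_j(r^2)=2x_ic_j'$ (the prime denoting $d/dr^2$), collecting terms gives $L_{i,m+2k-1}f=-2x_i{x\grave{}}_{2k-1}\sum_{j=0}^{n-1}\uxb^{2j}\big[(j+1)c_{j+1}+c_j'\big]$, where the top power drops out automatically since ${x\grave{}}_{2k-1}\uxb^{2n}=0$. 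Because the elements ${x\grave{}}_{2k-1}\uxb^{2j}$ $(0\le j\le n-1)$ are linearly independent, $L_{i,m+2k-1}f=0$ is equivalent to the recursion $(j+1)c_{j+1}=-c_j'$. Setting $h:=c_0$ this integrates to $c_j=\frac{(-1)^j}{j!}h^{(j)}(r^2)$, that is $f=h(R^2)$ exactly as in definition \ref{defftien}. The remaining generators ($L_{i,m+2k}$ and the diagonal $L_{jj}$) then impose nothing new, as they must, since such $h(R^2)$ are already annihilated by every $L_{ij}$.

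The main obstacle is this mixed-generator step: it is the only place where bosonic and fermionic dependence are genuinely coupled, and it is delicate with respect to signs, both through the graded Leibniz rule and through the identity for $\partial_{{x\grave{}}_{2k}}\uxb^{2j}$. Two points need care: that the top-degree term $\uxb^{2n}$ contributes no equation (otherwise one would spuriously deduce $c_n'=0$), and the linear independence used to decouple the recursion. The first two families reduce to classical invariant theory for $SO(m)$ and $\mathfrak{sp}(2n)$ and are routine, apart from the vacuous rotational case $m=1$, which has to be handled on its own.
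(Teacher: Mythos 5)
Your proposal is correct and follows essentially the same route as the paper: bosonic generators force radial coefficients, the purely fermionic generators together with the Fischer decomposition \eqref{fermFischer} confine the Grassmann dependence to $\mathrm{span}\{\uxb^{2j}\}$, and a mixed generator yields the first-order recursion that integrates to $h(R^2)$ (your $(j+1)c_{j+1}=-c_j'$ is the paper's $f_k'=f_{k+1}$ after the substitution $c_j=\tfrac{(-1)^j}{j!}f_j$), with all signs checking out. Your flagged caveat about $m=1$ is a genuine subtlety that the paper itself glosses over, so it does not count against you.
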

\begin{proof}
It was already mentioned that spherically symmetric functions are orthosymplectically invariant. Now we prove the other direction of the theorem. We know that a purely bosonic solution of the bosonic $L_{ij}$ is of the form $g(r^2)$ with $g$ differentiable. Elements of $\Lambda_{2n}$ which are null-solutions of all $X\in\mathfrak{sp}(2n)$, have to be contained in the one dimensional blocks in the $\mathfrak{sp}(2n)$-decomposition \eqref{fermFischer} of $\Lambda_{2n}$. These correspond to $\uxb^{2j}$.

From these considerations we conclude that $f$ has to be of the form
\begin{eqnarray*}
f(\bold{x})&=&\sum_{k=0}^n\frac{(-1)^k}{k!} f_k(r^2)\uxb^{2k}
\end{eqnarray*}
for some differentiable functions $f_k$. Now applying the mixed (bosonic-fermionic) $L_{ij}$,
\begin{eqnarray*}
(2x_j\partial_{{x\grave{}}_{2i}}-{x\grave{}}_{2i-1}\partial_{x_j})f(\bold{x})&=&\sum_{k=1}^n2x_j\frac{(-1)^k}{k!}kf_k(r^2)(-{x\grave{}}_{2i-1})\uxb^{2k-2}-\sum_{k=0}^{n-1}2x_j\frac{(-1)^k}{k!}f_k'(r^2){x\grave{}}_{2i-1}\uxb^{2k}\\
&=&\sum_{k=0}^{n-1}2x_j\frac{(-1)^k}{k!}f_{k+1}(r^2){x\grave{}}_{2i-1}\uxb^{2k}-\sum_{k=0}^{n-1}2x_j\frac{(-1)^k}{k!}f_k'(r^2){x\grave{}}_{2i-1}\uxb^{2k}\\
&=&0,
\end{eqnarray*}
we finally find $f'_k=f_{k+1}$.
\end{proof}

Now we prove `dimensional reduction' theorems for differentiating spherically symmetric functions. This means that most of the behavior of the functions is captured in the super-dimension $M$. So in case $M>0$, we can use formulas well-known for bosonic analysis in $M$ dimensions. We start with the expression of the Laplace operator on spherically symmetric functions. In $m$ bosonic dimensions this is given by the formula,

\begin{equation}
\label{Laplbosrad}
\nabla^2_b h(r^2)=(\frac{\partial^2}{\partial r^2}+\frac{m-1}{r}\frac{\partial}{\partial r})h(r^2)=4h^{(2)}(r^2)r^2+2mh^{(1)}(r^2).
\end{equation}

\begin{lemma}
\label{deltaftie}
With the notations from definition \ref{defftien} and $h\in C^{n+2}(\mR^+)$ 
\begin{eqnarray*}
\nabla^2 h(R^2)&=&4h^{(2)}(R^2)R^2+2Mh^{(1)}(R^2).
\end{eqnarray*}
This can be generalized to
\begin{eqnarray}
\label{commLaplRkwad}
[\nabla^2,h(R^2)]&=&4R^2h^{(2)}(R^2)+h^{(1)}(R^2)(4\mE+2M).
\end{eqnarray}
\end{lemma}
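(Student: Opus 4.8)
The plan is to avoid the brute-force route (expanding $h(R^2)$ and computing $\nabla_f^2\uxb^{2j}$ term by term) and instead iterate Lemma \ref{diracaxiaal}, reducing everything to the action of the Euler operator. For the first identity I would write $\nabla^2 h(R^2)=\langle\nabla,\nabla h(R^2)\rangle$ and feed in the gradient $\nabla h(R^2)=2\bold{x}\,h^{(1)}(R^2)$ of Lemma \ref{diracaxiaal}, so that $\nabla^2 h(R^2)=\langle\nabla,\,2\bold{x}\,h^{(1)}(R^2)\rangle$. Since $h^{(1)}(R^2)$ is spherically symmetric it is even and central, so a single graded Leibniz step moves the operator $\langle\nabla,\bold{x}\,\cdot\,\rangle$ onto it: the ``constant'' part of $\sum_k\nabla^k X_k$ contributes the scalar $M$ and the surviving derivative part is exactly $\mE$, which is precisely the operator identity $\langle\nabla,\bold{x}\rangle=M+\mE$ of \eqref{inprodxnabla}. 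This gives $\nabla^2 h(R^2)=2(M+\mE)h^{(1)}(R^2)$. Applying \eqref{Eulerspherical}, i.e. $\mE h^{(1)}(R^2)=2R^2h^{(2)}(R^2)$, converts the Euler term into $4R^2h^{(2)}(R^2)$ and yields the claimed $4R^2h^{(2)}(R^2)+2Mh^{(1)}(R^2)$.

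For the commutator I would treat $h(R^2)$ as a multiplication operator and establish the super analogue of the product rule for the Laplacian on $h(R^2)\phi$. Writing $u:=h(R^2)$, which is even and central, differentiating twice with the graded Leibniz rule produces only three kinds of terms,
\[
\nabla^2(u\phi)=(\nabla^2 u)\phi+u\,\nabla^2\phi+\sum_k\bigl[(-1)^{[k]}(\nabla_k u)(\nabla^k\phi)+(\nabla^k u)(\nabla_k\phi)\bigr],
\]
so that $[\nabla^2,u]\phi=(\nabla^2 u)\phi+(\text{cross term})$. The cross term is where Lemma \ref{diracaxiaal} re-enters: substituting $\nabla_k u=2X_k h^{(1)}(R^2)$ and $\nabla^k u=2X^k h^{(1)}(R^2)$, both sums collapse to $\langle\bold{x},\nabla\phi\rangle$ by the two equivalent expressions for the inner product in \eqref{inproduit}, and $\langle\bold{x},\nabla\rangle=\mE$ by \eqref{Euler}. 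Hence the cross term equals $4h^{(1)}(R^2)\,\mE\phi$. Combining this with the already-proven first identity gives $[\nabla^2,h(R^2)]=4R^2h^{(2)}(R^2)+2Mh^{(1)}(R^2)+4h^{(1)}(R^2)\mE$, which is the stated formula after factoring out $h^{(1)}(R^2)$.

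The routine ingredients — the derivative rule $\partial_{x_i}h^{(j)}(r^2)=2x_ih^{(j+1)}(r^2)$ and the evenness and centrality of $h^{(1)}(R^2)$ — are harmless. The one place demanding care, and the main obstacle, is the sign bookkeeping in the graded Leibniz steps: one must check that passing $\nabla^k$ (parity $[k]$) past the factor $X_k$ or $\nabla_k u$ (also parity $[k]$) produces exactly the signs $(-1)^{[k]}$ that make the scalar part of $\langle\nabla,\bold{x}\rangle$ equal to $M=m-2n$ rather than $m$, and that make the two halves of the cross term add up to $2\langle\bold{x},\nabla\phi\rangle$ rather than cancel. Verifying that $\sum_k\nabla^k X_k=M$ on the fermionic indices — where each pair contributes $\nabla^{m+2i-1}X_{m+2i-1}+\nabla^{m+2i}X_{m+2i}=-2$ — is the cleanest way to see that it is the super-dimension, and not the bosonic dimension, that controls the result.
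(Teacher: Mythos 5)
Your proposal is correct and follows essentially the same route as the paper: the first identity is obtained by iterating the gradient formula of Lemma \ref{diracaxiaal} together with $\langle\nabla,\bold{x}\rangle=M+\mE$ and \eqref{Eulerspherical}, and the commutator is obtained from the graded Leibniz expansion whose cross term is $2\langle\nabla h(R^2),\nabla\rangle=4h^{(1)}(R^2)\mE$. The sign bookkeeping you flag is exactly the content of formulas \eqref{inproduit} and \eqref{inprodxnabla}, and your verification of it is sound.
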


\begin{proof}

Using lemma \ref{diracaxiaal} and equations (\ref{inprodxnabla}) and (\ref{Eulerspherical}) we calculate
\begin{eqnarray*}
\nabla^2 h(R^2)&=&2\langle \nabla,x\rangle h^{(1)}(R^2)\\
&=&(2\mE+2M)h^{(1)}(R^2)\\
&=&4R^2h^{(2)}(R^2)+2Mh^{(1)}(R^2).
\end{eqnarray*}
The generalization follows from formula \eqref{inproduit}
\begin{eqnarray*}
[\nabla^2,h(R^2)]&=&\nabla^2(h(R^2))+\sum_{k}\left(\left[\nabla^kh(R^2)\right]\nabla_k+(-1)^{[k]}\left[\nabla_kh(R^2)\right]\nabla^k\right)\\
&=&\nabla^2(h(R^2))+2\langle \nabla(h(R^2)),\nabla\rangle\\
&=&4R^2h^{(2)}(R^2)+2Mh^{(1)}(R^2)+4h^{(1)}(R^2)\mE.
\end{eqnarray*}
\end{proof}

Using equation \eqref{commLaplRkwad} we thus obtain
\begin{eqnarray}
\label{laplradharm}
\nabla^2 h(R^2)H_k&=&4R^2h^{(2)}(R^2)H_k+(4k+2M)h^{(1)}(R^2)H_k,
\end{eqnarray}
for $H_k\in\cH_k$. 

The Laplace-Beltrami operator and the spherically symmetric functions commute.
\begin{lemma}
\label{axiaalLB}
If $h\in C^{n+2}(\mR^+)$, then
\begin{eqnarray*}
[\Delta_{LB},h(R^2)]&=&0.
\end{eqnarray*}
\end{lemma}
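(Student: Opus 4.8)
The plan is to expand the commutator through the definition \eqref{LB} of $\Delta_{LB}=R^2\nabla^2-\mE(M-2+\mE)$ and to show that the two contributions $[R^2\nabla^2,h(R^2)]$ and $[\mE(M-2+\mE),h(R^2)]$ yield identical operators, so that their difference vanishes. Throughout I regard $h(R^2)$ and $R^2$ as multiplication operators; because both are spherically symmetric they commute with one another by \eqref{algmor}, which is exactly what will let me pull factors of $R^2$ freely through the first commutator.

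For the first contribution, commutativity of $R^2$ and $h(R^2)$ as multiplication operators gives $[R^2\nabla^2,h(R^2)]=R^2[\nabla^2,h(R^2)]$, and the inner commutator is precisely \eqref{commLaplRkwad} from Lemma \ref{deltaftie}. This produces
\[
[R^2\nabla^2,h(R^2)]=4(R^2)^2h^{(2)}(R^2)+4R^2h^{(1)}(R^2)\mE+2MR^2h^{(1)}(R^2).
\]
For the second contribution the key observation is that $\mE$ is a derivation: its bosonic part is the radial degree operator and its fermionic part the total Grassmann-degree operator, both obeying the ordinary Leibniz rule. Hence $[\mE,h(R^2)]$ acts as multiplication by $\mE(h(R^2))=2R^2h^{(1)}(R^2)$ via \eqref{Eulerspherical}. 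Writing $\mE(M-2+\mE)=(M-2)\mE+\mE^2$ and using $[\mE^2,h(R^2)]=[\mE,h(R^2)]\mE+\mE[\mE,h(R^2)]$, everything reduces to applying $\mE$ once more to the spherically symmetric function $2R^2h^{(1)}(R^2)$. Rewriting the latter as $k(R^2)$ with $k(t)=2th^{(1)}(t)$ (legitimate by the algebra morphism \eqref{algmor}) and invoking \eqref{Eulerspherical} a second time yields $\mE(2R^2h^{(1)}(R^2))=4R^2h^{(1)}(R^2)+4(R^2)^2h^{(2)}(R^2)$. Collecting terms, the $(M-2)$ piece together with the extra $+4R^2h^{(1)}(R^2)$ combines into $2MR^2h^{(1)}(R^2)$, reproducing exactly the operator from the first contribution.

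The two contributions therefore cancel and $[\Delta_{LB},h(R^2)]=0$. The one genuine subtlety, and the step I would treat most carefully, is the bookkeeping in $[\mE^2,h(R^2)]$: one must respect operator composition rather than naively feed $\mE$ into functions, keeping the undifferentiated $\mE$ standing to the right, and one must confirm that $\mE$ acts as a genuine derivation on the whole superalgebra so that \eqref{Eulerspherical} can be reused on the intermediate function. Once the ordering is handled, the matching of coefficients is routine, and it is worth noting that no positivity or invertibility hypothesis on $M$ is required.
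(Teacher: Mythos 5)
Your proposal is correct and follows exactly the route the paper indicates: the paper's own (one-line) proof says the result "follows from lemma \ref{deltaftie} and equations (\ref{Eulerspherical}) and (\ref{LB})", which is precisely the expansion and coefficient-matching you carry out, and your bookkeeping of $[\mE^2,h(R^2)]$ checks out. The only difference is that the paper also notes a second, slicker argument — since $L_{ij}h(R^2)=0$ and each $L_{ij}$ is a first-order operator, $[L_{ij},h(R^2)]=0$, and $\Delta_{LB}$ is quadratic in the $L_{ij}$ — but your explicit computation is a complete proof of the same statement.
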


\begin{proof}
This follows from lemma \ref{deltaftie} and equations (\ref{Eulerspherical}) and (\ref{LB}). The result can equally be obtained from $[L_{ij},h(R^2)]=0$ because $(L_{ij}h(R^2))=0$ and $L_{ij}$ is a first order derivative.
\end{proof}

Now we prove integral theorems for spherically symmetric functions. The following lemma implies that integration on the supersphere acts on spherically symmetric functions as expected, i.e modulo $R^2-1$.
\begin{lemma}
\label{xkwad}
For $g(\bold{x})$ $n$ times continuously differentiable in an open neighborhood of the unit sphere $\mS^{m-1}$ and $h(R^2)$ as in definition \ref{defftien}, the following holds
\begin{eqnarray*}
\int_{SS}h(R^2)g(\bold{x})&=&h(1)\int_{SS}g(\bold{x}).
\end{eqnarray*}
\end{lemma}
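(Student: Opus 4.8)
The plan is to peel the statement down to a single identity, namely the invariance of the supersphere integral under multiplication by $R^2$, and then to bootstrap from polynomials to arbitrary $h$ by a jet argument. All the genuine work sits in the base case; the two reduction steps are purely formal.

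First I would establish the base identity $\int_{SS}R^2g(\bold{x})=\int_{SS}g(\bold{x})$ by direct computation from the defining formula \eqref{integSS}. Writing $R^2=r^2-\uxb^2$, the integrand $R^2g$ splits into a bosonic contribution $r^2g$ and a fermionic contribution $-\uxb^2g$. In the bosonic part, Leibniz' rule in the variable $r^2$ gives $(\partial_{r^2})^j\big[r^{m-2}r^2g\big]=r^2(\partial_{r^2})^j[r^{m-2}g]+j(\partial_{r^2})^{j-1}[r^{m-2}g]$; evaluated at $r=1$ the leading term reassembles exactly $\int_{SS}g$, leaving a correction proportional to $(\partial_{r^2})^{j-1}[r^{m-2}g]$. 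In the fermionic part, the relation $\uxb^{2j}\uxb^2=\uxb^{2(j+1)}$ lets me shift the summation index $j\mapsto j+1$ (the new top term $\uxb^{2(n+1)}=0$ drops out), and what remains is precisely the negative of the bosonic correction. The two correction sums telescope against one another, which proves the base identity; iterating yields $\int_{SS}R^{2k}g=\int_{SS}g$ for every $k\in\mN$.

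Next I would treat polynomial $h$. Because definition \ref{defftien} is an algebra morphism (equation \eqref{algmor}) that reduces to ordinary powers on even integer exponents, a polynomial $h(t)=\sum_kc_kt^k$ gives $h(R^2)=\sum_kc_kR^{2k}$, so linearity and the base identity force $\int_{SS}h(R^2)g=\big(\sum_kc_k\big)\int_{SS}g=h(1)\int_{SS}g$. For general $h\in C^n(\mR^+)$ I would use a jet argument: inserting $f=h(R^2)g$ into \eqref{integSS} and expanding $h(R^2)$ by definition \ref{defftien}, every appearance of $h$ is a derivative $h^{(q)}(r^2)$ evaluated at $r=1$, and the Berezin integral forces the total fermionic degree to equal $2n$, which caps the order at $q\le n$. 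Hence $\int_{SS}h(R^2)g$ depends on $h$ only through the finite jet $\big(h(1),h'(1),\dots,h^{(n)}(1)\big)$. Replacing $h$ by its degree-$n$ Taylor polynomial $P$ at the point $1$, which shares this jet, gives $\int_{SS}h(R^2)g=\int_{SS}P(R^2)g=P(1)\int_{SS}g=h(1)\int_{SS}g$, as desired.

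The hard part will be the base identity $\int_{SS}R^2g=\int_{SS}g$. The analogous property $T(R^2f)=T(f)$ is already available for polynomial $f\in\cP$ from the characterization of the Pizzetti integral, but here $g$ is only assumed $C^n$ near $\mS^{m-1}$, so it must be argued directly from \eqref{integSS}; the delicate point is the exact index-shifted cancellation between the Leibniz correction coming from the bosonic factor $r^2$ and the one coming from the fermionic factor $\uxb^2$. Once this is in place, the passage to arbitrary $h$ is routine linearity together with the standard Taylor/jet reduction.
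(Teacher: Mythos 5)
Your proposal is correct, but it takes a genuinely different route from the paper. The paper proves the lemma in one stroke for arbitrary $h\in C^n(\mR^+)$: it expands $\int_{SS}h(R^2)g$ via \eqref{integSS} and definition \ref{defftien}, reindexes by total fermionic degree, applies the Leibniz rule in $r^2$, and collapses the resulting double sum with the combinatorial identity $\sum_{s=p}^{t}\binom{t}{s}\binom{s}{p}(-1)^{t-s}=\delta_{t,p}$, which leaves only the $h(1)$ term. You instead factor the statement into three layers: the single identity $\int_{SS}R^2g=\int_{SS}g$ (your telescoping cancellation between the Leibniz correction from $r^2$ and the index shift from $-\uxb^2$ is exactly right, and it extends the first Pizzetti axiom $T(R^2f)=T(f)$ from $\cP$ to $C^n$ functions, which is of independent interest); linearity for polynomial $h$; and the observation that $\int_{SS}h(R^2)g$ is a linear functional of the $n$-jet of $h$ at $1$, so $h$ may be replaced by its degree-$n$ Taylor polynomial there. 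The jet observation is sound --- the Berezin integral caps the fermionic degree at $2n$, so only $h(1),\dots,h^{(n)}(1)$ ever enter --- and it also supplies the well-definedness remark the paper makes at the start of its proof. What your route buys is modularity and a conceptual explanation of \emph{why} only $h(1)$ survives (invariance under $R^2$ plus finite-order dependence); what the paper's route buys is brevity and the avoidance of the small extra check that definition \ref{defftien} applied to a polynomial agrees with ordinary substitution of $R^2=r^2-\uxb^2$, a fact you correctly invoke via the binomial-theorem remark following the definition of $R^\alpha$.
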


\begin{proof}

We start by calculating
\begin{eqnarray*}
\int_{SS}h(R^2)g&=&\sum_{j=0}^n\int_{\mS^{m-1}}d\underline{\xi}\int_{B}\sum_{k=0}^{n-j}\frac{\uxb^{2j+2k}}{j!k!}(-1)^k\left[(\frac{\partial}{\partial r^2})^jr^{m-2}h^{(k)}(r^2)g\right]_{r=1}.
\end{eqnarray*}
So we find $\int_{SS}h(R^2)g$ is well-defined since the derivatives of $h$ appear only up to $h^{(n)}$. We then find
\begin{eqnarray*}
\int_{SS}h(R^2)g&=&\int_{\mS^{m-1}}d\underline{\xi}\int_{B}\sum_{t=0}^n\sum_{s=0}^{t}\frac{\uxb^{2t}}{t!}\binom{t}{s}(-1)^{t-s}\left[(\frac{\partial}{\partial r^2})^sr^{m-2}g[(\frac{\partial}{\partial r^2})^{t-s}h(r^2)]\right]_{r=1}\\
&=&\sum_{t=0}^n\int_{\mS^{m-1}}d\underline{\xi}\int_{B}\frac{\uxb^{2t}}{t!}\sum_{s=0}^{t}\binom{t}{s}(-1)^{t-s}\sum_{p=0}^s\binom{s}{p}\left[(\frac{\partial}{\partial r^2})^pr^{m-2}g\right]_{r=1}\left[(\frac{\partial}{\partial r^2})^{t-s+s-p}h(r^2)\right]_{r=1}\\
&=&\sum_{t=0}^n\int_{\mS^{m-1}}d\underline{\xi}\int_{B}\frac{\uxb^{2t}}{t!}\sum_{p=0}^{t}\left[(\frac{\partial}{\partial r^2})^pr^{m-2}g\right]_{r=1}\left[(\frac{\partial}{\partial r^2})^{t-p}h(r^2)\right]_{r=1}\sum_{s=p}^{t}\binom{t}{s}\binom{s}{p}(-1)^{t-s}.
\end{eqnarray*}
Since $\sum_{s=p}^{t}\binom{t}{s}\binom{s}{p}(-1)^{t-s}=(-1)^{t}\frac{t!}{p!}\sum_{q=0}^{t-p}\frac{1}{(t-p-q)!q!}(-1)^{q+p}=(-1)^{t+p}\binom{t}{p}\delta_{t,p}=\delta_{t,p}$, we obtain
\begin{eqnarray*}
\int_{SS}h(R^2)g&=&\int_{\mS^{m-1}}d\underline{\xi}\int_{B}\sum_{t=0}^n\frac{\uxb^{2t}}{t!}\left[(\frac{\partial}{\partial r^2})^tr^{m-2}g\right]_{r=1}\left[h(r^2)\right]_{r=1}\\
&=&h(1)\int_{SS}g,
\end{eqnarray*}
which completes the proof.
\end{proof}

In this section we reserve the notation $\uy$ for a bosonic vector in $\mR^M$. Combining lemma \ref{xkwad} with theorem \ref{connectieBer} we find that for $h(R^2)\in L_1(\mR^{m})_{m|2n}$ and $M>0$
\begin{eqnarray*}
\int_{\mR^{m|2n}}h(R^2)&=&\sigma_M\int_{0}^\infty v^{M-1}h(v^2)dv \\
&=&\int_{\mR^{M}}h(r_{\uy}^2)dV(\uy).
\end{eqnarray*}

Hence, in case $M>0$, the Berezin integral of a spherically symmetric function depends only on the super-dimension $M$ and not on $m$ and $2n$ separately. In particular, it is equal to the Lebesgue integral in $M$ bosonic dimensions. This result was also found in \cite{MR2492638} using a different approach. Now we use our approach to calculate the integral in case $M\le 0$ in order to reobtain theorem III.1 in \cite{MR2492638}. Here, we explicitly prove this theorem and derive a sufficient condition on the functions for which it holds. This condition turns out to be $h(\bold{x})\in L_1(\mR^m)_{m|2n}$.

\begin{theorem}
\label{dimredint}
Let $h(\bold{x})$ be a differentiable superfunction in $L_1(\mR^m)_{m|2n}$, satisfying $L_{ij}h=0$ for all $1\le i\le j\le m+2n$ with body $f$, then $h(\bold{x})$ is of the form $f(R^2)$ and
\begin{equation*}
\int_{\mR^{m|2n}}h(\bold{x})=
\begin{cases}
\vspace{2mm}
\int_{\mR^{M}}f(r_{\uy}^2)dV(\uy) & M>0\\
\vspace{2mm}
(-\pi)^{\frac{M}{2}}f^{(-M/2)}(0) & M\in-2\mN\\
2(-\pi)^{\frac{M-1}{2}}\int_{0}^\infty  f^{(\frac{1-M}{2})}(r^2) dr& M\in-2\mN-1.
\end{cases}
\end{equation*}
\end{theorem}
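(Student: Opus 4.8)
The first case, $M>0$, was already settled in the discussion preceding the statement, so the plan is to treat $M\le 0$ by the same mechanism: collapse the Berezin integral to a one-dimensional radial integral and then integrate by parts. By Theorem \ref{uniciteit} the hypothesis $L_{ij}h=0$ forces $h$ to be spherically symmetric, $h(\bold{x})=f(R^2)$ in the sense of Definition \ref{defftien}, with $f\in C^n(\mR^+)$ the body. Since $\int_B$ extracts the coefficient of the top monomial ${x\grave{}}_1\cdots{x\grave{}}_{2n}$, only the term $j=n$ of $f(R^2)=\sum_{j=0}^n(-1)^j\frac{\uxb^{2j}}{j!}f^{(j)}(r^2)$ survives, and with $\int_B\uxb^{2n}=\pi^{-n}n!$ one gets $\int_B f(R^2)=(-\pi)^{-n}f^{(n)}(r^2)$. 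Passing to polar coordinates in $\mR^m$ and substituting $u=r^2$ then gives
\begin{eqnarray*}
\int_{\mR^{m|2n}}h=(-\pi)^{-n}\int_{\mR^m}f^{(n)}(r^2)\,dV(\ux)=(-\pi)^{-n}\frac{\pi^{m/2}}{\Gamma(m/2)}\int_0^\infty f^{(n)}(u)\,u^{m/2-1}\,du,
\end{eqnarray*}
which is the object to be evaluated.

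The idea is now to integrate by parts repeatedly, each step lowering the order of the derivative on $f$ by one and the power of $u$ by one. When $M\in-2\mN$ the exponent $m/2-1$ is a non-negative integer, so after $m/2-1$ integrations by parts followed by the fundamental theorem of calculus the integral collapses to a multiple of $f^{(\ell)}(0)$ with $\ell=-M/2$; using $\Gamma(m/2)=(m/2-1)!$ the Gamma factors cancel, and after bookkeeping the signs and powers of $\pi$ one recovers $(-\pi)^{M/2}f^{(-M/2)}(0)$. When $M\in-2\mN-1$ the exponent $m/2-1$ is a half-integer, so integration by parts can be carried out only $q=n-\ell-1$ times (with $\ell=(-M-1)/2$), lowering the power of $u$ exactly to $-1/2$ and the derivative order to $f^{(\ell+1)}=f^{((1-M)/2)}$. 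The accumulated falling factorial equals $\Gamma(m/2)/\Gamma(1/2)$, cancelling the $\Gamma(m/2)$ in the denominator, while the substitution $u^{-1/2}\,du=2\,dr$ turns the remaining integral back into $2\int_0^\infty f^{((1-M)/2)}(r^2)\,dr$; collecting constants reproduces the claimed $2(-\pi)^{(M-1)/2}\int_0^\infty f^{((1-M)/2)}(r^2)\,dr$. Note that $\ell<n$ in the odd case and $\ell\le n$ in the even case, so all the derivatives invoked are indeed available from $f\in C^n$.

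The only genuine obstacle is to show that every boundary term produced by the integrations by parts vanishes, and this is exactly where the hypothesis $h\in L_1(\mR^m)_{m|2n}$ enters. The crucial observation is that membership in $L_1(\mR^m)_{m|2n}=L_1(\mR^m)\otimes\Lambda_{2n}$ forces \emph{every} coefficient of $h$, hence each of $f,f^{(1)},\ldots,f^{(n)}$, to be radially integrable, i.e. $\int_0^\infty|f^{(j)}(u)|\,u^{m/2-1}\,du<\infty$ for $0\le j\le n$. At $u=0$ the exponents of $u$ occurring in the boundary terms are all strictly positive (integers in the even case, half-integers in the odd case), so the continuous and hence locally bounded derivatives of $f$ are killed. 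At $u=\infty$ a boundary term has the form $f^{(k)}(u)\,u^{s}$ with $0\le s\le m/2-1$; its derivative is dominated by $|f^{(k+1)}(u)|u^{m/2-1}+|f^{(k)}(u)|u^{m/2-1}$ and is therefore integrable on $[1,\infty)$, so $f^{(k)}(u)\,u^s$ has a finite limit at infinity, and that limit must be $0$ since otherwise $\int^\infty|f^{(k)}(u)|u^{m/2-1}\,du$ would diverge. This simultaneous integrability of all of $f^{(0)},\ldots,f^{(n)}$ is the heart of the matter, and it is precisely why the sufficient condition takes the clean form $h\in L_1(\mR^m)_{m|2n}$.
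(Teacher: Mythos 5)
Your proof is correct and follows essentially the same strategy as the paper's: collapse the Berezin integral to the one-dimensional radial integral $(-\pi)^{-n}\tfrac{\pi^{m/2}}{\Gamma(m/2)}\int_0^\infty f^{(n)}(u)u^{m/2-1}du$ and integrate by parts until the power of $u$ reaches $0$ (even $M$) or $-1/2$ (odd $M$), with the $L_1$ hypothesis killing the boundary contributions at infinity. The paper packages each integration by parts as a dimensional descent $\int_{\mR^{m|2n}}\to\int_{\mR^{m-2|2n-2}}$ and works in the variable $r$ with the operator $\tfrac{1}{2r}\tfrac{\partial}{\partial r}$, but the computation is identical to yours, and your bookkeeping of the constants checks out in both parity cases. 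The one substantive divergence is at the boundary $u=0$: you dispose of those terms by appealing to continuity, hence local boundedness, of the $f^{(k)}$ at the origin, whereas the paper deliberately avoids that assumption and proves a dedicated lemma (Lemma \ref{rand0}) showing that the mere existence of the relevant integrals already forces $\lim_{r\to_>0}r^{m-2}f^{(n-1)}(r^2)=0$ and its iterates. Your shortcut is legitimate under the natural reading of ``differentiable superfunction'' as differentiable on all of $\mR^m$ (and the stated conclusion anyway presupposes that $f^{(-M/2)}(0)$ makes sense), but you should be aware that you are assuming slightly more at the origin than the paper's proof uses. Conversely, your argument at $u=\infty$ (finite limit from integrability of the derivative of $f^{(k)}(u)u^s$, then zero limit from integrability of $f^{(k)}$ itself) spells out a step the paper only asserts, namely that $h\in L_1(\mR^m)_{m|2n}$ implies $\lim_{r\to\infty}r^{m-1}f^{(j)}(r^2)=0$.
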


\begin{proof}

The case $M>0$ is already obtained. Now we consider the case $M\le 0$. Since $f(R^2)$ is an element of $L_1(\mR^m)_{m|2n}$ we find that $\lim_{r\to\infty}r^{m-1}f^{(j)}(r^2)=0$ holds for $j=0,\cdots,n$. This implies that
\begin{eqnarray}
\label{randinfty}
\lim_{r\to\infty}r^{M+2j}f^{(j)}(r^2)=0 &\mbox{ for}& j=\lfloor \frac{1-M}{2}\rfloor,\cdots, n-1.
\end{eqnarray}
For a spherically symmetric function we can calculate easily that
\begin{eqnarray*}
\int_Bf(R^2)&=&\frac{(-1)^n}{\pi^n}f^{(n)}(r^2).
\end{eqnarray*}
First we restrict to the case $m>2$ and insert this into the left-hand side of the theorem,
\begin{eqnarray*}
\int_{\mR^{m|2n}}f(R^2)&=&\frac{\sigma_m}{\pi^n}(-1)^n\int_{0}^\infty  r^{m-1}(\frac{1}{2r}\frac{\partial}{\partial r})^n f(r^2)dr\\
&=&-\frac{\sigma_m \frac{m-2}{2}}{\pi^n}(-1)^n\int_{0}^\infty  r^{m-3}(\frac{1}{2r}\frac{\partial}{\partial r})^{n-1} f(r^2)dr+\frac{\sigma_m}{2\pi^n}(-1)^n\left[r^{m-2}f^{(n-1)}(r^2)\right]^\infty_0\\
&=&\int_{\mR^{m-2|2n-2}}f(R_v^2)-\frac{\sigma_m}{2\pi^n}(-1)^n\lim_{r\to_>0}r^{m-2}f^{(n-1)}(r^2).\\
\end{eqnarray*}
with $v\in\mR^{m-2|2n-2}$. The boundary term at infinity vanished because of equation \eqref{randinfty}. The boundary term at zero vanishes because of the subsequent lemma \ref{rand0} for $g(r)=f^{(n-1)}(r^2)$ and $k=m$. This partial integration can be iterated, the boundary terms will always be zero because of equation \eqref{randinfty} and lemma \ref{rand0}. In case $M\in-2\mN$ this leads to (for $m=2$ this is obtained immediately)
\begin{eqnarray*}
\int_{\mR^{m|2n}}f(R^2)&=&\int_{\mR^{2|2-M}}f(R^2)\\
&=&\frac{2\pi}{\pi^{1-\frac{M}{2}}}(-1)^{(1-M/2)}\int_{0}^\infty r(\frac{1}{2r}\frac{\partial}{\partial r})^{1-M/2} f(r^2)dr\\
&=&(-\pi)^{\frac{M}{2}}f^{(-M/2)}(0).\\
\end{eqnarray*}
In case $M\in-2\mN-1$ we get
\begin{eqnarray*}
\int_{\mR^{m|2n}}f(R^2)&=&\int_{\mR^{1|1-M}}f(R^2)\\
&=&\frac{2\pi^{1/2}}{\Gamma(\frac{1}{2})\pi^{\frac{1-M}{2}}}(-1)^{(1-M)/2}\int_{0}^\infty  (\frac{1}{2r}\frac{\partial}{\partial r})^{(1-M)/2} f(r^2)dr\\
&=&\frac{2}{\pi^{\frac{1-M}{2}}}(-1)^{(1-M)/2}\int_{0}^\infty  (\frac{1}{2r}\frac{\partial}{\partial r})^{(1-M)/2} f(r^2)dr\\
&=&2(-\pi)^{\frac{M-1}{2}}\int_{0}^\infty  f^{(\frac{1-M}{2})}(r^2)dr.
\end{eqnarray*}
This proves the theorem.
\end{proof}

\begin{lemma}
\label{rand0}
If for a differentiable function $g(r)$ the integral $\int_0^\infty r^{k-2}\frac{\partial}{\partial r}g(r) d r$ with $k\in\mN$ and $k>2$ exists, then the following relation holds:
\begin{eqnarray*}
\lim_{r\to_{>}0}r^{k-2}g(r)&=&0.
\end{eqnarray*}
\end{lemma}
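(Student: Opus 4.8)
The plan is to reduce the statement to a one-variable real-analysis fact about the auxiliary function $\phi(r):=r^{k-2}g(r)$, whose behaviour at $0$ is exactly what must be controlled. Only the convergence of the integral at the \emph{lower} endpoint is relevant, so I fix some $c>0$ and work on $[\epsilon,c]$ for $0<\epsilon<c$. First I would integrate by parts, writing
\[
\int_\epsilon^c r^{k-2}g'(r)\,dr=\phi(c)-\phi(\epsilon)-(k-2)\int_\epsilon^c \frac{\phi(r)}{r}\,dr,
\]
where I used $r^{k-3}g(r)=\phi(r)/r$. Introducing $B(\epsilon):=\int_\epsilon^c \frac{\phi(r)}{r}\,dr$ and letting $J$ be the (finite, by hypothesis) limit of $\int_\epsilon^c r^{k-2}g'(r)\,dr$ as $\epsilon\to 0^+$, this rearranges into the key relation
\[
H(\epsilon):=\phi(\epsilon)+(k-2)B(\epsilon)\longrightarrow \phi(c)-J=:C \qquad (\epsilon\to 0^+).
\]

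The difficulty is that, a priori, neither $\phi(\epsilon)$ nor $B(\epsilon)$ need converge individually, only their combination $H(\epsilon)$. To disentangle them I would treat the link between $\phi$ and $B$ as a singular ODE: since $B'(\epsilon)=-\phi(\epsilon)/\epsilon$, one has $H(\epsilon)=(k-2)B(\epsilon)-\epsilon B'(\epsilon)$, and multiplying by the integrating factor $\epsilon^{-(k-1)}$ gives $\frac{d}{d\epsilon}\big(\epsilon^{-(k-2)}B(\epsilon)\big)=-\epsilon^{-(k-1)}H(\epsilon)$. Integrating from $\epsilon$ to $c$ and using $B(c)=0$ yields the explicit representation
\[
B(\epsilon)=\epsilon^{k-2}\int_\epsilon^c s^{1-k}H(s)\,ds.
\]

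The heart of the argument -- and the step I expect to be the main obstacle -- is then to show $B(\epsilon)\to C/(k-2)$. This is a concentration (approximate-identity) estimate: because $k>2$ the weight $s^{1-k}$ is non-integrable at $0$, so $\int_\epsilon^c s^{1-k}\,ds=\frac{\epsilon^{2-k}-c^{2-k}}{k-2}$ and hence $\epsilon^{k-2}\int_\epsilon^c s^{1-k}\,ds\to \frac{1}{k-2}$, with the mass concentrating near $s=\epsilon$. Writing $H=C+(H-C)$, the constant part contributes $C/(k-2)$ in the limit, while for the error I would fix $\delta>0$, choose $r_0$ with $|H-C|<\delta$ on $(0,r_0)$, and bound $\epsilon^{k-2}\int_\epsilon^{r_0}s^{1-k}|H(s)-C|\,ds\le \delta/(k-2)$ in the limit, the remaining tail $\int_{r_0}^c$ contributing only $O(\epsilon^{k-2})\to 0$; since $\delta$ is arbitrary this forces $B(\epsilon)\to C/(k-2)$. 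Feeding this back into the key relation gives $\phi(\epsilon)=H(\epsilon)-(k-2)B(\epsilon)\to C-C=0$, which is precisely $\lim_{r\to 0^+}r^{k-2}g(r)=0$. Note that the hypothesis $k\in\mN$, $k>2$ is used exactly to guarantee $k-2>0$ and $1-k<-1$, so that both the integrating factor and the non-integrability of $s^{1-k}$ at the origin are available.
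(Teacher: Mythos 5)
Your proof is correct, and it takes a genuinely different route from the paper's. The paper argues by contradiction through a case analysis on the value of $\lim_{r\to 0^+}r^{k-2}g(r)$: it first shows $\lim_{r\to 0^+}r^{k-1}g'(r)=0$ (a nonzero value $a$ would force $r^{k-2}g'(r)>(a-\epsilon)/r$ near the origin, contradicting convergence of the integral), then excludes a finite nonzero limit $b$ by integrating the differential inequality $\frac{d}{dr}\bigl(r^{k-2}g(r)\bigr)>\bigl((k-2)b-\eta\bigr)/r$ to produce an impossible logarithmic divergence, and finally excludes an infinite limit using the bound $|g'(r)|<\epsilon r^{1-k}$. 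You instead integrate by parts, recognise the resulting relation between $\phi(\epsilon)=\epsilon^{k-2}g(\epsilon)$ and $B(\epsilon)=\int_\epsilon^c\phi(r)r^{-1}\,dr$ as a first-order linear ODE, solve it with the integrating factor $\epsilon^{-(k-2)}$, and extract $\lim B(\epsilon)=C/(k-2)$ by an approximate-identity estimate against the non-integrable weight $s^{1-k}$; the computations check out (in particular $\frac{d}{d\epsilon}\bigl(\epsilon^{-(k-2)}B\bigr)=-\epsilon^{1-k}H$, and the tail $\int_{r_0}^c$ indeed contributes only $O(\epsilon^{k-2})$ since $H$ is continuous, hence bounded, on $[r_0,c]$). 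What your approach buys is that it genuinely \emph{proves} the limit $\lim_{r\to 0^+}r^{k-2}g(r)$ exists and equals $0$, whereas the paper's case analysis tacitly presumes that this limit (and that of $r^{k-1}g'(r)$) exists in $[-\infty,+\infty]$ and only rules out the nonzero possibilities, leaving oscillatory non-convergence unaddressed; the price is the heavier integrating-factor and averaging machinery compared with the paper's short, elementary differential inequalities.
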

\begin{proof}
First we prove that $\lim_{r\to_{>}0}r^{k-1}\frac{\partial}{\partial r}g(r)=0$. Assume that $\lim_{r\to_{>}0}r^{k-1}\frac{\partial}{\partial r}g(r)=a>0$ (the case $a<0$ is similar). This implies that for an $\epsilon<a$ there is a $\delta_{\epsilon}>0$ such that
\begin{eqnarray*}
r^{k-2}\frac{\partial}{\partial r}g(r)&>&\frac{a-\epsilon}{r},
\end{eqnarray*}
for $r<\delta_{\epsilon}$. This implies that $\int_0^{\delta_{\epsilon}} r^{k-2}\frac{\partial}{\partial r}g(r)$ is infinite. So we obtain $\lim_{r\to_{>}0}r^{k-1}\frac{\partial}{\partial r}g(r)=0$. 

Now, assume that $\lim_{r\to_{>}0}r^{k-2}g(r)=b>0$ with $b$ finite (again the case $b<0$ is treated similarly). This implies that, using the previous result,
\begin{eqnarray*}
\lim_{r\to_{>}0}r\frac{\partial}{\partial r}r^{k-2}g(r)&=&(k-2)b,
\end{eqnarray*}
so for $\eta<(k-2)b$, there is a $\delta_{\eta}>0$ such that for $r<\delta_{\eta}$
\begin{eqnarray*}
\frac{\partial}{\partial r}r^{k-2}g(r)&>&\frac{(k-2)b-\eta}{r}.
\end{eqnarray*}
Integrating this expression yields
\begin{eqnarray*}
r^{k-2}g(r)-r_0^{k-2}g(r_0)&>&\left((k-2)b-\eta\right)\ln(r/r_0)
\end{eqnarray*}
when $r_0<r<\delta_{\eta}$. However, this would imply that 
\begin{eqnarray*}
r^{k-2}g(r)&>&b+\lim_{r_0\to 0}\left((k-2)b-\eta\right)\ln(r/r_0)
\end{eqnarray*}
for $0<r<\delta_{\eta}$, which is impossible.

Finally assume that $\lim_{r\to_{>}0}r^{k-2}g(r)=+\infty$. This implies that for every $K>0$ there is a $\delta_K$ such that $g(r)>K/r^{k-2}$ when $r<\delta_K$. Since $\lim_{r\to_>0}r^{k-1}\frac{\partial}{\partial r}g(r)=0$, for every $\epsilon>0$, there is an $\delta_\epsilon>0$ such that $|g'(r)|<\epsilon/r^{k-1}$ for $r<\delta_{\epsilon}$. For $r<r_0<\delta_{\epsilon}$, this leads to 
\begin{eqnarray*}
-\int_{r}^{r_0}g'(u)du&<&\int_{r}^{r_0}\frac{\epsilon}{u^{k-1}}du,
\end{eqnarray*}
which implies
\begin{eqnarray*}
g(r)<\frac{\epsilon/(k-2)}{r^{k-2}}+g(r_0)-\frac{\epsilon/(k-2)}{r_0^{k-2}}.
\end{eqnarray*}
 When we choose $K>\epsilon/(k-2)$, $g(r)>K/r^{k-2}$ for all $r<\delta_K$ leads to a contradiction. The only possibility that remains is  $\lim_{r\to_{>}0}r^{k-2}g(r)=0$.
\end{proof}

When $M\le 0$, theorem \ref{connectieBer} is not applicable. Integral theorem \ref{dimredint} shows how a regularized version of theorem \ref{connectieBer} can be obtained in the symmetric case $m=2n$. In that case, theorem \ref{dimredint} states
\begin{eqnarray*}
\int_{\mR^{m|2n}}f(R^2)&=&f(0).
\end{eqnarray*}
This is equal to
\begin{eqnarray*}
\int_{\mR^{2n|2n}}f(R^2)&=&-\int_0^\infty \frac{\partial}{\partial v}f(v^2)dv\\
&=&-\lim_{\varepsilon\to_> 0}\frac{2\pi^{\varepsilon/2}}{2\Gamma(\varepsilon/2+1)}\int_0^\infty v^{\epsilon}\frac{\partial}{\partial v}f(v^2)dv\\
&=&\lim_{\varepsilon\to_> 0}\frac{2\pi^{\varepsilon/2}}{\Gamma(\varepsilon/2)}\int_0^\infty  v^{\epsilon-1}f(v^2)dv.\\
\end{eqnarray*}
So we obtain a regularized version of theorem \ref{connectieBer} for orthosymplectically invariant functions for super-dimension $M=0$.

\section{Fundamental solutions and mean value theorem}

The fundamental solutions for the super Laplace operator and its powers were calculated in \cite{DBS6}. In \cite{CDBS1} and \cite{DBSCauchy} they were used to construct Cauchy integral formulae on superspace. They are the solution of the equation
\begin{eqnarray*}
\nabla^{2l}f(\bold{x})&=&(-1)^l\delta(\bold{x})=(-1)^l\delta(\ux)\frac{\pi^n\uxb^{2n}}{n!}.
\end{eqnarray*}
It can be easily seen from the definition of the Berezin integral (\ref{Berezin}) that $\frac{\pi^n\uxb^{2n}}{n!}$ is the fermionic Dirac distribution. As a finite dimensional vector space a Grassmann algebra is isomorphic to its dual. An isomorphism $\chi:\Lambda_{2n}\to\Lambda_{2n}'$ is given by
\begin{eqnarray*}
\chi\left({x\grave{}}_A\right)[{x\grave{}}_B]&=& \int_{B}{x\grave{}}_A{x\grave{}}_B.
\end{eqnarray*}
In this sense we find that the super Dirac distribution has homogeneous (super)degree $-m+2n=-M$. The following lemma states that this uniquely characterizes the Dirac distribution together with the fact that its support is the origin, similar to the bosonic case.
\begin{lemma} 
\label{Dirac}
The super Dirac distribution is up to a multiplicative constant the only Schwartz distribution with homogenous degree $-M$ and with support in the origin of $\mR^m$. 
\end{lemma}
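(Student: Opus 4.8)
The plan is to combine the Grassmann expansion \eqref{superftie} of a super distribution with the classical structure theorem for point-supported distributions on $\mR^m$, and then to let the finite dimensionality of $\Lambda_{2n}$ single out one surviving term by a homogeneity count. First I would write an arbitrary Schwartz super distribution whose support lies in the origin as $T=\sum_A {x\grave{}}_A T_A$, where each coefficient $T_A$ is a genuine Schwartz distribution on $\mR^m$ supported in $\{0\}$. By the classical theorem, each $T_A$ is then a finite linear combination $T_A=\sum_{|\alpha|\le N_A}c_{A,\alpha}\,\partial^\alpha\delta(\ux)$ of derivatives of the bosonic Dirac distribution.

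Next I would track super-homogeneity under the dilation $\bold{x}\mapsto\lambda\bold{x}$, equivalently via the eigenvalues of $\mE$. Each fermionic variable scales with weight one, so ${x\grave{}}_A$ is homogeneous of degree $|A|$ (the number of factors in the monomial), while $\partial^\alpha\delta(\ux)$ is homogeneous of degree $-m-|\alpha|$; one checks directly that $\mE_b\delta=-m\delta$, so the bosonic Dirac has degree $-m$ as expected. Hence each basic monomial ${x\grave{}}_A\partial^\alpha\delta(\ux)$ has super-degree $|A|-m-|\alpha|$. Since these monomials are linearly independent and individually homogeneous, demanding that $T$ be homogeneous of degree $-M$ forces $c_{A,\alpha}=0$ unless $|A|-m-|\alpha|=-M=-m+2n$, that is, unless $|A|-|\alpha|=2n$. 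Now the finite dimensionality enters: one always has $0\le|A|\le 2n$ and $|\alpha|\ge 0$, so the constraint $|A|-|\alpha|=2n$ can only be met by $|A|=2n$ together with $|\alpha|=0$. The unique index with $|A|=2n$ is the top one, giving ${x\grave{}}_1\cdots{x\grave{}}_{2n}$, and $|\alpha|=0$ forbids any bosonic derivative. Therefore $T=c\,{x\grave{}}_1\cdots{x\grave{}}_{2n}\,\delta(\ux)$ for some constant $c$, and recalling $\uxb^{2n}=n!\,{x\grave{}}_1\cdots{x\grave{}}_{2n}$ this is precisely a multiple of the super Dirac distribution $\delta(\ux)\frac{\pi^n\uxb^{2n}}{n!}$, which proves uniqueness up to a scalar.

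The argument is at bottom a degree count, so the only genuine inputs are (i) a careful formulation of homogeneity for super distributions, which I would phrase through the scaling action or equivalently through $\mE$, taking care with the distributional sign conventions so that $\delta(\ux)$ indeed carries degree $-m$, and (ii) the classical fact that a distribution supported at a single point is a finite sum of derivatives of $\delta$. I do not expect a hard technical step once these are in place: the conclusion is forced entirely by the two inequalities $|A|\le 2n$ and $|\alpha|\ge 0$, the first reflecting the nilpotency of $\Lambda_{2n}$ and the second the absence of negative-order derivatives. The most delicate part is simply making the homogeneity bookkeeping rigorous at the level of distributions rather than polynomials, i.e. justifying that one may decompose $T$ into homogeneous components and read off the constraint termwise.
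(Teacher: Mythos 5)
Your proof is correct and takes essentially the same route as the paper's: expand in the Grassmann monomial basis, invoke the classical structure theorem for point-supported distributions, and conclude by the degree count forced by $|A|\le 2n$ and $|\alpha|\ge 0$. The paper phrases the same count by requiring the bosonic coefficient of ${x\grave{}}_A$ to be homogeneous of degree $-M-|A|\le -m$, which is exactly your constraint $|A|-|\alpha|=2n$.
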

\begin{proof}
It is well-known that a bosonic distribution with support in the origin is a linear combination of derivatives of the Dirac distribution. The Dirac distribution has homogeneous degree $-m$ and each derivative lowers the degree by one. Any super distribution with support in the origin and degree $-M$ is of the form
\begin{eqnarray*}
\sum_{A}d_A(\ux){x\grave{}}_A
\end{eqnarray*}
with $d_A(\ux)$ a homogeneous bosonic distribution of degree $-M-|A|$ with support in the origin. This only exists when $-M-|A|\le -m$, so $|A|=2n$ and $d_{1,\cdots,1}$ is equal to the bosonic Dirac distribution up to a multiplicative constant.
\end{proof}

The fundamental solutions turn out to be spherically symmetric functions. In stead of proving that the expressions obtained in \cite{DBS6} are spherically symmetric we start from new expressions and prove that they are fundamental solutions. First we repeat the purely bosonic fundamental solutions for the Laplace operator and its natural powers. The functions
\begin{equation}
\label{3fundoplA}
\nu_{2j}^{m}(\ux)=
\begin{cases}
r^{2j-m}/\gamma_{j-1,m}& m\;odd\\
r^{2j-m}/\gamma'_{j-1,m}& m\; even\; and \; j<\frac{m}{2}\\
-r^{2j-m}\log (r)/\gamma'_{j-1,m}&m\; even\; and \; j\ge \frac{m}{2}
\end{cases}
\end{equation}
satisfy $\nabla_b^{2j}\nu_{2j}^m(\ux)=(-1)^j\delta(\ux)$. The constants $\gamma$ are explicitly given in \cite{MR745128}. For $m$ odd for instance, the constant is given by
\begin{eqnarray}
\label{gammaodd}
\gamma_{l,m}&=&2^{2l+1}l!\frac{2\pi^{m/2}}{\Gamma(m/2-l-1)}.
\end{eqnarray}
The solutions only depend on $r$ and from now on we write $\nu_{2j}^m(r^2)=\nu_{2j}^m(\ux)$. 

\begin{theorem}
\label{fundopl}
When $M\not\in-2\mN$ and using the notation of definition \ref{defftien}, a fundamental solution for $\nabla^{2l}$ is given by 
\begin{eqnarray*}
\nu_{2l}^{m|2n}(\bold{x})&=&\pi^n2^{2n}\frac{(n+l-1)!}{(l-1)!}\nu_{2l+2n}^{m}({R^2})
\end{eqnarray*}
with $\nu_{2l+2n}^{m}(r^2)$ as in (\ref{3fundoplA}). This means a fundamental solution is given by 
\begin{equation}
\label{3fundopl}
\nu_{2l}^{m|2n}(\bold{x})=
\begin{cases}
R^{2l-M}/\gamma_{l-1,M}& M\;odd\\
R^{2l-M}/{\gamma'_{l-1,M}}& M\; even\; and \; l<\frac{M}{2}\\
-R^{2l-M}\log(R)/({\gamma'_{l-1,M}})&M\; even\; and \; l\ge \frac{M}{2}.
\end{cases}
\end{equation}
\end{theorem}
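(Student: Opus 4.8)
The statement has two halves: the identification of the two displayed expressions, and the verification that they define a fundamental solution. I would dispose of the first half by a direct expansion. Writing $\nu_{2l+2n}^m(r^2)\propto r^{2l+2n-m}=r^{2l-M}$ and feeding it into Definition \ref{defftien}, the $j$-th fermionic coefficient is $\frac{(-1)^j}{j!}\uxb^{2j}$ times the $j$-th derivative of $t^{l-M/2}$, namely $(l-\tfrac M2+1-j)_j\,r^{2l-M-2j}$; this is exactly the coefficient appearing in the closed form of $R^{2l-M}$ recorded just before the theorem. Hence $\nu_{2l+2n}^m(R^2)=\gamma_{l+n-1,m}^{-1}R^{2l-M}$ and the identity of the two forms reduces to the scalar equation $\pi^n2^{2n}\frac{(n+l-1)!}{(l-1)!}=\gamma_{l+n-1,m}/\gamma_{l-1,M}$. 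Substituting the explicit $\gamma$'s from \eqref{gammaodd} (and the corresponding even-dimensional constants in the $\gamma'$ cases) and using $M=m-2n$, the crucial simplification is that the two Gamma factors coincide, $\Gamma(m/2-l-n)=\Gamma(M/2-l)$, so that only the powers $2^{2n}$, $\pi^{(m-M)/2}=\pi^n$ and the factorial ratio survive, which is precisely the asserted constant.

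For the second half I would exploit the dimensional reduction carried by Lemma \ref{deltaftie}. Together with the algebra morphism \eqref{algmor} it says that $\nabla^2$ sends a spherically symmetric $h(R^2)$ to $(\widetilde{\Delta}_M h)(R^2)$, where $\widetilde{\Delta}_M g(t):=4t\,g''(t)+2M\,g'(t)$ is the radial Laplacian of a bosonic space of dimension $M$, cf. \eqref{Laplbosrad}; iterating gives $\nabla^{2l}h(R^2)=(\widetilde{\Delta}_M^{\,l}h)(R^2)$ wherever $h$ is smooth. Away from $\underline{x}=0$ one computes $\widetilde{\Delta}_M t^{\,l-M/2}=4(l-1)(l-\tfrac M2)\,t^{\,l-1-M/2}$, so $l$ successive applications lower the exponent to the null exponent $1-M/2$ and then annihilate it: $\nabla^{2l}\nu_{2l}^{m|2n}=0$ off the origin. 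Consequently $\nabla^{2l}\nu_{2l}^{m|2n}$ is a distribution supported at $\underline{x}=0$, and (using $\nabla^{2l}R^{2l-M}=0$ to discard the polynomial factor in the logarithmic regime) it is homogeneous of super-degree $2l-M-2l=-M$. Lemma \ref{Dirac} therefore forces $\nabla^{2l}\nu_{2l}^{m|2n}=c\,\delta(\mathbf{x})$ for a single constant $c$.

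The remaining and, I expect, most delicate task is to show $c=(-1)^l$. The cleanest device is to read off the top fermionic component. Since $\nabla_f^2$ strictly lowers fermionic degree, the $\uxb^{2n}$-part of $\nabla^{2l}\nu_{2l}^{m|2n}$ comes solely from $\nabla_b^{2l}$ applied to the $\uxb^{2n}$-part of $\nu_{2l}^{m|2n}$, which equals $C(-1)^n\frac{\uxb^{2n}}{n!}[\nu_{2l+2n}^m]^{(n)}(r^2)$ with $C$ the constant above. Because $\frac{d^n}{dt^n}t^{\,l+n-m/2}\propto t^{\,l-m/2}$, one has $[\nu_{2l+2n}^m]^{(n)}(r^2)\propto\nu_{2l}^m(r^2)$, a genuine bosonic fundamental solution, and applying $\nabla_b^{2l}\nu_{2l}^m=(-1)^l\delta(\underline{x})$ yields a multiple of $\delta(\underline{x})\frac{\pi^n\uxb^{2n}}{n!}=\delta(\mathbf{x})$. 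Matching the numerical prefactors — which reduces, after inserting the explicit $\gamma$'s, to the Pochhammer identity $(l-\tfrac m2+1)_n=(-1)^n(\tfrac m2-l-n)_n$ — gives $c=(-1)^l$. The main obstacle throughout is exactly this constant bookkeeping, compounded by the need to carry the three regimes of \eqref{3fundoplA}, including the non-homogeneous logarithmic one, uniformly through both the normalization and the application of Lemma \ref{Dirac}.
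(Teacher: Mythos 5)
Your proposal follows essentially the same route as the paper: reduce $\nabla^{2l}$ on spherically symmetric functions to the radial operator in dimension $M$ via Lemma \ref{deltaftie} and \eqref{algmor}, conclude support at the origin and homogeneity $-M$ so that Lemma \ref{Dirac} applies, and fix the constant by isolating the top fermionic component $\uxb^{2n}$ and invoking the bosonic fundamental solution --- which is exactly the paper's closing computation, including your Pochhammer/Gamma identity $\Gamma(\tfrac m2-l-n)=\Gamma(\tfrac M2-l)$. The one point you leave open, the annihilation off the origin in the logarithmic regime, is closed in the paper by noting that $M$ even forces $M>0$, so $\nu_{2l+2n}^{m}(r^2)$ is up to a constant the classical polyharmonic fundamental solution $\nu_{2l}^{M}$ in $\mR^{M}$, for which vanishing of the $l$-th iterated radial Laplacian away from the origin is known.
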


\begin{proof}
First of all, the proposed formulae are well-defined, as all the $\nu_{2j}^m(r^2)$ are elements of $C^{\infty}(\mR^+)$. First we assume $M$ even. Because then $M>0$ we find that $\nu_{2l+2n}^m(r^2)$ equals $\nu^{M}_{2l}(r^2)$ up to a multiplicative constant. Together with equation \eqref{Laplbosrad} for $M$ bosonic variables this yields 
\begin{eqnarray*}
\left(\frac{\partial^2}{\partial r^2}+\frac{M-1}{r}\frac{\partial}{\partial r}\right)^l\nu^m_{2n+2l}(r^2)&=&0\qquad \mbox{for }r\not=0.\\
\end{eqnarray*}
Now in case $M$ is odd we cannot make the substitution to $M$ bosonic variables, because $M$ could be negative. However, a short calculation shows that for $r\not=0$,
\begin{eqnarray*}
\left(\frac{\partial^2}{\partial r^2}+\frac{M-1}{r}\frac{\partial}{\partial r}\right)r^{2l-M}&=&(2l-M)(2l-2)r^{2(l-1)-M}.
\end{eqnarray*}
Combining the two results we find that for every $M\not\in-2\mN$ and for $r\not=0$
\begin{eqnarray*}
\left(\frac{\partial^2}{\partial r^2}+\frac{M-1}{r}\frac{\partial}{\partial r}\right)^l\nu^m_{2n+2l}(r^2)&=&0.
\end{eqnarray*}
Lemma \ref{deltaftie} and formula \eqref{algmor} then yield
\begin{eqnarray*}
\nabla^{2}\nu_{2l}^{m|2n}(\bold{x})&\sim&4(\nu^m_{2l+2n})^{(2)}(R^2)R^2+2M(\nu^m_{2l+2n})^{(1)}(R^2)\\
&=&h(R^2)\qquad \mbox{with }h(r^2)=\left(\frac{\partial^2}{\partial r^2}+\frac{M-1}{r}\frac{\partial}{\partial r}\right)\nu^m_{2n+2l}(r^2),
\end{eqnarray*}
so
\begin{eqnarray*}
\nabla^{2j}\nu_{2l}^{m|2n}(\bold{x})&\sim&h(R^2)\qquad \mbox{with }h(r^2)=\left(\frac{\partial^2}{\partial r^2}+\frac{M-1}{r}\frac{\partial}{\partial r}\right)^j\nu^m_{2n+2l}(r^2).
\end{eqnarray*}
This implies $\nabla^{2l}\nu_{2l}^{m|2n}(\bold{x})=0$ for $|\ux|>0$. So $\nabla^{2l}\nu_{2l}^{m|2n}(\bold{x})$ is a distribution with support in the origin and $\mE\nu_{2l}^{m|2n}(\bold{x})=(2l-M)\nu_{2l}^{m|2n}(\bold{x})$. From this and lemma \ref{Dirac} we can conclude that $\nabla^{2l}\nu_{2l}^{m|2n}=C_l\delta (\bold{x})$ as distributions for some constant $C_l$. It is now easy to check that the normalizations given in the theorem are correct, either by comparing with the results in \cite{DBS6} or by an explicit calculation. We demonstrate this for $M$ odd. The only part in $\nabla^{2l}R^{2l-M}$ which is not zero is the part by $\uxb^{2n}$, so
\begin{eqnarray*}
\nabla^{2l}R^{2l-M}&=&\nabla_b^{2l}(-1)^{n}\frac{\uxb^{2n}}{n!}(l-\frac{m}{2}+1)\cdots(l-\frac{M}{2})r^{2l-m}\\
&=&\frac{\uxb^{2n}}{n!}\gamma_{l-1,m}\delta(\ux)\frac{\Gamma(\frac{m}{2}-l)}{\Gamma(\frac{M}{2}-l)}\\
&=&\delta(\bold{x})\frac{\gamma_{l-1,m}}{\pi^n}\frac{\Gamma(\frac{m}{2}-l)}{\Gamma(\frac{M}{2}-l)}=\gamma_{l-1,M}\delta(\bold{x}),
\end{eqnarray*}
which proves the normalization constant is correct.
\end{proof}

\begin{remark}
From the definition $\delta(\bold{x})=\delta(\ux)\frac{\pi^n\uxb^{2n}}{n!}$ it is not immediately clear that the super Dirac distribution is an $\mathfrak{osp}(m|2n)$-invariant distribution. However, one can calculate that $L_{ij}\delta(\bold{x})=0$ in distributional sense. The orthosymplectic invariance also follows immediately from the fact that the Dirac distribution is equal to the Laplace operator acting on an orthosymplectically invariant function.
\end{remark}

Before we can establish a mean value theorem for super harmonic functions, we need the following lemma.
\begin{lemma}
\label{lemmaintstelling}
Let $f$ and $g$ be superfunctions which are respectively $n$ and $n+1$  times continuously differentiable in an open neighborhood of the unit sphere $\mS^{m-1}$ and for wich $\nabla f$, $\nabla g$ and $\nabla^2 g$ exist almost everywhere and are integrable over $\mB^m$. Moreover, let $f$ be even in the Grassmann variables. Then the formulas
\begin{align*}
(i)&\int_{SB}\nabla^2g=\int_{SS}\mE g \\
(ii)&\int_{SB}\langle \nabla f,\nabla g\rangle=\int_{SS}f(\mE g)-\int_{SB}f\nabla^2 g
\end{align*}
hold.
\end{lemma}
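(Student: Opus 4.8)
The plan is to derive both identities from the two integral tools already established: Lemma \ref{Cauchysphere} (the super Green/divergence formula $\int_{SB}\nabla f=\int_{SS}\bold{x}f$) and equation \eqref{inprodxnabla} (namely $\langle\nabla,\bold{x}\rangle=M+\mE$). Part $(i)$ should follow almost immediately: since $\nabla^2=\langle\nabla,\nabla\rangle$, I would apply Lemma \ref{Cauchysphere} to the superfunction $\nabla g$ in place of $f$, yielding $\int_{SB}\nabla(\nabla g)=\int_{SS}\bold{x}(\nabla g)$. The subtlety is that $\bold{x}$ is a supervector rather than a scalar, so one must be careful whether the contraction $\langle\nabla,\bold{x}\,\cdot\rangle$ acts before or after integration; the natural reading is to pair the outer gradient with $\bold{x}$ so that the right-hand side becomes $\int_{SS}\langle\bold{x},\nabla g\rangle$. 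Using $\langle\bold{x},\nabla\rangle=\mE$ from \eqref{Euler}, this is exactly $\int_{SS}\mE g$, giving $(i)$.

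For part $(ii)$ I would use the super Leibniz rule for the gradient. The key algebraic identity is the product rule $\nabla\langle\, fg\,\rangle$-type expansion: applying $\langle\nabla,\cdot\rangle$ to the product $f\,(\nabla g)$ should split as
\[
\langle\nabla,\,f\,\nabla g\rangle=\langle\nabla f,\nabla g\rangle+f\,\langle\nabla,\nabla g\rangle=\langle\nabla f,\nabla g\rangle+f\,\nabla^2 g,
\]
where the hypothesis that $f$ is even in the Grassmann variables is what guarantees no sign corrections appear when moving $f$ past the fermionic components of $\nabla$. I would then integrate this identity over the superball and invoke Lemma \ref{Cauchysphere} on the left, this time with the superfunction $f\,\nabla g$, to obtain $\int_{SB}\langle\nabla,f\nabla g\rangle=\int_{SS}\langle\bold{x},f\nabla g\rangle$. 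Since $f$ is even and hence commutes with all Grassmann elements (as noted after Definition \ref{defftien}), it can be pulled out to give $\int_{SS}f\,\langle\bold{x},\nabla g\rangle=\int_{SS}f\,(\mE g)$. Rearranging the integrated product rule then yields $(ii)$.

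The main obstacle, and the place requiring genuine care, is the bookkeeping of signs and commutations in the super setting. Two points deserve attention. First, in the fermionic directions the gradient components carry the $(-1)^{[k]}$ factors from \eqref{inproduit}, and one must verify that the evenness of $f$ indeed cancels every sign so that $f$ passes freely through both the contraction with $\bold{x}$ and the application of $\nabla$; this is precisely why the hypothesis restricts $f$ to the even subalgebra. Second, the differentiability and integrability hypotheses (on $f$, $g$, $\nabla g$, $\nabla^2 g$) are exactly what is needed to apply Lemma \ref{Cauchysphere} to the products $\nabla g$ and $f\nabla g$: one should check these products meet the regularity demanded by that lemma, in particular that $\nabla(f\nabla g)$ exists almost everywhere and is integrable over $\mB^m$, which follows from the stated assumptions by the product rule. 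Once the sign accounting is settled, both identities reduce to a single application of the super divergence theorem combined with $\langle\bold{x},\nabla\rangle=\mE$.
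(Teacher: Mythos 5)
Your proposal is correct and follows essentially the same route as the paper: both parts are obtained by applying Lemma \ref{Cauchysphere} to $\nabla g$ and to $f\nabla g$ respectively, using $\langle \bold{x},\nabla\rangle=\mE$ from \eqref{Euler} and the Leibniz splitting $\langle\nabla,f\nabla g\rangle=\langle\nabla f,\nabla g\rangle+f\nabla^2 g$ valid because $f$ is even. (The identity \eqref{inprodxnabla} you mention at the outset is not actually needed; only \eqref{Euler} is, as you correctly use in the end.)
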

\begin{proof}
Starting from lemma \ref{Cauchysphere} we find
\begin{eqnarray*}
\int_{SB}\langle \nabla,\nabla g\rangle&=&\int_{SS}\langle \bold{x},\nabla g\rangle
\end{eqnarray*}
which proves the first part because of formula (\ref{Euler}). Since $f$ is even, $\langle \nabla ,f\nabla g\rangle=\langle \nabla f,\nabla g\rangle+f\langle \nabla,\nabla g\rangle$, so the second part follows again from lemma \ref{Cauchysphere}
\begin{eqnarray*}
\int_{SB}\langle \nabla ,f\nabla g\rangle&=&\int_{SS}f\langle \bold{x},\nabla g\rangle.
\end{eqnarray*} 
\end{proof}

In \cite{DBS5} the mean value theorem was proven for harmonic polynomials. From the Pizzetti formula (\ref{Pizzetti}) we immediately obtain that for a harmonic polynomial $h$,
\begin{eqnarray*}
\int_{SS}h&=&\frac{2\pi^{M/2}}{\Gamma(M/2)}h(0).
\end{eqnarray*}
Using formula (\ref{integSS}), we can now prove this property for arbitrary harmonic superfunctions.

\begin{theorem}{(Mean value)}
\label{mean}
Let $\Omega$ be an open set in $\mR^m$ such that $\mB^m\subset\Omega$. Let $h(\bold{x})$ be an element of $C^2(\Omega)_{m|2n}$, satisfying $\nabla^2 h=0$ in $\Omega$. Then one has
\begin{eqnarray*}
\int_{SS}h(\bold{x})&=&\frac{2\pi^{M/2}}{\Gamma(M/2)}h(0).
\end{eqnarray*}
\end{theorem}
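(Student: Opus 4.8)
The plan is to mimic the classical derivation of the mean value property from Green's identity, replacing the bosonic ingredients by their superspace analogues and exploiting the scaling behaviour of the supersphere integral. For $v\in(0,1]$ I would introduce the scaled average
\[
I(v)=\int_{SS,\bold{x}}h(v\bold{x}),
\]
so that the whole theorem reduces to showing that $I$ is constant on $(0,1]$: then $\int_{SS}h=I(1)=\lim_{v\to0^+}I(v)=\sigma_M h(0)$.

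First I would compute the derivative of $I$. Writing $g_v(\bold{x})=h(v\bold{x})$, the chain rule (valid for both bosonic and fermionic variables, the Grassmann chain rule being linear) gives $\partial_{X_k}g_v=v(\partial_{X_k}h)(v\bold{x})$, whence $\frac{d}{dv}h(v\bold{x})=\sum_k X_k(\partial_{X_k}h)(v\bold{x})=v^{-1}(\mE h)(v\bold{x})=v^{-1}\mE g_v$ by \eqref{Euler}. Differentiating under the integral sign (legitimate since, as explained below, $g_v$ is smooth near $\overline{\mB^m}$ and depends smoothly on $v$) and applying part $(i)$ of Lemma \ref{lemmaintstelling} to $g_v$ yields
\[
\frac{dI}{dv}=\frac1v\int_{SS}\mE g_v=\frac1v\int_{SB}\nabla^2 g_v.
\]
Since $\nabla^2$ is a second order operator, $\nabla^2 g_v=v^2(\nabla^2 h)(v\bold{x})$, and for $\bold{x}\in\mB^m$, $v\le1$, the point $v\bold{x}$ lies in $\Omega$, so $\nabla^2 g_v=0$ there by harmonicity. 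Hence $I'(v)=0$ and $I$ is constant.

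To identify this constant I would let $v\to0^+$. Writing $h=\sum_A {x\grave{}}_A h_A(\ux)$, one has $\partial^\alpha[h_A(v\ux)]=v^{|\alpha|}(\partial^\alpha h_A)(v\ux)$, which tends to $0$ for $|\alpha|\ge1$ and to $h_A(0)$ for $\alpha=0$; together with the factors $v^{|A|}$ this shows that $h(v\cdot)\to h_0(0)=h(0)$ in $C^{n}(\overline{\mB^m})_{m|2n}$. As formula \eqref{integSS} writes $\int_{SS}$ as a finite combination of $r^2$-derivatives of order at most $n$, restricted to $\mS^{m-1}$ and integrated over the compact sphere, $\int_{SS}$ is continuous for $C^n$-convergence, so $\lim_{v\to0^+}I(v)=\int_{SS}h(0)=h(0)\int_{SS}1=\sigma_M h(0)$, the claimed value.

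The one point requiring care — and the main obstacle — is the smoothness: Lemma \ref{lemmaintstelling} demands $C^{n+1}$ regularity of $g_v$ and the limit argument needs $C^n$-convergence, whereas the hypothesis only supplies $h\in C^2$. I would resolve this by elliptic regularity. Decomposing $\nabla^2=\nabla^2_b+\nabla^2_f$ and noting that $\nabla^2_f$ lowers the Grassmann degree by two, the equation $\nabla^2 h=0$ reads componentwise $\nabla^2_b h_A=-(\text{a linear combination of }h_B\text{ with }|B|=|A|+2)$. For the top degree $|A|=2n$ there is no such $B$, so $h_{1,\cdots,1}$ is bosonic harmonic and hence $C^\infty$; a downward induction on $|A|$ then shows, by interior regularity (hypoellipticity) for the Poisson equation, that every $h_A$ is $C^\infty$ on the interior of $\Omega$, in particular near $\mS^{m-1}$. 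With this all regularity and integrability hypotheses used above are met and the scaling argument goes through.
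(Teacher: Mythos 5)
Your argument is correct, but it follows a genuinely different route from the paper's. The paper proves the case $M\not\in-2\mN$ by specializing the Cauchy integral formula of \cite{CDBS1} to the fundamental solution $\nu_2^{m|2n}$ of Theorem \ref{fundopl}: after invoking Lemma \ref{lemmaintstelling} and Lemma \ref{xkwad} this collapses to $\int_{SS}(\mE\nu_2^{m|2n})h=-h(0)$, and since $\mE\nu_2^{m|2n}$ is again spherically symmetric, Lemma \ref{xkwad} yields $\int_{SS}h=Ch(0)$ with the constant read off from polynomials or from the explicit normalization of $\nu_2^{m|2n}$. The case $M\in-2\mN$, where that fundamental solution is unavailable, requires a separate and rather delicate argument exploiting the harmonicity of $R^{2-M}$ and a splitting of $h$ into Grassmann-even and Grassmann-odd parts. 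Your dilation argument --- showing $I(v)=\int_{SS,\bold{x}}h(v\bold{x})$ has vanishing derivative via $\tfrac{d}{dv}h(v\bold{x})=v^{-1}\mE g_v$, part $(i)$ of Lemma \ref{lemmaintstelling}, and harmonicity, then letting $v\to0^+$ --- avoids the fundamental solution entirely and, more importantly, treats all values of $M$ (including $M\in-2\mN$, where $\sigma_M=0$) in one stroke; it is the direct superspace analogue of the classical ``spherical averages of a harmonic function are constant in the radius'' proof. Both proofs rest on the same regularity upgrade (the components $h_A$ of a super harmonic function are polyharmonic, hence $C^\infty$, which the paper cites from \cite{DBS6} and you correctly rederive by downward induction on the Grassmann degree), and both use Lemma \ref{lemmaintstelling}; what the paper's version buys is a tighter integration with the fundamental-solution/Cauchy-formula machinery developed elsewhere in the text, while yours buys uniformity in $M$ and a shorter, more self-contained argument. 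All the individual steps you flag (differentiation under the integral sign, the scaling identities $\mE g_v=(\mE h)(v\bold{x})$ and $\nabla^2g_v=v^2(\nabla^2h)(v\bold{x})$, continuity of $\int_{SS}$ under $C^n$-convergence near $\mS^{m-1}$) check out against formulas \eqref{Euler} and \eqref{integSS}.
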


\begin{proof}
In \cite{DBS6} it was proven that the bosonic parts, in expansion \eqref{superftie} of a super harmonic function, are polyharmonic and hence elements of $C^\infty(\Omega)$. This means $\int_{SS}h(\bold{x})$ is well-defined. First we consider the case $M\not\in-2\mN$. We start from theorem 11 in \cite{CDBS1}, which states that for a superfunction $g\in C^1(\Omega)_{m|2n}$ with $\mB^m\subset \Omega$,
\begin{eqnarray*}
\int_{SS}\langle \nabla\nu_2^{m|2n}(\bold{x}),\bold{x} g(\bold{x})\rangle-\int_{SB}\langle \nabla \nu_2^{m|2n}(\bold{x}),\nabla g(\bold{x})\rangle&=&-g(0).
\end{eqnarray*}
By rewriting this, substituting $h$ for $g$, using lemma \ref{lemmaintstelling} $(ii)$ and 
\[\langle \nabla\nu_2^{m|2n}(\bold{x}),\bold{x} \rangle=\sum_{k=1}^{m+2n}(-1)^{[k]}X_k\nabla^k\nu_2^{m|2n}(\bold{x})=\sum_{k=1}^{m+2n}X^k\nabla_k\nu_2^{m|2n}(\bold{x})\]
we obtain
\begin{eqnarray*}
\int_{SS}\left(\mE \nu_2^{m|2n}(\bold{x})\right) h(\bold{x})+\int_{SB}\nu_2^{m|2n}(\bold{x})\nabla^2 h(\bold{x})-\int_{SS}\nu_2^{m|2n}(\bold{x})\mE h(\bold{x})&=&-h(0).
\end{eqnarray*}
Now $\nabla^2h=0$, and because $\nu_2^{m|2n}(\bold{x})=\mu(R^2)$ for some $\mu$ (theorem \ref{fundopl}) and lemma \ref{xkwad}, we find
\begin{eqnarray*}
\int_{SS}\left(\mE \nu_2^{m|2n}(\bold{x})\right) h(\bold{x})-\mu(1)\int_{SS}\mE h&=&-h(0).
\end{eqnarray*}
Lemma \ref{lemmaintstelling} $(i)$ implies that $\int_{SS}\mE h=0$, so finally
\begin{eqnarray*}
\int_{SS}\left(\mE \nu_2^{m|2n}(\bold{x})\right) h(\bold{x})&=&-h(0).
\end{eqnarray*}
As $\mE\nu_2^{m|2n}(\bold{x})$ is still a spherically symmetric function (see (\ref{Eulerspherical})) we conclude that
\begin{eqnarray*}
\int_{SS}h(\bold{x})&=&Ch(0),
\end{eqnarray*}
for some fixed constant. This constant has to be $\frac{2\pi^{M/2}}{\Gamma(M/2)}$, because of the result for polynomials. This constant can also be calculated from theorem \ref{fundopl}. We show this for the case $M$ odd, yielding
\begin{eqnarray*}
\mE\nu_{2}^{m|2n}&=&2(1-M/2)\frac{1}{2\frac{2\pi^{M/2}}{\Gamma(M/2-1)}}R^{2-M}\\
&=&-\frac{\Gamma(M/2)}{2\pi^{M/2}}R^{2-M}.
\end{eqnarray*}

In case $M\in-2\mN$ we observe that the right-hand side is zero since the surface of the supersphere $\sigma_M$ is zero. So we have to prove that $\int_{SS}h(\bold{x})=0$ for $h$ harmonic. When $M\in-2\mN$, $R^{2-M}$ is harmonic (lemma \ref{deltaftie}), also in the origin since $2-M>0$. As $\nabla^2$ is an even operator, each harmonic function $h$ is of the form $h_++h_{-}$ with $h_+$ even, $h_-$ odd in the Grassmann variables and both $h_+$ and $h_-$ harmonic. Using the fact that $R^{2-M}$ is both harmonic and spherically symmetric we can use lemma \ref{lemmaintstelling} $(ii)$ twice and lemma \ref{lemmaintstelling} $(i)$ to calculate
\begin{eqnarray*}
\int_{SS}h_+(\bold{x})&=&\int_{SS}h_+R^{2-M}=\frac{1}{2-M}\int_{SB}\langle\nabla h_+,\nabla R^{2-M}\rangle=\frac{1}{2-M}\int_{SB}\langle\nabla R^{2-M},\nabla h_+\rangle\\
&=&\frac{1}{2-M}\int_{SS} R^{2-M}(\mE h_+)=\frac{1}{2-M}\int_{SS}(\mE h_+)=0.
\end{eqnarray*}
For the odd case we calculate similarly, using lemma \ref{Cauchysphere},
\begin{eqnarray*}
\int_{SS}h_-(\bold{x})&=&\int_{SS}h_-R^{2-M}=\frac{1}{2-M}\int_{SS}h_-\mE R^{2-M}\\
&=&\frac{1}{2-M}\int_{SS}\sum_kX^k(-1)^{[k]}h_-\nabla_kR^{2-M}=\frac{1}{2-M}\int_{SB}\sum_k\nabla^k(-1)^{[k]}h_-\nabla_kR^{2-M}\\
&=&\frac{1}{2-M}\int_{SB}\sum_k(-1)^{[k]}(\nabla^kh_-)\nabla_kR^{2-M}+\frac{1}{2-M}\int_{SB}\sum_kh_-\nabla^k\nabla_kR^{2-M}\\
&=&\frac{1}{2-M}\int_{SB}\sum_k(-1)^{[k]}(-1)^{[k]([k]+1)}(\nabla_kR^{2-M})(\nabla^kh_-)\\
&=&\frac{1}{2-M}\int_{SB}\langle\nabla R^{2-M},\nabla h_- \rangle=\frac{1}{2-M}\int_{SS} R^{2-M}(\mE h_-)\\
&=&\frac{1}{2-M}\int_{SS}(\mE h_-)=0
\end{eqnarray*}
which concludes the proof.
\end{proof}

When $M\in-2\mN$ the integral of a harmonic polynomial on the supersphere vanishes because there is a function $f(R^2)$ which is harmonic on $\mB^m$. This is clearly only possible for negative even dimensions. Moreover, it is the same reason why there is no Fischer decomposition in these cases (see lemma \ref{scalFischer}).

\section{Orthosymplectically invariant quantum problems}
In this section we show how spherically symmetric Schr\"odinger equations in superspace can be solved using harmonic analysis, orthosymplectically invariant functions and dimensional reduction. We can solve the case $M>0$ immediately using the purely bosonic case. Also when $M$ is negative these techniques can be used to create an Ansatz for solutions leading to a one-dimensional differential equation.

When spherically symmetric quantum hamiltonians (such as the harmonic oscillator \cite{DBS3} or the hydrogen atom \cite{MR2395482}) are generalized to superspace, super hamiltonians with an $\mathfrak{osp}(m|2n)$ invariance are obtained. In \cite{MR2395482} the energy eigenvalues and corresponding eigenspaces were determined for the quantum Kepler problem (hydrogen atom) in superspace. This was realized by constructing an $\mathfrak{osp}(2,m+1|2n)$ dynamical supersymmetry for the system. 

In \cite{DBS3} a basis of Clifford-Hermite functions was constructed for the quantum harmonic oscillator in superspace. In the bosonic case, the basis of Clifford-Hermite functions follows from introducing spherical coordinates and using the fact that $\nabla_b^2$, $r^2$ and $\mE_b+m/2$ generate the $\mathfrak{sl}_2$ Lie algebra. In \cite{MR2395482} this $\mathfrak{sl}_2$ algebra (now generated by $\nabla^2$ and  $R^2$) was transformed into the $\mathfrak{so}(2,1)$ algebra, which led together with the $\mathfrak{osp}(m|2n)$-symmetry to $\mathfrak{osp}(2,m+1|2n)$.

In this section we will show how we can use spherically symmetric functions and the $\mathfrak{sl}_2$-properties to solve more general super hamiltonians with orthosymplectic invariance. A general hamiltonian is of the form 
\begin{eqnarray*}
H&=&-\frac{1}{2}\nabla^2+V(\bold{x})
\end{eqnarray*}
with $V$ a general superfunction. Theorem \ref{uniciteit} implies that all hamiltonians with an $\mathfrak{osp}(m|2n)$ invariance have a spherically symmetric function (definition \ref{defftien}) as potential,
\begin{eqnarray}
\label{superham}
H&=&-\frac{1}{2}\nabla^2+V(R^2).
\end{eqnarray}

When we expand the superfunction $f$ in the equation $Hf=Ef$ as in formula \eqref{superftie} we obtain a system of partial differential equations for the functions $f_A$ on $\mR^m$. When $M>0$, the bosonic hamiltonian of the form 
\begin{eqnarray}
\label{bosham}
H_b&=&-\frac{1}{2}\nabla_b^2+V(r_{\uy}^2)
\end{eqnarray}
in $M$ bosonic dimensions, $\uy\in\mR^M$ and $\nabla_b^2=\sum_{j=1}^M\partial_{y_j}^2$ is a special case.

To study the completeness of the solutions of the hamiltonian a suitable Hilbert space structure should be considered. In a forthcoming article the authors will construct such a Hilbert space and show that every suitable Hilbert space contains generalized functions. These generalized functions already appear in the following theorem.

\begin{theorem}
\label{superopl}
If for the case with $M>0$ bosonic variables the function $H_k^bf(r^2_{\uy})\in L_2(\mR^M)$, with $H_k^b\in\cH_{k,M}^b$, is an eigenvector of the $M$-dimensional hamiltonian (\ref{bosham}) with eigenvalue $E$, then
\begin{eqnarray*}
\left[-\frac{1}{2}\nabla^2+V(R^2)\right]H_k f(R^2)&=&EH_kf(R^2),
\end{eqnarray*}
for every $H_k\in\cH_k$ in superspace $\mR^{m|2n}$ with $M=m-2n$. In case $f(r^2_{\uy})$ is $n$ times differentiable $f(R^2)$ is defined in the usual way, if not, $f(R^2)$ is defined formally as as an element of $\cS'(\mR^{m})\otimes\Lambda_{2n}$.
\end{theorem}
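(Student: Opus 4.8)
The plan is to reduce the superspace eigenvalue equation to a single radial ODE that is already furnished by the bosonic hypothesis, using only the algebra morphism \eqref{algmor} together with the action \eqref{laplradharm} of $\nabla^2$ on a spherical harmonic times a spherically symmetric function. First I would extract that radial ODE. In $M$ bosonic dimensions, since $H_k^b$ is harmonic and homogeneous of degree $k$, the product rule gives the bosonic analogue of \eqref{laplradharm},
\begin{equation*}
\nabla_b^2\left(H_k^b f(r^2)\right) = \left[4r^2 f^{(2)}(r^2) + (4k+2M)f^{(1)}(r^2)\right]H_k^b,
\end{equation*}
where the cross term of the product rule equals $2\langle\nabla_b H_k^b, 2\uy f^{(1)}\rangle = 4k f^{(1)} H_k^b$ because $\langle\nabla_b H_k^b,\uy\rangle = \mE_b H_k^b = k H_k^b$, and the pure radial term is \eqref{Laplbosrad}. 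Substituting into $\left[-\frac12\nabla_b^2 + V(r^2)\right]H_k^b f(r^2) = E H_k^b f(r^2)$ and cancelling the nonzero polynomial $H_k^b$ leaves the one-dimensional identity
\begin{equation*}
-\frac12\left[4r^2 f^{(2)}(r^2) + (4k+2M)f^{(1)}(r^2)\right] + V(r^2)f(r^2) = E f(r^2),
\end{equation*}
which depends on the data only through $k$ and the super-dimension $M$.

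Next I would run the same computation in superspace, where $k$ and $M$ are unchanged by assumption. Applying \eqref{laplradharm} to $H_k\in\cH_k$ and the morphism \eqref{algmor} (so that $V(R^2)f(R^2) = (Vf)(R^2)$), I obtain
\begin{equation*}
\left[-\frac12\nabla^2 + V(R^2)\right]H_k f(R^2) = g(R^2)\, H_k,
\end{equation*}
where $g(r^2) = -\frac12\left[4r^2 f^{(2)}(r^2) + (4k+2M)f^{(1)}(r^2)\right] + V(r^2)f(r^2)$ is exactly the function on the left of the radial ODE above. That ODE asserts $g = E f$ as functions on $\mR^+$; feeding this equality of functions through definition \ref{defftien} and \eqref{algmor} once more gives $g(R^2) = E f(R^2)$, whence $\left[-\frac12\nabla^2 + V(R^2)\right]H_k f(R^2) = E H_k f(R^2)$, as claimed. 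The essential point is that the radial ODE, once $H_k^b$ is divided out, no longer remembers the ambient space and transplants verbatim to $\mR^{m|2n}$.

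The algebraic core above is immediate; the real obstacle is regularity. Both \eqref{laplradharm} and definition \ref{defftien} presuppose $f\in C^n$ (in fact $C^{n+2}$), whereas an $L_2$-eigenvector $f$ need not be this smooth near the origin or near singularities of $V$. I would therefore split into two regimes. When $f$ is $n$ times continuously differentiable the manipulations above are classical and the theorem holds as stated. When it is not, I would \emph{define} $f(R^2)$ as the formal element of $\cS'(\mR^m)\otimes\Lambda_{2n}$ given by the finite sum of definition \ref{defftien} with the $f^{(j)}$ understood distributionally, and verify that \eqref{laplradharm} and \eqref{algmor} persist in $\cS'$. The radial ODE continues to hold in the distributional sense --- it is precisely the content of the bosonic eigenvalue equation --- so the cancellation $g(R^2) = E f(R^2)$ survives at the level of distributions, which closes the argument.
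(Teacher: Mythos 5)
Your proposal is correct and follows essentially the same route as the paper: substitute the radial action of the Laplacian (equation \eqref{laplradharm}) into the bosonic eigenvalue equation, divide out $H_k^b$ to isolate a one-dimensional radial identity depending only on $k$ and $M$, transplant it to superspace via \eqref{algmor} and definition \ref{defftien}, and reapply \eqref{laplradharm}. Your added care about the distributional interpretation when $f$ is not $n$ times differentiable makes explicit what the paper only states in one sentence, but it is not a different argument.
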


\begin{proof}
We start from
\begin{eqnarray*}
\left[-\frac{1}{2}\nabla_b^2+V(r^2_{\uy})\right]H_k^bf(r^2_{\uy})&=&EH_k^bf(r^2_{\uy}).
\end{eqnarray*}
Substituting equation (\ref{laplradharm}) for $M$ bosonic variables yields
\begin{eqnarray*}
\left[-2r^2_{\uy}f^{(2)}(r^2_{\uy})-(2k+M)f^{(1)}(r^2_{\uy})+V(r^2_{\uy})f(r^2_{\uy})\right]H_k^b&=&Ef(r^2_{\uy})H_k^b.
\end{eqnarray*}
From this it is clear that the exact form of the spherical harmonic is irrelevant. By using property (\ref{algmor}) we obtain
\begin{eqnarray*}
-2R^2f^{(2)}(R^2)-(2k+M)f^{(1)}(R^2)+V(R^2)f(R^2)&=&Ef(R^2)
\end{eqnarray*}
for any supervector $\bold{x}$ and $R^2=\langle \bold{x},\bold{x}\rangle$. In case $f$ is not sufficiently smooth, this equation should be regarded in distributional sense. If $\bold{x}$ is a supervector with super-dimension $M$, equation (\ref{laplradharm}) yields
\begin{eqnarray*}
\left[-\frac{1}{2}\nabla^2+V(R^2)\right]H_k f(R^2)&=&EH_kf(R^2).
\end{eqnarray*}
\end{proof}

These solutions will turn out to be a complete set once the correct Hilbert space is considered.

To find the multiplicities for the energy levels, it is important to note that the dimension of the space of spherical harmonics of degree $k$ is given by (see \cite{DBS5})
\begin{eqnarray*}
\dim\cH_k&=&\sum_{i=0}^{\min(k,2n)}\binom{2n}{i}\binom{k-i+m-1}{m-1}-\sum_{i=0}^{\min(k-2,2n)}\binom{2n}{i}\binom{k-i+m-3}{m-1}.
\end{eqnarray*}

\section{Funk-Hecke theorem in superspace}

In theorem 7 in \cite{DBS5}, a Funk-Hecke theorem on superspace for polynomials was proven. The classical (bosonic) version of the Funk-Hecke theorem can e.g. be found in \cite{MR0499342, MR1151617}. For $\ux,\uy\in\mR^m$, a continuous function $\phi(\langle \ux,\uy\rangle)$, $H_l^b\in\cH_{l,m}^b$ and $m>1$,
\begin{eqnarray*}
\int_{\mS^{m-1}}\phi(\langle \ux,\uy\rangle)H_{l}^b(\ux)&=&\alpha_{m,l}[\phi](r_y^2)\frac{H_l(\uy)}{r_y^l}
\end{eqnarray*}
holds, with
\begin{eqnarray}
\label{alphar}
\alpha_{m,l}[\phi](r_y^2)&=&\sigma_{m-1}\int_{-1}^1\phi(r_yt)P_l^m(t)(1-t^2)^{\frac{m-3}{2}} dt.
\end{eqnarray}
$P^M_l$ are the Jacobi polynomials $P_l^{\frac{M-3}{2},\frac{M-3}{2}}$, also known as the Gegenbauer polynomials, 
\begin{eqnarray*}P^M_l(t)&=&\frac{1}{\binom{l+M-3}{l}}C^{\frac{M-2}{2}}_l(t).
\end{eqnarray*}

Using the supersphere integration from \cite{CDBS1} it will be possible to extend the existing Funk-Hecke theorem in superspace to more general functions. This is the subject of the current section. We first repeat the result from \cite{DBS5}.

\begin{lemma}
Let $\bold{x},\bold{y}$ be independent super vector variables. Let $H_l \in \cH_l$ and $k\in\mN $, then
\begin{eqnarray}
\label{FHpol}
\int_{SS} \langle \bold{x},\bold{y} \rangle^k H_l(\bold{x}) &=& \alpha_{M,l}[t^k]R_y^{k-l}H_l(\bold{y}) 
\end{eqnarray}
with 
\begin{eqnarray*}
\label{cteFH}
\alpha_{M,l}[t^k] &=& \frac{k!}{(k-l)!} \frac{2 \pi^{\frac{M-1}{2}}}{2^l} \frac{\Gamma(\frac{k-l+1}{2})}{\Gamma(\frac{M+k+l}{2})}
  \quad \mbox{if $k+l$ even and $k \geq l$}\\
&=&0 \quad \mbox{if $k+l$ odd}\\
&=&0 \quad \mbox{if $k<l$}
\end{eqnarray*}
or
\begin{eqnarray*}
\alpha_{M,l}[t^k] &=&\sigma_{M-1}\int_{-1}^1t^kP_l^M(t)(1-t^2)^{\frac{M-3}{2}} dt
  \quad \mbox{if  $M > 1$}.
\end{eqnarray*}
\label{FunkHeckeSuperspace1}
\end{lemma}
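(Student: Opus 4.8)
The plan is to turn the supersphere integral into a single Gegenbauer coefficient by means of the reproducing kernel $F_l(\bold{x},\bold{y})$, after which all three cases of the statement, including the vanishing ones, fall out at once. The starting observation is that, as a polynomial of degree $k$ in $\bold{x}$ with $\bold{y}$ a parameter, $\langle\bold{x},\bold{y}\rangle^k=(R_xR_y)^k\,t^k$ with $t=\langle\bold{x},\bold{y}\rangle/(R_xR_y)$, and that $t^k$ admits a finite Gegenbauer expansion $t^k=\sum_{j} a^{(k)}_j\,C^{(M-2)/2}_j(t)$, the sum running over $j\equiv k \bmod 2$, $0\le j\le k$. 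The explicit expression for $F_l(\bold{x},\bold{y})$ from the preliminaries lets me rewrite each $C^{(M-2)/2}_j(t)$ as a multiple of $(R_xR_y)^{-j}F_j(\bold{x},\bold{y})$, so the whole integrand becomes a combination of the reproducing kernels $F_j(\bold{x},\bold{y})$.

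Now I would integrate term by term against $H_l(\bold{x})$. The reproducing property \eqref{reprkern} gives $\int_{SS,\bold{x}}F_j(\bold{x},\bold{y})H_l(\bold{x})=\delta_{jl}H_l(\bold{y})$, so only the single index $j=l$ survives; since the surviving powers of $R_x$ are all even, the $R^2$-invariance of the supersphere integral legitimately sets $R_x=1$ and leaves a factor $R_y^{k-l}$. This yields
\[
\int_{SS}\langle\bold{x},\bold{y}\rangle^k H_l(\bold{x})=a^{(k)}_l\,\frac{M-2}{2l+M-2}\,\frac{2\pi^{M/2}}{\Gamma(M/2)}\,R_y^{k-l}H_l(\bold{y}),
\]
so that $\alpha_{M,l}[t^k]$ is an explicit multiple of the Gegenbauer coefficient $a^{(k)}_l$. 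The two vanishing cases are now automatic: if $k<l$ or $k+l$ is odd, then $l$ is not among the indices $\{k,k-2,\dots\}$, the term $j=l$ is absent, and the integral is zero. (Equivalently, one may read this off the Pizzetti formula \eqref{Pizzetti}: the integrand is homogeneous of degree $k+l$ in $\bold{x}$, so only the term with $2p=k+l$ contributes and $k+l$ odd kills the integral outright.)

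The main obstacle is the explicit evaluation of the constant. In the present route this is the classical Gegenbauer coefficient $a^{(k)}_l=\big(\int_{-1}^1 t^k C^{(M-2)/2}_l(t)(1-t^2)^{(M-3)/2}\,dt\big)\big/\big(\int_{-1}^1 (C^{(M-2)/2}_l(t))^2(1-t^2)^{(M-3)/2}\,dt\big)$, whose numerator is exactly the integral in the second displayed form of $\alpha_{M,l}$ for $M>1$; evaluating these two Beta-type integrals and collecting the Gamma factors produces the closed form $\frac{k!}{(k-l)!}\frac{2\pi^{(M-1)/2}}{2^l}\frac{\Gamma((k-l+1)/2)}{\Gamma((M+k+l)/2)}$. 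I would cross-check this against the Pizzetti route, where the same constant arises from the combinatorics of repeatedly applying $\nabla^2_{\bold{x}}$ to $\langle\bold{x},\bold{y}\rangle^kH_l(\bold{x})$: each Laplacian either contracts two inner-product factors (yielding $R_y^2$ via $\langle\nabla_{\bold{x}},\bold{y}\rangle\langle\bold{x},\bold{y}\rangle=R_y^2$) or couples one such factor to $H_l$, while $\nabla^2_{\bold{x}}H_l=0$ removes the remaining possibility; tracking the counts gives a telescoping product of the required Gamma factors. Since both sides are ratios of Gamma functions in $M$, it suffices to establish the identity for integers $M>1$, where the classical Funk-Hecke theorem in $\mR^M$ applies verbatim, and then extend to all $M\notin-2\mN$ by continuation in the parameter $M$; this argument simultaneously proves that the two displayed expressions for $\alpha_{M,l}[t^k]$ coincide.
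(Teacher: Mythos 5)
Your argument is sound in substance, but note first that this paper contains no proof of the lemma: it is imported verbatim from theorem 7 of \cite{DBS5}, where the argument is the ``Pizzetti route'' that you relegate to a cross-check --- expand $\int_{SS}\langle\bold{x},\bold{y}\rangle^k H_l(\bold{x})$ by \eqref{Pizzetti}, observe that homogeneity leaves only the term $2p=k+l$ (whence the vanishing for $k+l$ odd), and evaluate $\nabla^{2p}\left(\langle\bold{x},\bold{y}\rangle^k H_l\right)(0)$ by the contraction combinatorics you describe. That computation is manifestly independent of $m$ and $n$ separately, so all of the $M$-dependence sits in the Pizzetti prefactor $\pi^{M/2}/\Gamma(\frac{M+k+l}{2})$, which is precisely why the closed form holds for every $M$ with no restriction. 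Your primary route through the reproducing kernel is genuinely different, and I have checked that it produces the right constant: $a_l^{(k)}\,\frac{M-2}{2l+M-2}\,\frac{2\pi^{M/2}}{\Gamma(M/2)}$ equals $\sigma_{M-1}\binom{l+M-3}{l}^{-1}\int_{-1}^1 t^k C_l^{(M-2)/2}(t)(1-t^2)^{(M-3)/2}\,dt$ by the Gegenbauer normalization integral and the duplication formula, so the two displayed forms of $\alpha_{M,l}[t^k]$ indeed coincide for $M>2$.

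Two caveats keep this from being a complete independent proof. First, the kernel route degenerates outside $M>2$: the factor $\frac{2j+M-2}{M-2}$ in $F_j$, the Gegenbauer parameter $\frac{M-2}{2}$, and the pairing \eqref{reprkern} itself (which rests on the Fischer decomposition, absent for $M\in-2\mN$) all break down, so your continuation-in-$M$ step is not an optional cross-check. But as stated the continuation is anchored in the wrong place: for a genuine superspace with $n>0$ and $M>1$ the ``classical Funk--Hecke theorem in $\mR^M$'' does not apply verbatim, so establishing the identity ``for integers $M>1$'' already requires the dimensional reduction you are trying to prove. The correct anchor is the purely bosonic case $n=0$, $m=M$, combined with the Pizzetti computation showing that the left-hand side depends on $(m,n)$ only through the quotient $\pi^{M/2}/\Gamma(\frac{M+k+l}{2})$; once that is in place the Pizzetti argument is doing all the work and covers all $M$, including $M\in-2\mN$, which your continuation claim omits. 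Second, there is a provenance circularity: the explicit Gegenbauer expression for $F_k(\bold{x},\bold{y})$ quoted in the preliminaries is itself derived in \cite{DBS5} from this very lemma, so invoking it here yields a consistency check rather than an independent proof unless you rederive that expression from scratch.
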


From the definition of $\alpha_{M,l}[t^k]$ it follows that we obtain only even positive powers of $R_y$ in equation \eqref{FHpol}. For general polynomials $p\in\mR[t]$, $\alpha_{M,l}[p(t)]$ is defined by linearity, so the theorem holds for polynomials in $\langle \bold{x},\bold{y}\rangle$.

General functions depending on $\langle \bold{x},\bold{y}\rangle$ can be defined using definition \ref{defftien2}. This means that for an interval $[-a,a]$ and a function $\phi\in C^{2n}([-a,a])$, we put
\begin{eqnarray*}
\phi(\langle \bold{x},\bold{y}\rangle)&=&\sum_{j=0}^{2n}\frac{\langle \uxb,\uyb\rangle^j}{j!}\phi^{(j)}(\langle\ux,\uy\rangle),
\end{eqnarray*}
as an element of $C(\Sigma_a)\otimes\Lambda_{4n}^{\bold{x},\bold{y}}$, with
\begin{eqnarray}
\Sigma_a&=&\{(\ux,\uy)\in\mR^m\times\mR^m|\langle \ux,\uy\rangle\in [-a,a]\}.
\end{eqnarray}
We call such functions super zonal functions. Similarly to theorem \ref{uniciteit} of the spherically symmetric functions we can prove that these are the only superfunctions of two vector variables $(\bold{x},\bold{y})$ which are solutions of the equations
\begin{eqnarray*}
\left(X_i\partial_{X^j}-(-1)^{[i][j]}Y_j\partial_{Y^i}\right)f&=&0,
\end{eqnarray*}
for all $i,j$.

Now we start to prove the Funk-Hecke theorem in superspace for general zonal functions. 
\begin{lemma}
\label{lemmaFH}
Let $\phi\in C^{2n}([-a,a])$. Let $\{p_j\}$, $j\in \mN$ be a sequence of polynomials which converges uniformly to $\phi$, together with their first $2n$ derivatives, i.e.
\begin{eqnarray*}
p_j^{(i)}(t)&\to &\phi^{(i)}(t), \qquad 0\le i\le 2n
\end{eqnarray*}
on $[-a,a]$. Then for a super spherical harmonic $H_l\in\cH_l$
\begin{eqnarray*}
\lim_{j\to\infty}\int_{SS,\bold{x}}p_j^{}(\langle \bold{x},\bold{y}\rangle ) H_l(\bold{x})&=& \int_{SS,\bold{x}}\phi^{}(\langle \bold{x},\bold{y}\rangle)H_l(\bold{x})
\end{eqnarray*}
pointwise as functions of $\uy\in\overline{\mB^m(a)}$ (with $\overline{\mB(a)}$ the closure of the ball with radius $a$ in $\mR^m$) taking values in the Grassmann algebra $\Lambda_{2n}^y$.
\end{lemma}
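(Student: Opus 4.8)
The plan is to reduce everything to a single uniform bound and then apply it to the difference. Setting $\psi_j=p_j-\phi$, linearity of the supersphere integral gives
\[
\int_{SS,\bold{x}}p_j(\langle \bold{x},\bold{y}\rangle)H_l(\bold{x})-\int_{SS,\bold{x}}\phi(\langle \bold{x},\bold{y}\rangle)H_l(\bold{x})=\int_{SS,\bold{x}}\psi_j(\langle \bold{x},\bold{y}\rangle)H_l(\bold{x}),
\]
so it suffices to show the right-hand side tends to $0$. I would in fact establish the stronger claim that this convergence is uniform in $\uy\in\overline{\mB^m(a)}$, which trivially yields the pointwise statement. Since every $p_j$ is a polynomial, $p_j(\langle\bold{x},\bold{y}\rangle)H_l(\bold{x})$ is polynomial in $\bold{x}$ and formulas \eqref{Pizzetti} and \eqref{integSS} agree on it, so I may use the representation \eqref{integSS} consistently for $p_j$, $\phi$ and $\psi_j$.

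The core is an explicit estimate for the supersphere integral of a zonal function in terms of the uniform norms of its first $2n$ derivatives. Inserting the expansion of definition \ref{defftien2},
\[
\psi_j(\langle \bold{x},\bold{y}\rangle)=\sum_{k=0}^{2n}\frac{\langle \uxb,\uyb\rangle^k}{k!}\psi_j^{(k)}(\langle\ux,\uy\rangle),
\]
into \eqref{integSS} and writing $H_l(\bold{x})=\sum_p H_{l,p}$ with $H_{l,p}$ of fermionic degree $p$, I would carry out a degree count. The Berezin integral $\int_B$ is nonzero only on the part of $\uxb$-degree $2n$; in the term indexed by $i$ in \eqref{integSS} the factor $\uxb^{2i}/i!$ contributes degree $2i$, so the surviving contributions satisfy $2i+k+p=2n$. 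The radial operator $(\partial/\partial r^2)^i$ acting on $\psi_j^{(k)}(r\langle\underline{\xi},\uy\rangle)$ raises the order of $\psi_j$ by at most $i$ (by the chain rule, each radial derivative either hits this factor or the bounded factors $r^{m-2}$ and the bosonic part of $H_{l,p}$), so the highest derivative of $\psi_j$ that can occur has order $k+i=2n-i-p\le 2n$. This is precisely why $C^{2n}$-smoothness together with convergence of the first $2n$ derivatives is the correct hypothesis, and it also guarantees that $\int_{SS,\bold{x}}\phi(\langle\bold{x},\bold{y}\rangle)H_l(\bold{x})$ is well-defined.

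With this in hand the conclusion is a boundedness argument. Each surviving term is an integral over the compact sphere $\mS^{m-1}$ (which has finite surface measure) of a product of some $\psi_j^{(k')}(\langle\underline{\xi},\uy\rangle)$ with $0\le k'\le 2n$, bounded factors coming from the polynomial bosonic coefficients of $H_l$ and from $r^{m-2}$ evaluated at $r=1$, and fixed fermionic monomials in $\uyb$. For $\uy\in\overline{\mB^m(a)}$ and $\underline{\xi}\in\mS^{m-1}$ one has $\langle\underline{\xi},\uy\rangle\in[-a,a]$, so each $\psi_j^{(k')}(\langle\underline{\xi},\uy\rangle)$ is controlled by $\|\psi_j^{(k')}\|_{\infty,[-a,a]}$. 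Collecting the finitely many terms yields a constant $C=C(H_l,m,n,a)$, independent of $j$ and $\uy$, with
\[
\Big|\int_{SS,\bold{x}}\psi_j(\langle \bold{x},\bold{y}\rangle)H_l(\bold{x})\Big|\le C\max_{0\le k'\le 2n}\|\psi_j^{(k')}\|_{\infty,[-a,a]},
\]
understood coefficient-wise for each monomial in $\uyb$, since the integral takes values in $\Lambda_{2n}^y$. As $p_j^{(i)}\to\phi^{(i)}$ uniformly on $[-a,a]$ for $0\le i\le 2n$, the right-hand side tends to $0$, proving the lemma, in fact uniformly in $\uy$.

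I expect the main obstacle to be the bookkeeping in the degree count: one must track simultaneously the fermionic degrees produced by $\uxb^{2i}$, by the powers $\langle\uxb,\uyb\rangle^k$ and by $H_l$, and verify via the chain rule that the radial derivatives never push the order of $\psi_j$ beyond $2n$. Everything else is a routine continuity estimate over the compact sphere.
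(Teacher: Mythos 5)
Your proof is correct and follows essentially the same route as the paper: insert the fermionic Taylor expansion of the zonal function into formula \eqref{integSS}, observe via the degree count that only derivatives of order at most $2n$ of $\phi$ (respectively $p_j$) survive the Berezin integral, and conclude from the uniform convergence of those derivatives over the compact sphere. The paper states this in two lines; your version merely makes the bookkeeping and the explicit bound (and the resulting uniformity in $\uy$) precise, which is a harmless strengthening.
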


\begin{proof}
For a general zonal function $\Psi(\langle \bold{x},\bold{y}\rangle)$ (so also for a polynomial), we can write the integration over the supersphere (\ref{integSS}) as
\begin{eqnarray*}
\int_{SS,\bold{x}}\Psi(\langle \bold{x},\bold{y}\rangle)H_l&=&\sum_{2j+k\le 2n}\int_{\mS^{m-1}}d\underline{\xi}\int_B\frac{\uxb^{2j}\langle \uxb,\uyb\rangle^k}{j!k!}\left[\left(\frac{\partial}{\partial r^2}\right)^jr^{m-2}\Psi^{(k)}(\langle\ux,\uy\rangle)H_l\right]_{r=1}.
\end{eqnarray*}
This means that for $|\uy|\le a$ the convergence follows from the uniform convergence of the first $2n$ derivatives.
\end{proof}

Now we extend the notation from \eqref{alphar}, which is only defined for $m>1$, to all dimensions for polynomials, $\alpha_{M,l}[t^k](u^2)=\alpha_{M,l}(t^k)u^k$. Definition \ref{defftien} then implies that $\alpha_{M,l}[t^k](R_y^2)=\alpha_{M,l}(t^k)R_y^k$.

\begin{corollary}
\label{corFH}
For a sequence $p_j$ and $\phi$ as in lemma \ref{lemmaFH}, one has
\begin{eqnarray*}
\int_{SS,\bold{x}}\phi^{}(\langle \bold{x},\bold{y}\rangle)H_l(\bold{x})&=&\left(\lim_{j\to\infty}\alpha_{M,l}\left[p_j( t)\right](R_y^2)\right)\frac{H_l(\bold{y})}{R_y^l},
\end{eqnarray*}
pointwise for $\uy\in\overline{\mB^m(a)}$.
\end{corollary}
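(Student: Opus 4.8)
The plan is to read the corollary off directly from the two preceding results: the polynomial Funk--Hecke theorem (Lemma \ref{FunkHeckeSuperspace1}) applied to each $p_j$, and the approximation Lemma \ref{lemmaFH}. First I would note that both $\int_{SS,\bold{x}}(\,\cdot\,)H_l(\bold{x})$ and the functional $\alpha_{M,l}$ are linear, so equation \eqref{FHpol} extends from monomials $t^k$ to every polynomial. Hence for each index $j$,
\begin{eqnarray*}
\int_{SS,\bold{x}}p_j(\langle \bold{x},\bold{y}\rangle)H_l(\bold{x})&=&\alpha_{M,l}\left[p_j(t)\right](R_y^2)\,\frac{H_l(\bold{y})}{R_y^l}.
\end{eqnarray*}
Because the nonzero contributions to $\alpha_{M,l}[t^k]$ occur only for $k\ge l$ with $k-l$ even, every surviving term equals $\alpha_{M,l}(t^k)R_y^{k-l}H_l(\bold{y})$, an honest superpolynomial; so the right-hand side is regular in $\uy$ (including at $r_y=0$) in spite of the formal factor $R_y^{-l}$.

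Next I would let $j\to\infty$. By Lemma \ref{lemmaFH} the left-hand side converges pointwise on $\overline{\mB^m(a)}$ to $\int_{SS,\bold{x}}\phi(\langle \bold{x},\bold{y}\rangle)H_l(\bold{x})$, giving
\begin{eqnarray*}
\int_{SS,\bold{x}}\phi(\langle \bold{x},\bold{y}\rangle)H_l(\bold{x})&=&\lim_{j\to\infty}\left(\alpha_{M,l}\left[p_j(t)\right](R_y^2)\,\frac{H_l(\bold{y})}{R_y^l}\right).
\end{eqnarray*}
Since the factor $H_l(\bold{y})/R_y^l$ is independent of $j$, all that remains is to move the limit onto $\alpha_{M,l}[p_j(t)](R_y^2)$, which will yield exactly the asserted formula.

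The main point, and the only step needing genuine care, is justifying this last interchange, i.e. showing that $\lim_{j\to\infty}\alpha_{M,l}[p_j(t)](R_y^2)$ exists as a spherically symmetric superfunction in $\uy$. For $M>1$ this is transparent: by \eqref{alphar} and linearity one has $\alpha_{M,l}[p_j(t)](R_y^2)=\sigma_{M-1}\int_{-1}^1 p_j(R_y t)P_l^M(t)(1-t^2)^{\frac{M-3}{2}}\,dt$, and the hypothesis $p_j^{(i)}\to\phi^{(i)}$ uniformly for $0\le i\le 2n$ forces $p_j(R_y t)\to\phi(R_y t)$ uniformly in the sense of definition \ref{defftien2} (for $|\uy|\le a$ the body argument $r_y t$ stays in $[-a,a]$), so the integral converges and the limit is $\alpha_{M,l}[\phi](R_y^2)$. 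For the remaining dimensions $M\le 1$ the integral representation is unavailable, so here I would instead argue from the explicit $\Gamma$-quotient for $\alpha_{M,l}(t^k)$ in Lemma \ref{FunkHeckeSuperspace1}, using that the uniform control of the first $2n$ derivatives of $p_j$ governs the finitely many coefficient functions that actually enter $\alpha_{M,l}[p_j(t)](R_y^2)$ for $r_y\le a$, or, more cleanly, by dimensional continuation in $M$ of the $M>1$ identity since $\alpha_{M,l}(t^k)$ is meromorphic in $M$. Once this limit is known to exist, multiplication by the fixed Grassmann-valued element $H_l(\bold{y})/R_y^l$ is continuous and hence commutes with $\lim_{j\to\infty}$, completing the argument.
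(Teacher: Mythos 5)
Your proposal is correct and follows essentially the same route as the paper's own (one-line) proof: apply the polynomial Funk--Hecke theorem of Lemma \ref{FunkHeckeSuperspace1}, extended by linearity to all polynomials, to each $p_j$, and then pass to the limit using Lemma \ref{lemmaFH}. The extra care you devote to the existence of $\lim_{j\to\infty}\alpha_{M,l}\left[p_j(t)\right](R_y^2)$ is material the paper handles only afterwards (in the remark following the corollary and, for $M>1$, in Lemma \ref{interpretatieFH}); note that the convergence of that sequence can be read off directly from the $j$-independence of the left-hand side together with the polynomial identity, which sidesteps the admittedly sketchy $\Gamma$-quotient/dimensional-continuation argument you propose for $M\le 1$.
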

\begin{proof}
This follows immediately from lemma \ref{lemmaFH} together with the Funk-Hecke theorem for polynomials in lemma \ref{FunkHeckeSuperspace1}.
\end{proof}

Corollary \ref{corFH} implies that the limit on the right-hand side in fact converges and does not depend on the choice of $p_j$, so we can define
\begin{definition}
\label{alphaM}
Let $\phi\in C^{2n}([-a,a])$ and $\uy\in \mB^m(a)$. Then $\alpha_{M,l}[\phi](R_y^2)$ is given by
\begin{eqnarray*}
\alpha_{M,l}[\phi](R_y^2)&=&\lim_{j\to\infty }\alpha_{M,l}[p_j](R_y^2),
\end{eqnarray*}
with $p_j(t)$ a sequence of polynomials for which $p_j^{(i)}\to\phi^{(i)}$, for $0\le i\le 2n$, uniformly on $[-a,a]$.
\end{definition}

The following lemma implies that this is well-defined, since for $M>1$, $\alpha_{M,l}[\phi](R_y^2)$ is already defined by equation \eqref{alphar} and definition \ref{defftien}.
\begin{lemma}
\label{interpretatieFH}
For $M>1$, $\phi\in C^{2n}([-a,a])$, $\uy\in\mB^m( a)$ and with $\alpha_{M,l}[\phi](R_y^2)$ as defined in lemma \ref{alphaM}, the following relations hold,
\begin{eqnarray*}
\alpha_{M,l}[\phi](R^2_y)&=&\sigma_{M-1}\int_{-1}^1\phi(R_y\,t)P_l^M(t)(1-t^2)^{\frac{M-3}{2}}\\
&=&\sum_{k=0}^n\frac{(-1)^k\uyb^{2k}}{k!}\left(\alpha_{M,l}[\phi]\right)^{(k)}(r^2).
\end{eqnarray*}
\end{lemma}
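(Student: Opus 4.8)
The strategy is to verify all three expressions for polynomials and then pass to the limit $p_j\to\phi$ supplied by Definition \ref{alphaM}.

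Fix a polynomial $p$; by linearity it suffices to take $p(s)=s^k$. Here Definition \ref{defftien2} applied to the superfunction $R_y t$ (body $rt$, nilpotent part $t(R_y-r)$) reduces to ordinary polynomial evaluation, so $p(R_y t)=(R_y t)^k=R_y^k t^k$, the scalar $t$ passing freely through the even superfunction $R_y^k$. Hence
\[
\sigma_{M-1}\int_{-1}^1 p(R_y t)P_l^M(t)(1-t^2)^{\frac{M-3}{2}}dt=R_y^k\,\sigma_{M-1}\int_{-1}^1 t^k P_l^M(t)(1-t^2)^{\frac{M-3}{2}}dt=\alpha_{M,l}[t^k]\,R_y^k,
\]
where the last step is Lemma \ref{FunkHeckeSuperspace1} for $M>1$. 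By the extended notation $\alpha_{M,l}[t^k](R_y^2)=\alpha_{M,l}[t^k]R_y^k$ this is exactly $\alpha_{M,l}[p](R_y^2)$, and that same extended notation is, by construction, Definition \ref{defftien} applied to the scalar function $\alpha_{M,l}[p]$. Thus for every polynomial the integral, the quantity $\alpha_{M,l}[p](R_y^2)$, and the Definition \ref{defftien} expansion all coincide.

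Now let $p_j\to\phi$ as in the hypothesis. The middle expression $\alpha_{M,l}[p_j](R_y^2)$ converges to $\alpha_{M,l}[\phi](R_y^2)$ by Definition \ref{alphaM}. For the integral, expanding $p_j(R_y t)$ and $\phi(R_y t)$ through Definition \ref{defftien2} writes each integrand as a finite $\Lambda_{2n}^y$-linear combination of terms $p_j^{(i)}(rt)$, resp. $\phi^{(i)}(rt)$, with $0\le i\le 2n$; since $r<a$ and $|rt|\le r$ for $\uy\in\mB^m(a)$, the uniform convergence $p_j^{(i)}\to\phi^{(i)}$ on $[-a,a]$ yields uniform convergence in $t\in[-1,1]$ of every Grassmann component, and integrability of the weight $(1-t^2)^{\frac{M-3}{2}}$ for $M>1$ lets us pass to the limit under the integral. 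For the Definition \ref{defftien} expansion, differentiating the scalar representation \eqref{alphar} (with $m$ replaced by $M$) under the integral sign gives $(\alpha_{M,l}[p_j])^{(k)}(r^2)\to(\alpha_{M,l}[\phi])^{(k)}(r^2)$ for $0\le k\le n$, since the required derivatives of $\phi$ up to order $2n$ converge uniformly, so the finite sum converges termwise. As the three sequences agree for every $j$ and each converges, the three limits agree, which is the assertion.

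The principal obstacle is the rigorous interchange of the limit with the $t$-integration and with the derivatives, carried out component by component in the Grassmann algebra. The key observation is that Definition \ref{defftien2} for $\phi(R_y t)$ calls on derivatives of $\phi$ up to order $2n$; this is exactly why uniform convergence of all $p_j^{(i)}$ up to order $2n$ (and hence $\phi\in C^{2n}$) is the right hypothesis. Finally, integrability of $(1-t^2)^{\frac{M-3}{2}}$, which fails for $M\le1$, is what confines the integral representation to $M>1$, the smaller super-dimensions being covered by Definition \ref{alphaM} directly.
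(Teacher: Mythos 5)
Your proof is correct and follows essentially the same route as the paper: both arguments reduce to the polynomial case via Lemma \ref{FunkHeckeSuperspace1} and the extended notation $\alpha_{M,l}[t^k](R_y^2)=\alpha_{M,l}[t^k]R_y^k$, and then use the uniform convergence of $p_j^{(i)}\to\phi^{(i)}$ to interchange the limit with the Grassmann expansion, the $t$-integration (where integrability of $(1-t^2)^{(M-3)/2}$ for $M>1$ enters), and the $r_y^2$-derivatives. The only cosmetic difference is that you verify the three-way equality for polynomials first and then pass each expression to the limit separately, whereas the paper chains the two equalities directly through the limit; the substance is identical.
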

\begin{proof}
The first equality follows from
\begin{eqnarray*}
& &\lim_{j\to\infty}\sigma_{M-1}\int_{-1}^1p_j(R_y\,t)P_l^M(t)(1-t^2)^{\frac{M-3}{2}}\\
&=&\lim_{j\to\infty}\sum_{k=0}^n(-1)^k\frac{\uyb^{2k}}{k!}\sigma_{M-1}\int_{-1}^1\left(\frac{\partial}{\partial r_y^2}\right)^kp_j(r_y\,t)P_l^M(t)(1-t^2)^{\frac{M-3}{2}}\\
&=&\sigma_{M-1}\int_{-1}^1\phi(R_y\,t)P_l^M(t)(1-t^2)^{\frac{M-3}{2}}
\end{eqnarray*}
which follows from the uniform convergence of the first $n$ derivatives. The second equality follows from the smoothness of the integrand
\begin{eqnarray*}
\sigma_{M-1}\int_{-1}^1\phi(R_y\,t)P_l^M(t)(1-t^2)^{\frac{M-3}{2}}&=&\sum_{k=0}^n(-1)^k\frac{\uyb^{2k}}{k!}\sigma_{M-1}\int_{-1}^1\left(\frac{\partial}{\partial r_y^2}\right)^k\phi(r_y\,t)P_l^M(t)(1-t^2)^{\frac{M-3}{2}}\\
&=&\sum_{k=0}^n(-1)^k\frac{\uyb^{2k}}{k!}\left(\frac{\partial}{\partial r_y^2}\right)^k\left(\alpha_{M,l}[\phi](r_y^2)\right).
\end{eqnarray*}
\end{proof}

Now we can state the general Funk-Hecke theorem in superspace.

\begin{theorem}
Let $\bold{x},\bold{y}$ be independent vector variables. Let $H_l \in \cH_l$ be a spherical harmonic of degree $l$ and $\phi\in C^{2n}([-a,a])$. Then one has
\[
\int_{SS,\bold{x}} \phi(\langle \bold{x},\bold{y} \rangle) H_l(\bold{x}) = \alpha_{M,l}[\phi](R_y^2) \, \frac{H_l(\bold{y})}{R_y^l}
\]
for $\uy\in \mB^m(a)$, with $\alpha_{M,l}(\phi(R_y\,t))$ as defined in definition \ref{alphaM} or lemma \ref{interpretatieFH} for $M>1$.
\label{FunkHeckeSuperspace2}
\end{theorem}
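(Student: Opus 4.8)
The plan is to recognize that nearly all the substantive work has already been carried out in the preceding results, so that the theorem follows by combining Corollary \ref{corFH} with Definition \ref{alphaM}. The one genuine analytic ingredient I would secure first is the existence of an approximating sequence meeting the hypotheses of Lemma \ref{lemmaFH} and Definition \ref{alphaM}: for $\phi\in C^{2n}([-a,a])$ I need polynomials $p_j$ with $p_j^{(i)}\to\phi^{(i)}$ uniformly on $[-a,a]$ for every $0\le i\le 2n$. This I would produce by applying the Weierstrass approximation theorem to the continuous function $\phi^{(2n)}$, obtaining polynomials $q_j\to\phi^{(2n)}$ uniformly, and then integrating $q_j$ exactly $2n$ times, fixing the constants of integration so that $p_j^{(i)}(-a)=\phi^{(i)}(-a)$ for $0\le i\le 2n-1$. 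Since the interval is bounded, uniform convergence propagates to each successive antiderivative, yielding $p_j^{(i)}\to\phi^{(i)}$ uniformly for all $0\le i\le 2n$.

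With such a sequence in hand, I would simply invoke Corollary \ref{corFH}, which already states
\[
\int_{SS,\bold{x}}\phi(\langle\bold{x},\bold{y}\rangle)H_l(\bold{x})=\left(\lim_{j\to\infty}\alpha_{M,l}[p_j(t)](R_y^2)\right)\frac{H_l(\bold{y})}{R_y^l}
\]
pointwise for $\uy\in\overline{\mB^m(a)}$. By Definition \ref{alphaM} the bracketed limit is precisely $\alpha_{M,l}[\phi](R_y^2)$, and Corollary \ref{corFH} already guarantees that this limit exists and is independent of the chosen sequence $p_j$. Substituting gives the claimed identity
\[
\int_{SS,\bold{x}}\phi(\langle\bold{x},\bold{y}\rangle)H_l(\bold{x})=\alpha_{M,l}[\phi](R_y^2)\,\frac{H_l(\bold{y})}{R_y^l}
\]
for $\uy\in\mB^m(a)$.

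For the case $M>1$ I would then appeal to Lemma \ref{interpretatieFH} in order to replace the limit-definition of $\alpha_{M,l}[\phi](R_y^2)$ by the explicit integral $\sigma_{M-1}\int_{-1}^1\phi(R_y\,t)P_l^M(t)(1-t^2)^{(M-3)/2}\,dt$, thereby recovering the familiar Funk--Hecke form and matching the classical bosonic statement under dimensional substitution.

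Since the heavy lifting (the polynomial Funk--Hecke identity in Lemma \ref{FunkHeckeSuperspace1}, the convergence of the supersphere integrals in Lemma \ref{lemmaFH}, and the well-definedness of $\alpha_{M,l}[\phi]$) is already complete, I do not anticipate any serious obstacle. The only point demanding care is confirming that a $C^{2n}$ function can indeed be approximated together with its first $2n$ derivatives by polynomials on $[-a,a]$, which the integrate-back argument above settles cleanly; everything else is a direct assembly of the established lemmas.
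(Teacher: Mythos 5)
Your proposal is correct and follows essentially the same route as the paper: apply Weierstrass to $\phi^{(2n)}$, integrate back $2n$ times to obtain a sequence $\{p_j\}$ converging to $\phi$ together with its first $2n$ derivatives, and conclude via Corollary \ref{corFH} and Definition \ref{alphaM}. Your explicit fixing of the constants of integration so that $p_j^{(i)}(-a)=\phi^{(i)}(-a)$ is a detail the paper leaves implicit, but the argument is the same.
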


\begin{proof}
Because $\phi^{(2n)}\in C([-a,a])$ and by the Weierstrass approximation theorem there is a sequence of polynomials $\{q_j\}$ for which $q_j(t)\to\phi^{(2n)}(t)$ uniformly on $[-a,a]$. By integrating $2n$ times we obtain a sequence of polynomials $\{p_j\}$ for which $p_j^{(i)}\to\phi^{(i)}$ uniformly on $[-a,a]$ for $0\le i\le 2n$. The theorem then follows from corollary \ref{corFH}.
\end{proof}

\section{Bochner's relations and the Mehler formula for the super Fourier transform}

\subsection{Bochner's periodicity relations}

Bochner's relations in bosonic analysis give an expression for the Fourier transform of a function $H_k^b\psi(r^2)$, with $H_k^b\in\cH_{k,m}^b$, in terms of the Hankel transform, see e.g. \cite{MR1151617}.
\begin{definition}
The Hankel transform of a function $f\in\cS(\mR^+)$ is given by
\begin{eqnarray*}
\cH_{\nu}[f](u)&=&\int_{0}^\infty f(r)\frac{J_{\nu}(ru)}{(ru)^\nu}r^{2\nu+1}dr,
\end{eqnarray*}
with $J_{\nu}$ the Bessel function of the first kind of order $\nu$ for $\nu>-\frac{1}{2}$.
\end{definition}

For convenience we use the following transformation,
\begin{eqnarray*}
\cF_{\nu}[\psi](u^2)&=&\cH_{\nu}[\psi\circ\Upsilon](u),
\end{eqnarray*}
with $\Upsilon(u)=u^2$. The classical Bochner's relations are given by

\begin{theorem}
For $H_k^b\in\cH_{k,m}^b$, $\psi\in\cS(\mR_+)$ and with $r^2=\langle\ux,\ux\rangle$ and $u^2=\langle \uy,\uy\rangle$, the Fourier transform of $H_k^b\psi(r^2)$ is given by
\begin{eqnarray*}
\cF^\pm_m[H_k^b(\ux)\psi(r^2)](\uy)&=&(\pm i)^kH_k^b(\uy)\cF_{k+\frac{m}{2}-1}[\psi](u^2).
\end{eqnarray*}
\end{theorem}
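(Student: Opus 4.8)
The assertion is the classical bosonic Bochner relation, so the plan is to reduce the $m$-dimensional Fourier integral to a one-dimensional Hankel transform by using the Funk--Hecke theorem for the angular part and the Poisson--Gegenbauer integral representation of the Bessel function for the radial part. I would first pass to spherical coordinates $\ux = r\underline{\xi}$, $\underline{\xi}\in\mS^{m-1}$, and use that $H_k^b$ is homogeneous of degree $k$, $H_k^b(r\underline{\xi})=r^kH_k^b(\underline{\xi})$. Writing $r_y=|\uy|$ (so $u^2=r_y^2$) and splitting the integral by Fubini (legitimate since $\psi\in\cS$), this gives
\[
\cF^\pm_m[H_k^b\psi(r^2)](\uy)=(2\pi)^{-m/2}\int_0^\infty r^{m-1+k}\psi(r^2)\left[\int_{\mS^{m-1}}e^{\pm i r\langle\underline{\xi},\uy\rangle}H_k^b(\underline{\xi})\,d\underline{\xi}\right]dr.
\]
The inner integral is exactly of Funk--Hecke type with $\phi(s)=e^{\pm i r s}$, so applying the classical Funk--Hecke theorem \eqref{alphar} (extended to complex continuous $\phi$ by linearity in its real and imaginary parts, valid for $m>1$) collapses the angular integration to $\alpha_{m,k}[\phi](r_y^2)\,H_k^b(\uy)/r_y^k$.

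\textbf{Evaluating the transform.} The next step is to compute $\alpha_{m,k}[e^{\pm i r\,\cdot}](r_y^2)$. Writing $P_k^m=C_k^{\nu}/\binom{k+m-3}{k}$ with $\nu:=\tfrac{m-2}{2}$ and invoking the Poisson--Gegenbauer representation
\[
\int_{-1}^1 e^{izt}C_k^{\nu}(t)(1-t^2)^{\nu-1/2}\,dt=\frac{\pi\,2^{1-\nu}\,i^k\,\Gamma(k+2\nu)}{k!\,\Gamma(\nu)}\,\frac{J_{k+\nu}(z)}{z^{\nu}},
\]
the angular factor becomes a multiple of $J_{k+m/2-1}(r r_y)/(r r_y)^{\nu}$; the substitution $t\mapsto -t$ with $C_k^\nu(-t)=(-1)^kC_k^\nu(t)$ shows the ``$-$'' case produces $(-i)^k$, yielding the uniform factor $(\pm i)^k$. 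Setting $\mu:=k+\tfrac m2-1$ (so $J_{k+\nu}=J_\mu$) and rewriting $(rr_y)^{-\nu}=(rr_y)^{-\mu}(rr_y)^{k}$, the power of $r$ matches $m-1+2k=2\mu+1$, so the remaining radial integral is
\[
r_y^{k}\int_0^\infty\psi(r^2)\,\frac{J_\mu(r r_y)}{(r r_y)^\mu}\,r^{2\mu+1}\,dr=r_y^{k}\,\cF_{\mu}[\psi](r_y^2),
\]
precisely the Hankel transform in the statement, while the surviving $r_y^k$ cancels the $r_y^{-k}$ coming from Funk--Hecke.

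\textbf{The constant, and the main obstacle.} What remains is bookkeeping of the scalar factors, and this is the one place where genuine care is needed. Collecting $(2\pi)^{-m/2}$, the Funk--Hecke constant $\sigma_{m-1}=2\pi^{(m-1)/2}/\Gamma(\tfrac{m-1}{2})$ (which is the measure of $\mS^{m-2}$, \emph{not} of $\mS^{m-1}$ --- the likeliest place to slip), the Gegenbauer normalization $\binom{k+m-3}{k}^{-1}=k!\,\Gamma(m-2)/\Gamma(k+m-2)$, and the prefactor of the Bessel integral, the factors $\Gamma(k+m-2)$ and $k!$ cancel so that all $k$-dependence disappears. The residual constant is $2^{3-m}\sqrt{\pi}\,\Gamma(m-2)\big/[\Gamma(\tfrac{m-1}{2})\Gamma(\tfrac m2-1)]$, which the Legendre duplication formula $\Gamma(m-2)=2^{m-3}\pi^{-1/2}\Gamma(\tfrac m2-1)\Gamma(\tfrac{m-1}{2})$ collapses to exactly $1$. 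Hence the identity holds with no spurious constant, which is the claim. The main difficulty is therefore not conceptual but combinatorial: keeping the Gamma factors and the powers of $2$ and $\pi$ straight so that they telescope to $1$; the analytic points (Fubini, convergence of the Hankel integral for $\psi\in\cS$, and $\nu>-\tfrac12$) are routine.
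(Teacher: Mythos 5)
Your argument is correct, and it supplies a proof that the paper itself omits: the statement is quoted there as a classical result with a reference to Howe--Tan, so there is no in-paper proof to compare against. Your route (spherical coordinates, Funk--Hecke for the angular integral, the Poisson--Gegenbauer representation for $\alpha_{m,k}[e^{\pm ivt}]$) is the standard one and is exactly the computation the paper implicitly relies on in its superspace analogue, where Lemma \ref{SFH} reduces the super Fourier kernel to $\alpha_{M,k}[\exp(ivt)]$ and then invokes precisely the Bessel-function evaluation you derive; I checked your constant bookkeeping and the residual factor does telescope to $1$ via the duplication formula. The only caveat worth recording is that the Funk--Hecke step needs $m>1$ (the case $m=1$, where $\cH_{k,1}^b$ is trivial for $k\ge 2$, must be checked directly), a restriction the theorem's statement does not make explicit.
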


We will show how the super Fourier transform of an orthosymplectically invariant function, multiplied with a spherical harmonic, can equally be expressed in terms of the classical Hankel transform for $M>0$. We start by proving the following application of the Funk-Hecke theorem in superspace.
\begin{lemma}
\label{SFH}
For $M>1$ and $H_k\in\cH_k$, the following relation holds, with $v\in\mR^+$,
\begin{eqnarray*}
\int_{SS,\bold{x}}\exp(iv\langle \bold{x},\bold{y}\rangle)H_k(\bold{x})&=&i^k(2\pi)^{\frac{M}{2}}(vR_y)^{1-\frac{M}{2}}J_{\frac{M}{2}+k-1}(vR_y)\frac{H_k(\bold{y})}{R_y^k}.
\end{eqnarray*}
\end{lemma}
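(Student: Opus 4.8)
The plan is to recognize the left-hand side as a direct instance of the general super Funk--Hecke theorem. Taking $\phi(t)=\exp(ivt)$, which is entire and hence lies in $C^{2n}([-a,a])$ for every $a>0$, Theorem \ref{FunkHeckeSuperspace2} applies and yields
\[
\int_{SS,\bold{x}}\exp(iv\langle \bold{x},\bold{y}\rangle)H_k(\bold{x})=\alpha_{M,k}[\phi](R_y^2)\,\frac{H_k(\bold{y})}{R_y^k}
\]
for $\uy$ in any ball $\mB^m(a)$, and since $a$ is arbitrary this covers all $\uy$. Thus the statement reduces to the purely scalar identity $\alpha_{M,k}[\exp(iv\,\cdot\,)](R_y^2)=i^k(2\pi)^{M/2}(vR_y)^{1-M/2}J_{M/2+k-1}(vR_y)$, and no information about the explicit form of $H_k$ beyond what Funk--Hecke supplies is needed.

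To evaluate the coefficient I would invoke the integral representation of Lemma \ref{interpretatieFH}, valid because $M>1$, namely $\alpha_{M,k}[\phi](R_y^2)=\sigma_{M-1}\int_{-1}^1 \exp(ivR_y t)P_k^M(t)(1-t^2)^{(M-3)/2}\,dt$. Writing $P_k^M=C_k^{(M-2)/2}/\binom{k+M-3}{k}$ and setting $\nu=M/2-1$, $z=vR_y$, the remaining integral is exactly the classical Gegenbauer--Poisson integral representation of the Bessel function,
\[
\int_{-1}^1 \exp(izt)(1-t^2)^{\nu-1/2}C_k^{\nu}(t)\,dt=\frac{\pi\,2^{1-\nu}\,i^k\,\Gamma(k+2\nu)}{k!\,\Gamma(\nu)}\,z^{-\nu}J_{k+\nu}(z),
\]
which reproduces the factors $i^k$, $(vR_y)^{1-M/2}$ and $J_{M/2+k-1}(vR_y)$ at once. (The product $(vR_y)^{1-M/2}J_{M/2+k-1}(vR_y)$ is an even power series in $vR_y$, so it is genuinely a function of $R_y^2$ and the notation of definition \ref{defftien} makes sense.)

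The only remaining work, and the main obstacle, is the bookkeeping of the constant. Collecting $\sigma_{M-1}=2\pi^{(M-1)/2}/\Gamma((M-1)/2)$, the normalization $1/\binom{k+M-3}{k}=k!\,\Gamma(M-2)/\Gamma(k+M-2)$, and the factor $\pi\,2^{2-M/2}\,\Gamma(k+M-2)/(k!\,\Gamma(M/2-1))$ from the integral (with $2\nu=M-2$), the entire $k$-dependence cancels and leaves the constant $2^{3-M/2}\pi^{(M+1)/2}\Gamma(M-2)/\big(\Gamma((M-1)/2)\Gamma(M/2-1)\big)$. Applying the Legendre duplication formula in the form $\Gamma(M-2)=2^{M-3}\pi^{-1/2}\Gamma(M/2-1)\Gamma((M-1)/2)$ cancels all gamma factors and collapses this to $2^{M/2}\pi^{M/2}=(2\pi)^{M/2}$, which is precisely the claimed normalization. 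Since both sides are built from the same scalar function of $R_y^2$ through definition \ref{defftien} and the bodies have just been shown to agree, the identity holds as an equality of superfunctions.
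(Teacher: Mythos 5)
Your proposal is correct and follows essentially the same route as the paper: both apply the super Funk--Hecke theorem \ref{FunkHeckeSuperspace2} with $\phi(t)=\exp(ivt)$ and then identify the scalar coefficient $\alpha_{M,k}[\exp(ivt)](u^2)$ as $i^k(2\pi)^{M/2}(vu)^{1-M/2}J_{M/2+k-1}(vu)$. The only difference is that the paper obtains this coefficient by citing the classical bosonic Bochner relations, whereas you derive it explicitly from the Gegenbauer--Poisson integral and the Legendre duplication formula --- a correct, slightly more self-contained version of the same step.
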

\begin{proof}
First of all we note that the right-hand side is well-defined, since $J_{\frac{M}{2}+k-1}$ is analytic on $\mR^+$. Using theorem \ref{FunkHeckeSuperspace2} and lemma \ref{interpretatieFH} we obtain
\begin{eqnarray*}
\int_{SS,\bold{x}}\exp(iv\langle \bold{x},\bold{y}\rangle)H_k(\bold{x})&=&\alpha_{M,k}\left[\exp(ivt)\right](R_y^2)\frac{H_k(\bold{y})}{R_y^k}.
\end{eqnarray*}
From the bosonic Bochner's relations we find that for $vu\in\mR^+$, 
\begin{eqnarray*}
\alpha_{M,k}[\exp(ivt)](u^2)=i^k(2\pi)^{\frac{M}{2}}(vu)^{1-\frac{M}{2}}J_{\frac{M}{2}+k-1}(vu)
\end{eqnarray*}
for $M>1$. 
\end{proof}

This leads to the Bochner's periodicity relations in superspace.
\begin{theorem}[Bochner's relations]
\label{Bochner}

For $M>1$, $\psi\in\cS(\mR_+)$, $\psi(R^2)$ as in definition \ref{defftien} and $H_k\in\cH_k$, the following relation holds
\begin{eqnarray*}
\cF^\pm_{m|2n}[H_k(\bold{x})\psi(R^2)](\bold{y})&=&(\pm i)^kH_k(\bold{y})\cF_{k+\frac{M}{2}-1}[\psi](R_y^2),
\end{eqnarray*}
with $\cF_{k+\frac{M}{2}-1}[\psi](R_y^2)$ as in definition \ref{defftien}.
\end{theorem}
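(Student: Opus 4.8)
The plan is to collapse the super Fourier transform into a one–dimensional radial integral by means of Theorem \ref{connectieBer}, and then to recognize that radial integral as the Hankel transform using Lemma \ref{SFH}. I would first insert the definition \eqref{Four}, giving
\[
\cF^{\pm}_{m|2n}[H_k(\bold{x})\psi(R^2)](\bold{y})=(2\pi)^{-M/2}\int_{\mR^{m|2n},x}\exp(\pm i\langle \bold{x},\bold{y}\rangle)H_k(\bold{x})\psi(R^2).
\]
Since $\psi\in\cS(\mR_+)$ and $H_k$ is polynomial, the integrand lies in $L_1(\mR^m)_{m|2n}\cap C^n(\mR^m)_{m|2n}$, so Theorem \ref{connectieBer} applies. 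Replacing $\bold{x}$ by $v\bold{x}$, using that $H_k$ is homogeneous of degree $k$ and that $\langle v\bold{x},v\bold{x}\rangle=v^2R^2$, the integrand turns into $\exp(\pm iv\langle \bold{x},\bold{y}\rangle)\,v^kH_k(\bold{x})\,\psi(v^2R^2)$. The factor $\psi(v^2R^2)$ is spherically symmetric of the form of definition \ref{defftien} (use \eqref{circ} with the inner function $g(R^2)=v^2R^2$), so Lemma \ref{xkwad} lets me pull it out of the supersphere integral as the scalar value $\psi(v^2)$.

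After this reduction the expression reads
\[
(2\pi)^{-M/2}\int_0^\infty dv\,v^{M-1+k}\psi(v^2)\int_{SS,\bold{x}}\exp(\pm iv\langle \bold{x},\bold{y}\rangle)H_k(\bold{x}).
\]
Next I would apply Lemma \ref{SFH} to the inner integral. The lemma is stated for $\exp(iv\langle\bold{x},\bold{y}\rangle)$, but the same Funk-Hecke computation with $\exp(\pm ivt)$ and the bosonic identity $\alpha_{M,k}[\exp(\pm ivt)](u^2)=(\pm i)^k(2\pi)^{M/2}(vu)^{1-M/2}J_{M/2+k-1}(vu)$ gives the $(\pm i)^k$ variant directly. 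Substituting it, the factors $(2\pi)^{\mp M/2}$ cancel, and collecting exponents leaves
\[
(\pm i)^k\,H_k(\bold{y})\,R_y^{1-\frac{M}{2}-k}\int_0^\infty \psi(v^2)\,v^{k+\frac{M}{2}}\,J_{k+\frac{M}{2}-1}(vR_y)\,dv.
\]

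The final step is to read this radial integral as a Hankel transform. Writing $\nu=k+\tfrac{M}{2}-1$, the definitions of $\cH_\nu$ and $\cF_\nu$ yield the scalar identity $\int_0^\infty \psi(v^2)v^{\nu+1}J_\nu(vu)\,dv=u^{\nu}\,\cF_\nu[\psi](u^2)$; since $\nu+1=k+\tfrac{M}{2}$ this is exactly the integral above with the scalar $u$ replaced by the superfunction $R_y$ (all of $R_y$, its powers, and $J_\nu(vR_y)$ being understood through definitions \ref{defftien}--\ref{defftien2}, which is precisely the sense in which Lemma \ref{SFH} already delivers its output). Multiplying by the prefactor $R_y^{1-M/2-k}=R_y^{-\nu}$ cancels every remaining power of $R_y$, and I obtain $(\pm i)^kH_k(\bold{y})\,\cF_{k+\frac{M}{2}-1}[\psi](R_y^2)$, as asserted.

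The main obstacle I anticipate is the bookkeeping at the interface between the scalar and the Grassmann-valued settings: one must justify carrying the $v$-integration inside the finite Taylor expansion of definition \ref{defftien} (a Fubini-type argument, harmless because that expansion terminates at order $n$ and $\psi$ is Schwartz), and one must verify that the resulting $\cF_{k+\frac{M}{2}-1}[\psi](R_y^2)$ really is the spherically symmetric superfunction of definition \ref{defftien} and not merely a scalar. The explicit power counting in $v$ and $R_y$ is the other place where sign or exponent slips can occur, so I would track those exponents step by step.
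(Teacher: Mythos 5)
Your proposal is correct and follows essentially the same route as the paper: reduce via Theorem \ref{connectieBer}, extract $\psi(v^2)$ with Lemma \ref{xkwad}, evaluate the supersphere integral with Lemma \ref{SFH}, and identify the remaining radial integral as the Hankel transform, with the Grassmann interpretation handled exactly as the paper does (by expanding $(vR_y)^{-\nu}J_\nu(vR_y)$ through definition \ref{defftien}, justified by the analyticity of $t^{-\nu}J_\nu(t)$ on $[0,\infty)$). Your exponent bookkeeping matches the paper's intermediate expression \eqref{BochnerMehler}, and the Fubini-type concern you flag is precisely the point the paper resolves in its final two displayed lines.
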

\begin{proof}
We calculate the left-hand side using theorem \ref{connectieBer}, lemma \ref{xkwad} and lemma \ref{SFH}, yielding
\begin{eqnarray}
\nonumber
\cF^\pm_{m|2n}[H_k(\bold{x})\psi(R^2)](\bold{y})&=&(2\pi)^{-\frac{M}{2}}\int_{\mR^{m|2n}}\exp(\pm i\langle \bold{x},\bold{y}\rangle)H_k(\bold{x})\psi(R^2)\\
\nonumber
&=&(2\pi)^{-\frac{M}{2}}\int_0^\infty dv\,v^{M-1}\,\int_{SS,\bold{x}}\exp(\pm iv\langle \bold{x},\bold{y}\rangle)v^kH_k(\bold{x})\psi(v^2R^2)\\
\nonumber
&=&(2\pi)^{-\frac{M}{2}}\int_0^\infty dv\,v^{M+k-1}\psi(v^2)(\pm i)^k(2\pi)^{\frac{M}{2}}(vR_y)^{1-\frac{M}{2}}J_{\frac{M}{2}+k-1}(vR_y)\frac{H_k(\bold{y})}{R_y^k}\\
\label{BochnerMehler}
&=&(\pm i)^kH_k(\bold{y})\,\int_0^\infty dv\,v^{M+2k-1}\,(vR_y)^{1-\frac{M}{2}-k}\psi(v^2)\,J_{\frac{M}{2}+k-1}(vR_y).
\end{eqnarray}
Because $t^{-\nu}J_{\nu}(t)$ is analytical in $[0,+\infty[$, we can apply definition \ref{defftien} to obtain
\begin{eqnarray*}
\cF^\pm_{m|2n}[H_k(\bold{x})\psi(R^2)](\bold{y})&=&(\pm i)^kH_k(\bold{y})\,\int_0^\infty dv\,v^{M+2k-1}\,\psi(v^2)\,\sum_{j=0}^n(-1)^j\frac{\uyb^{2j}}{j!}\left(\frac{\partial}{\partial r_y^2}\right)^j\frac{J_{\frac{M}{2}+k-1}(vr_y)}{(vr_y)^{k+\frac{M}{2}-1}}\\
&=&(\pm i)^kH_k(\bold{y})\,\sum_{j=0}^n(-1)^j\frac{\uyb^{2j}}{j!}\left(\frac{\partial}{\partial r_y^2}\right)^j\cF_{\frac{M}{2}+k-1}[\psi](r_y^2).
\end{eqnarray*}
This concludes the proof of the theorem.
\end{proof}

\subsection{Mehler formula}
To obtain a Mehler formula for the super Fourier kernel we start from equation \eqref{BochnerMehler},
\begin{eqnarray*}
& &(2\pi)^{-M/2}\int_{\mR^{m|2n}}\exp(\pm i\langle \bold{x},\bold{y}\rangle)H_k(\bold{x})\psi(R^2)\\
&=&(\pm i)^kH_k(\bold{y})\,\int_0^\infty dv\,v^{M+2k-1}\,(vR_y)^{1-\frac{M}{2}-k}\psi(v^2)\,J_{\frac{M}{2}+k-1}(vR_y).
\end{eqnarray*}
Using equation \eqref{reprkern} and theorem \ref{connectieBer} yields formally
\begin{eqnarray*}
& &(\pm i)^kH_k(\bold{y})\,\int_0^\infty dv\,v^{M+2k-1}\,(vR_y)^{1-\frac{M}{2}-k}\psi(v^2)\,J_{\frac{M}{2}+k-1}(vR_y)\\
&=&\sum_{l=0}^\infty(\pm i)^l\int_{SS,\bold{x}}H_k(\bold{x})F_l(\bold{x},\bold{y})\,\int_0^\infty dv\,v^{M+l+k-1}\,(vR_y)^{1-\frac{M}{2}-l}\psi(v^2)\,J_{\frac{M}{2}+l-1}(vR_y)\\
&=&\sum_{l=0}^\infty(\pm i)^l\,\int_0^\infty dv\,v^{M-1}\,(vR_y)^{1-\frac{M}{2}-l}\psi(v^2)\,J_{\frac{M}{2}+l-1}(vR_y)\int_{SS,\bold{x}}H_k(v\bold{x})F_l(v\bold{x},\bold{y})\\
&=&\sum_{l=0}^\infty(\pm i)^l\,\int_{\mR^{m|2n}} \,(R_xR_y)^{1-\frac{M}{2}-l}\,J_{\frac{M}{2}+l-1}(R_xR_y)F_l(\bold{x},\bold{y})\psi(R_x^2)H_k(\bold{x}).
\end{eqnarray*}
So we obtain an alternative expression for the Fourier kernel, which leads to the Mehler formula
\begin{eqnarray}
\label{MehlerBessel}
(2\pi)^{-M/2}\exp(\pm i\langle \bold{x},\bold{y}\rangle)&=&\sum_{k=0}^\infty (\pm i)^k F_k(\bold{x},\bold{y})\frac{J_{\frac{M}{2}+k-1}(R_xR_y)}{(R_xR_y)^{\frac{M}{2}+k-1}}.
\end{eqnarray}

The convergence of this series can be proven using the exact same technique as in theorem 14 in \cite{CDBS2}. By comparing formula \eqref{MehlerBessel} to the Mehler formula \eqref{superMehler2} we obtain
\begin{eqnarray*}
\frac{J_{\frac{M}{2}+k-1}(R_xR_y)}{(R_xR_y)^{\frac{M}{2}+k-1}}&=&\sum_{j=0}^\infty\frac{2j!(-1)^j}{\Gamma(j+\frac{M}{2}+k)}    L_j^{\frac{M}{2}+k-1}(R^2_x)L_j^{\frac{M}{2}+k-1}(R_y^2) \exp(-\frac{R^2_x + R^2_y}{2}),
\end{eqnarray*}
which is the orthosymplectic analog of the classical Hille-Hardy formula (see e.g. \cite{HH}, p. 189).

\end{document}